\documentclass{LMCS}

\def\dOi{11(2:16)2015}
\lmcsheading%
{\dOi}
{1--34}
{}
{}
{Nov.~\phantom08, 2014}
{Jun.~30, 2015}
{}

\ACMCCS{[{\bf Theory of computation}]: Logic---Logic and verification}

\keywords{probabilistic verification, controller synthesis, stochastic games}

\usepackage{tikz}
\usepackage{hyperref}
\usepackage{amsmath}
\usepackage[subnum]{cases}
\usepackage{amssymb,stmaryrd}
\usepackage{listings}
\usepackage{verbatim}
\usepackage{color}
\usepackage{graphicx}
\usepackage{algorithm}
\usepackage{algpseudocode}
\usepackage{multirow}
\usepackage{pdfpages}
\usepackage{multibib}
\usepackage{wrapfig}
\usetikzlibrary{shapes}
\usetikzlibrary{positioning}
\usetikzlibrary{arrows}
\usetikzlibrary{calc}

\newcites{add}{Additional References}


\long\def\symbolfootnote[#1]#2{\begingroup%
\def\thefootnote{\fnsymbol{footnote}}\footnote[#1]{#2}\endgroup}

\newcommand{\property}[2]{\calR^{#1}_{\,\geq{#2}}[\,\mathtt{C}\,]} 
\newcommand{\propbt}[2]{\calR^{#1}_{\,\bowtie{#2}}[\,\mathtt{C}\,]} 
\newcommand{\nprop}[2]{\calR^{#1}_{\,\leq{#2}}[\,\mathtt{C}\,]} 
\newcommand{\npropl}[2]{\calR^{#1}_{\,\geq{#2}}[\,\mathtt{C}\,]} 

\newcommand{\strat}{\sigma}
\newcommand{\strats}{\Sigma}
\newcommand{\mstrat}{\theta}
\newcommand{\mstrats}{\Theta}
\newcommand{\Compl}{\triangleleft}
\newcommand{\dete}{\mathit{det}}
\newcommand{\rand}{\mathit{rand}}
\newcommand{\ctrlstrat}{\sigma}
\newcommand{\envstrat}{\pi}
\newcommand{\penrew}[1]{\psi^{#1}_{\mathit{rew}}}

\newcommand{\penbase}{\psi}
\newcommand{\pentype}{\tau}
\newcommand{\pendef}[3]{\mathit{pen}_{#2}(#1,#3)}
\newcommand{\sta}{\mathit{sta}}
\newcommand{\dyn}{\mathit{dyn}}


\newcommand{\eitherplayer}{\circ}
\newcommand{\gran}{M}
\newcommand{\penmax}{\mathit{pen}_{\max}}

\newcommand{\support}[1]{\mathit{supp}(#1)}


\newcommand{\enab}[1]{A(#1)}

\newcommand{\prmdp}[2]{\M{\otimes}\A}



\newcommand{\appref}[1]{Appx.~\ref{#1}}
\newcommand{\sectref}[1]{Section~\ref{#1}}

\newcommand{\figref}[1]{Fig.~\ref{#1}}
\newcommand{\tabref}[1]{Tab.~\ref{#1}}
\newcommand{\egref}[1]{Ex.~\ref{#1}}
\newcommand{\eqnref}[1]{(\ref{#1})}
\newcommand{\thmref}[1]{Theorem~\ref{#1}}

\newcommand{\lemref}[1]{Lemma~\ref{#1}} 
\newcommand{\defref}[1]{Defn.~\ref{#1}}

\newcommand{\figfigref}[2]{Fig.s~\ref{#1} and \ref{#2}}
\newcommand{\tabtabref}[2]{Tab.s~\ref{#1} and \ref{#2}}



\newcounter{exampcount}
\setcounter{exampcount}{0}
{\vskip6pt}

\newcommand{\startpara}[1]{{%
\vskip6pt\noindent
{\bf #1.}}}



\def\ra{{\rightarrow}}



\def\Nset{\mathbb{N}}

\def\Qset{\mathbb{Q}}
\def\Rset{\mathbb{R}}


\def\ra{\rightarrow} 
\def\rmdef{=}


\def\squareforqed{\hbox{\rlap{$\sqcap$}$\sqcup$}}
\def\qed{\ifmmode\squareforqed\else{\unskip\nobreak\hfil
\penalty50\hskip1em\null\nobreak\hfil\squareforqed
\parfillskip=0pt\finalhyphendemerits=0\endgraf}\fi}


\newcommand\sgame{{\sf G}}

\newcommand\sinit{{\bar{s}}}

\newcommand\act{A}
\newcommand\dist{{\mathit{Dist}}}
\newcommand\Dist{{\dist}}

\def\Prb{{\mathit{Pr}}}




\def\future{{{\mathtt F}\ }}

\def\calP{{\mathtt P}}

\def\calR{{\mathtt R}}




\newcommand{\gametuple}{\langle S_\Diamond, S_\Box, \sinit, \act, \delta\rangle}

\def\sinit{{\overline{s}}}

\newcommand{\M}{{\mathcal{M}}}

\newcommand{\ipaths}{\mathit{IPath}}
\newcommand{\fpaths}{\mathit{FPath}}






\newcommand{\A}{\mathcal{A}}

\renewcommand{\leq}{\leqslant}
\renewcommand{\geq}{\geqslant}
\renewcommand{\le}{\leqslant}
\renewcommand{\ge}{\geqslant}

\newcommand{\sgamestrat}{\sgame^\mstrat}

\title[Permissive Controller Synthesis for Probabilistic Systems]{Permissive Controller Synthesis \\ for Probabilistic Systems}

\author[K.~Dr\"ager]{Klaus~Dr\"ager\rsuper a}	
\address{{\lsuper a}EECS, Queen Mary, University of London, UK}	
\email{k.draeger@qmul.ac.uk}

\author[V.~Forejt]{Vojt\v{e}ch~Forejt\rsuper b}	
\address{{\lsuper{b,c,e}}Department of Computer Science, University of Oxford, UK}	
\email{\{vojtech.forejt,marta.kwiatkowska,mateusz.ujma\}@cs.ox.ac.uk}

\author[M.~Kwiatkowska]{Marta~Kwiatkowska\rsuper c}	
\address{\vspace{-18 pt}}

\author[D.~Parker]{David~Parker\rsuper d}	
\address{{\lsuper d}School of Computer Science, University of Birmingham, UK}	
\email{d.a.parker@cs.bham.ac.uk}

\author[M.~Ujma]{Mateusz Ujma\rsuper e}	
\address{\vspace{-18 pt}}

\thanks{The authors are in part supported by
ERC Advanced Grant VERIWARE
and EPSRC projects EP/K038575/1, 
EP/F001096/1 and 
EP/M023656/1. 
Vojt\v{e}ch Forejt is also affiliated with Faculty of Informatics, Masaryk University, Czech Republic.
}
\begin{document}

\begin{abstract}
We propose novel controller synthesis techniques for probabilistic systems
modelled using stochastic two-player games: one player acts as a controller,
the second represents its environment, and probability is used to
capture uncertainty arising due to, for example, unreliable sensors or faulty system components.
Our aim is to generate robust controllers that are
resilient to unexpected system changes at runtime,
and flexible enough to be adapted if additional constraints need to be imposed.
We develop a \emph{permissive} controller synthesis framework,
which generates \emph{multi-strategies} for the controller,
offering a choice of control actions to take at each time step.
We formalise the notion of permissivity using penalties,
which are incurred each time a possible control action is disallowed by a multi-strategy.
Permissive controller synthesis aims to generate a multi-strategy that minimises these penalties,
whilst guaranteeing the satisfaction of a specified system property.
We establish several key results about the optimality of multi-strategies
and the complexity of synthesising them.
Then, we develop methods to perform permissive controller synthesis using mixed integer linear programming
and illustrate their effectiveness on a selection of case studies.
\end{abstract}

\maketitle

\section{Introduction}

Probabilistic model checking is used to automatically verify systems with stochastic behaviour.
Systems are modelled as, for example, Markov chains, Markov decision processes,
or stochastic games, and analysed algorithmically to 
verify quantitative properties specified in temporal logic.
Applications include checking the safe operation of fault-prone systems
(``the brakes fail to deploy with probability at most $10^{-6}$'')
and establishing guarantees on the performance of,
for example, randomised communication protocols
(``the expected time to establish connectivity between two devices never exceeds 1.5 seconds'').

A closely related problem is that of \emph{controller synthesis}.
This entails constructing a model of some entity that can be controlled
(e.g., a robot, a vehicle or a machine) and its environment,
formally specifying the desired behaviour of the system,
and then generating, through an analysis of the model,
a controller that will guarantee the required behaviour.
In many applications of controller synthesis, a model of the system is inherently probabilistic.
For example, a robot's sensors and actuators may be unreliable,
resulting in uncertainty when detecting and responding to its current state;
or messages sent wirelessly to a vehicle may fail to be delivered with some probability.

In such cases, the same techniques that underly probabilistic model checking
can be used for controller synthesis.
For, example, we can model the system as a Markov decision process (MDP),
specify a property $\phi$ in a probabilistic temporal logic such as PCTL or LTL,
and then apply probabilistic model checking.
This yields an optimal \emph{strategy} (policy) for the MDP,
which instructs the controller as to which action should be taken
in each state of the model in order to
guarantee that $\phi$ will be satisfied.
This approach has been successfully applied in a variety of application domains,
to synthesise, for example:
control strategies for robots~\cite{LWAB10},
power management strategies for hardware~\cite{FKN+11},
and efficient PIN guessing attacks against hardware security modules~\cite{Ste06}.

Another important dimension of the controller synthesis problem
is the presence of uncontrollable or adversarial 
aspects of the environment.
We can take account of this by phrasing the system model as a \emph{game}
between two players, one representing the controller and the other the environment.
Examples of this approach include controller synthesis
for surveillance cameras~\cite{OTMW11}, autonomous vehicles~\cite{CKSW13}
or real-time systems~\cite{BCD+07}.
In our setting, we use (turn-based) stochastic two-player games,
which can be seen as a generalisation of MDPs where decisions are made by two distinct players.
Probabilistic model checking of such a game
yields a strategy for the controller player which guarantees
satisfaction of a property $\phi$, regardless of the actions of the environment player.

In this paper, we tackle the problem of synthesising \emph{robust} and \emph{flexible} controllers,
which are resilient to unexpected changes in the system at runtime.
For example, one or more of the actions that the controller can choose at runtime
might unexpectedly become unavailable, or additional constraints may be imposed
on the system that make some actions preferable to others.
One motivation for our work is its applicability to
model-driven runtime control of adaptive systems~\cite{CGKM12},
which uses probabilistic model checking in an online fashion to adapt or reconfigure
a system at runtime in order to guarantee the satisfaction of
certain formally specified performance or reliability requirements.


We develop novel, \emph{permissive} controller synthesis techniques
for systems modelled as stochastic two-player games.
Rather than generating \emph{strategies}, which specify 
a single action to take at each time-step,
we synthesise \emph{multi-strategies}, which specify 
multiple possible actions. As in classical controller synthesis,
generation of a multi-strategy is driven by a formally specified
quantitative property: we focus on probabilistic reachability
and expected total reward properties. 
The property must be guaranteed to hold, whichever of the
specified actions are taken and regardless of the behaviour of the environment.
Simultaneously, we 
aim to synthesise multi-strategies that are as \emph{permissive} as possible,
which we quantify by assigning \emph{penalties} to actions.
These are incurred when a multi-strategy disallows (does not make available) a given action.
Actions can be assigned different penalty values
to indicate the relative importance of allowing them.
Permissive controller synthesis amounts to finding a multi-strategy
whose total incurred penalty is minimal, or below some given threshold.


We formalise the permissive controller synthesis problem
and then establish several key theoretical results.
In particular, we show that
randomised multi-strategies are strictly more powerful than deterministic ones,
and we prove that the permissive controller synthesis problem is NP-hard for either class.
We also establish upper bounds, showing that the problem is in NP and PSPACE
for the deterministic and randomised cases, respectively.
%

Next, we propose practical methods for synthesising multi-strategies
using mixed integer linear programming (MILP) \cite{Schrijver}.
We give an exact encoding for deterministic multi-strategies
and an approximation scheme (with adaptable precision) for the randomised case.
For the latter, we prove several additional results
that allow us to reduce the search space of multi-strategies.
The MILP solution process works incrementally,
yielding increasingly permissive multi-strategies,
and can thus be terminated early if required.
This is well suited to scenarios where time is limited,
such as online analysis for runtime control, as discussed above,
or ``anytime verification''~\cite{Sha10}.
Finally, we implement our techniques and
evaluate their effectiveness on a range of case studies.

\vspace*{0.5em}
This paper is an extended version of~\cite{DFK+14},
containing complete proofs, optimisations for MILP encodings
and experiments comparing performance under two different MILP solvers.

\subsection{Related Work}

Permissive strategies in \emph{non-}stochastic games
were first studied in~\cite{BJW02} for parity objectives,
but permissivity was defined solely by comparing enabled actions.
Bouyer et al.~\cite{BDMR09} showed that optimally permissive memoryless strategies
exist for reachability objectives and expected penalties,
contrasting with our (stochastic) setting, where they may not.
The work in~\cite{BDMR09} also studies penalties given as mean-payoff and discounted reward functions,
and~\cite{BMOU11} extends the results to the setting of parity games.
None of~\cite{BJW02,BDMR09,BMOU11} consider stochastic games or even randomised strategies, 
and they provide purely theoretical results.
As in our work, Kumar and Garg~\cite{KG01} consider control
of stochastic systems by dynamically disabling events; however, rather
than stochastic games, their models are essentially Markov chains, which the
possibility of selectively disabling branches turns into MDPs.
\cite{Rad12} studies games where the aim of one opponent is to ensure properties of systems 
against an opponent who can modify the system on-the-fly by removing some transitions.
%

Finally, although tackling a rather different problem (counterexample generation),
\cite{WJABK12} is related in that it also uses MILP to solve probabilistic verification problems.


\section{Preliminaries}\label{sec:prelims}

We denote by $\Dist(X)$ the set of discrete probability distributions over a set $X$.
A \emph{Dirac} distribution is one that assigns probability 1 to some $s\in X$.
The \emph{support} of a distribution $d\in\Dist(X)$ 
is defined as $\support{d} \rmdef \{x\in X \mid d(x) > 0\}$.



\subsection{Stochastic Games}
In this paper, we use \emph{turn-based stochastic two-player games},
which we often refer to simply as \emph{stochastic games}. 
%
A stochastic game takes the form $\sgame=\gametuple$,
where $S\rmdef S_\Diamond\cup S_\Box$ is a finite set of states, each associated with
player $\Diamond$ or $\Box$, $\sinit\in S$ is an initial state, $\act$ is a finite set of actions and
$\delta : S {\times} \act \rightarrow \Dist(S)$ is a (partial) probabilistic transition function such that
the distributions assigned by $\delta$ only select elements of $S$ with rational probabilities.

An MDP is a stochastic game 
in which either $S_\Diamond$ or $S_\Box$ is empty.
Each state $s$ of a stochastic game $\sgame$ has a set 
of \emph{enabled} actions,
given by $\enab{s}\rmdef\{a\in\act \,|\, \delta(s,a)\mbox{ is defined}\}$.
The 
unique player $\eitherplayer\in\{\Diamond,\Box\}$ such that $s\in S_{\eitherplayer}$
picks an action $a\in\enab{s}$ to be taken in state $s$.
Then, the next state is determined randomly according
to the distribution $\delta(s,a)$, i.e., a transition to state $s'$ occurs with probability $\delta(s,a)(s')$.

A \emph{path} through $\sgame$ is a (finite or infinite) sequence $\omega=s_0 a_0 s_1 a_1 \dots$,
where $s_i\in S$, $a_i\in\enab{s_i}$ and $\delta(s_i,a_i)(s_{i+1})>0$ for all $i$.
We denote by $\ipaths_s$ the set of all infinite paths starting in state $s$.
For a player $\eitherplayer\in\{\Diamond,\Box\}$, we denote by
$\fpaths^\eitherplayer$ the set of all finite paths starting in any state and ending in a state from $S_\eitherplayer$.

A \emph{strategy}
$\sigma:\fpaths^\eitherplayer \rightarrow \dist(\act)$
for player $\eitherplayer$
of $\sgame$ is a resolution of the choices of actions in each state
from $S_\eitherplayer$ based on the execution so far,
such that only enabled actions in a state are chosen with non-zero probability.
In standard fashion \cite{KSK76}, a pair of strategies $\sigma$ and
$\pi$ for $\Diamond$ and $\Box$ induces, for any state $s$, a probability measure
$\Prb_{\sgame,s}^{\sigma,\pi}$ over $\ipaths_s$.
%
A strategy $\sigma$ is \emph{deterministic} if $\sigma(\omega)$ is a Dirac distribution
for all $\omega$, and \emph{randomised} if not.
In this work, we focus purely on \emph{memoryless} strategies,
where $\sigma(\omega)$ depends only on the last state of $\omega$,
in which case we define the strategy as a function 
$\sigma:S_\eitherplayer \rightarrow \dist(\act)$.
We write $\strats_\sgame^\eitherplayer$ for the set of all (memoryless)
player $\eitherplayer$ strategies in $\sgame$.



\subsection{Properties and Rewards}

In order to synthesise controllers, we need a formal description of their required properties.
In this paper, we use two common classes of properties:
\emph{probabilistic reachability} and \emph{expected total reward},
which we will express in an extended version of the temporal logic PCTL~\cite{HJ94}.

For probabilistic reachability, we write properties of the form $\phi = \calP_{\bowtie p}[\,\future g\,]$,
where $\bowtie\,\,\in\{\leq,\geq\}$, $p\in[0,1]$ and $g\subseteq S$ is a set of target states,
meaning that the probability of reaching a state in $g$ satisfies the bound $\bowtie p$.
Formally, for a specific pair of strategies
$\ctrlstrat\in\strats_\sgame^\Diamond, \envstrat\in\strats_\sgame^\Box$ for $\sgame$,
the probability of reaching $g$ under $\ctrlstrat$ and $\envstrat$ is
\[
\Prb_{\sgame,\sinit}^{\ctrlstrat,\envstrat}(\future g) \rmdef
\Prb_{\sgame,\sinit}^{\ctrlstrat,\envstrat}
(\{s_0 a_0 s_1 a_1 \dots\in\ipaths_\sinit\,|\, s_i\in g \mbox{ for some } i\}).
\]
We say that $\phi$ is satisfied under $\ctrlstrat$ and $\envstrat$,
denoted $\sgame,\ctrlstrat,\envstrat \models \phi$, if
$\Prb_{\sgame,\sinit}^{\ctrlstrat,\envstrat}(\future g) \bowtie p$.


We also augment stochastic games with 
\emph{reward structures}, which are functions of the form
$r:S\times\act\to\Qset_{\ge 0}$ mapping state-action pairs to non-negative rationals.
In practice, we often use these to represent ``costs'' (e.g. elapsed time or energy consumption),
despite the terminology ``rewards''.
The restriction to non-negative rewards allows us to avoid problems with non-uniqueness
of total rewards, which would require special treatment \cite{TV87}.

Rewards are accumulated along a path and,
for strategies
$\ctrlstrat\in\strats_\sgame^\Diamond$ and $\envstrat\in\strats_\sgame^\Box$,
the \emph{expected total reward} is defined as:
\[
E_{\sgame,\sinit}^{\ctrlstrat,\envstrat}(r) \rmdef\int_{\omega=s_0 a_0 s_1 a_1 \dots\in\ipaths_\sinit} \ \sum_{j=0}^{\infty} r(s_j,a_j) \,d\Prb_{\sgame,\sinit}^{\ctrlstrat,\envstrat}.
\]
For technical reasons, we will always assume the maximum possible reward
$\sup_{\sigma,\pi}E_{\sgame,s}^{\ctrlstrat,\envstrat}(r)$ is finite
(which can be checked with an analysis of the game's underlying graph); similar assumptions are commonly introduced~\cite[Section 7]{Put94}.
In our proofs, we will also use $E_{\sgame,\sinit}^{\ctrlstrat,\envstrat}(r{\downarrow}s)$ for the expected total reward accumulated before the first visit to $s$, defined by:
\[
 E_{\sgame,\sinit}^{\ctrlstrat,\envstrat}(r{\downarrow}s) \rmdef\int_{\omega=s_0 a_0 s_1 a_1\ldots\in\ipaths_\sinit}\sum_{i=0}^{\mathit{fst}(s,\omega)-1} r(s_i,a_i)\,d\Prb_{\sgame,\sinit}^{\ctrlstrat,\envstrat}
\]
where $\mathit{fst}(s,\omega)$ is $\min\{i \mid s_i = s\}$ if $\omega=s_0 a_0 s_1 a_1\ldots$ contains $s_i$, and $\infty$ otherwise.

An expected reward property is written $\phi = \propbt{r}{b}$ (where $\mathtt{C}$ stands for \emph{cumulative}),
meaning that the expected total reward for $r$ satisfies $\bowtie b$.
We say that $\phi$ is satisfied under strategies $\ctrlstrat$ and $\envstrat$,
denoted $\sgame,\ctrlstrat,\envstrat \models \phi$, if
$E_{\sgame,\sinit}^{\ctrlstrat,\envstrat}(r) \bowtie b$.

In fact, probabilistic reachability can easily be reduced to expected total reward
(by replacing any outgoing transitions from states in the target set
with a single transition to a sink state labelled with a reward of 1).
Thus, in the techniques presented in this paper,
we focus purely on expected total reward.


\subsection{Controller Synthesis}

To perform controller synthesis, we model the system as a stochastic game $\sgame=\gametuple$,
where player $\Diamond$ represents the controller and player $\Box$ represents the environment.
A specification of the required behaviour of the system is a property $\phi$,
either a probabilistic reachability property $\calP_{\bowtie p}[\,\future g\,]$
or an expected total reward property $\propbt{r}{b}$.

%
\begin{defi}[Sound strategy]\label{defn:soundstrat}
A strategy $\ctrlstrat\in\strats_\sgame^\Diamond$
for player $\Diamond$ in stochastic game $\sgame$
is \emph{sound} for a property $\phi$ if $\sgame,\ctrlstrat,\envstrat \models \phi$
for any strategy $\envstrat\in\strats_\sgame^\Box$.
\end{defi}
To simplify notation, we will consistently use $\ctrlstrat$ and $\envstrat$ to refer to strategies
of player $\Diamond$ and $\Box$, respectively,
and will not always state explicitly that
$\ctrlstrat\in\strats_\sgame^\Diamond$ and $\envstrat\in\strats_\sgame^\Box$.
Notice that, in \defref{defn:soundstrat}, strategies $\ctrlstrat$ and $\envstrat$ are both memoryless.
We could equivalently allow $\envstrat$ to range over history-dependent strategies
since, for the properties $\phi$ considered in this paper
(probabilistic reachability and expected total reward),
the existence of a history-dependent counter-strategy $\envstrat$ for which $\sgame,\ctrlstrat,\envstrat \not\models \phi$
implies the existence of a memoryless one.

The classical \emph{controller synthesis} problem asks whether there exists a sound strategy
for game $\sgame$ and property $\phi$. We can determine whether this is the case
by computing the optimal strategy for player $\Diamond$ in $\sgame$~\cite{Con93,FV97}.
This problem is known to be in NP $\cap$ co-NP, but, in practice,
methods such as value or policy iteration can be used efficiently.


\begin{figure}
  \centering
  \begin{tikzpicture}[scale=2]
  \setlength\tabcolsep{3pt}

\tikzstyle{every node}=[font=\small]
\tikzstyle{distr}=[inner sep=0mm, minimum size=1mm, draw, circle, fill]
\tikzstyle{statebox}=[inner sep=1mm, minimum size=5mm, draw]
\tikzstyle{statecircle}=[inner sep=0.3mm, minimum size=3mm, draw, circle]
\tikzstyle{statediamond}=[draw, diamond, inner sep=0.01mm, minimum size=6mm]
\tikzstyle{statediamondgoal}=[draw, diamond, inner sep=0.01mm, minimum size=6mm, fill=gray!25]
\tikzstyle{stategoal}=[inner sep=0.3mm, minimum size=3mm, draw, double, circle]
\tikzstyle{trarr}=[-latex, rounded corners]
\tikzstyle{lab}=[inner sep=0.7mm]


\node[lab] at (0.15,2.3) (init) {};

\node[statediamond] at (0.5,2.3) (s0) {$s_0$};
\node[statebox] at (2,2.3) (s1) {$s_1$};
\node[statediamond] at (3.5,2.3) (s2) {$s_2$};

\node[statediamond] at (0.5,1.4) (s3) {$s_3$};
\node[statebox] at (2,1.4) (s4) {$s_4$};
\node[statediamondgoal] at (3.5,1.4) (s5) {$s_5$};

\node[distr] at (2.1,1.9) (s1b) {};
\node[distr] at (2.1,1) (s4b) {};
\node[distr] at (3.5,1.1) (s5b) {};
\node[distr] at (0.8,1.7) (s3b) {};

\draw[trarr] (init) -- (s0);

\draw[trarr] (s0) -- (s1)
 node[above,pos=0.5]{$\mathit{east}$}
 node[distr,pos=0.5]{};

\draw[trarr] (s0) .. controls +(-0.4,-0.45) .. (s3)
 node[left,pos=0.5]{$\mathit{south}$}
 node[distr,pos=0.5]{};
\draw[-] (s3) -- (s3b)
 node[left,pos=1.1,xshift=-1pt,yshift=2pt]{$\mathit{north}$};
\draw[trarr] (s3b) -- (s0) 
 node[right,pos=0.5]{$0.7$}
 node[coordinate,pos=0.2](s3b1){};
\draw[trarr] (s3b) -- (s4) 
 node[above,pos=0.5]{$0.3$}
 node[coordinate,pos=0.125](s3b2){};
\path[-] (s3b1) edge [bend left] (s3b2);

\draw[-] (s1) -- (s1b)
 node[right,pos=0.4]{$\mathit{impede}$};
\draw[trarr] (s1b) .. controls +(-0.2,-0.1) .. (s0) 
 node[above,pos=0.5, yshift=2pt]{$0.75$}
 node[coordinate,pos=0.125](s1b1){};
\draw[trarr] (s1b) .. controls +(0.25,-0.05) .. (s2) 
 node[below,pos=0.80, xshift=5pt]{$0.25$}
 node[coordinate,pos=0.125](s1b2){};
\path[-] (s1b1) edge [bend right] (s1b2);

\draw[trarr] (s1) -- (s2)
 node[above,pos=0.5]{$\mathit{pass}$}
 node[distr,pos=0.5]{};

\draw[trarr] (s2) -- (s5)
 node[right,pos=0.5]{$\mathit{south}$}
 node[distr,pos=0.5]{};

\draw[trarr] (s3) -- +(0.8,0) -- (s4)
 node[below,pos=0]{$\mathit{east}$}
 node[distr,pos=0]{};

\draw[-] (s4) -- (s4b)
 node[right,pos=0.4]{$\mathit{impede}$};
\draw[trarr] (s4b) .. controls +(-0.2,-0.1) .. (s3) 
 node[above,pos=0.5]{$0.6$}
 node[coordinate,pos=0.125](s4b1){};
\draw[trarr] (s4b) .. controls +(0.25,-0.05) .. (s5)
 node[below,pos=0.6]{$0.4$}
 node[coordinate,pos=0.125](s4b2){};
\path[-] (s4b1) edge [bend right] (s4b2);

\draw[trarr] (s4) -- (s5)
 node[above,pos=0.5]{$\mathit{pass}$}
 node[distr,pos=0.5]{};

\draw[-] (s5) .. controls +(-0.3,-0.3) .. (s5b)
 node[below,pos=1]{$\mathit{done}$};
\draw[trarr] (s5b) .. controls +(0.3,0.0) .. (s5) ;

\end{tikzpicture}
  \caption{A stochastic game $\sgame$ used as a running example (see \egref{eg:game}).}
  \label{fig:running_game}
\end{figure}


\begin{exa}\label{eg:game}
\figref{fig:running_game} shows a stochastic game $\sgame$, 
with controller and environment player states drawn as diamonds and squares, respectively.
It models the control of a robot moving between 4 locations ($s_0,s_2,s_3,s_5$).
When moving east ($s_0{\ra}s_2$ or $s_3{\ra}s_5$), it may be impeded by a second robot,
depending on the position of the latter.
If it is impeded, there is a chance that it does not successfully move to the next location.

We use a reward structure $\mathit{moves}$, which assigns 1 to the controller actions
$\mathit{north}$, $\mathit{east}$, $\mathit{south}$,
and define property $\phi=\nprop{\mathit{moves}}{5}$,
meaning that the expected number of moves to reach $s_5$ is at most 5
(notice that $s_5$ is the only state from which all subsequent transitions have reward zero).
A sound strategy for $\phi$ in $\sgame$ (found by minimising $\mathit{moves}$)
chooses $\mathit{south}$ in $s_0$ and $\mathit{east}$ in $s_3$,
yielding an expected number of moves of $3.5$.
\end{exa}

\section{Permissive Controller Synthesis} \label{sec:permissive_cont_synthesis}

We now define a framework for \emph{permissive controller synthesis},
which generalises classical controller synthesis by producing \emph{multi-strategies}
that offer the controller flexibility about which actions to take in each state.


\subsection{Multi-Strategies}\label{sec:multistrats}

Multi-strategies generalise the notion of strategies, as defined in \sectref{sec:prelims}.
They will always be defined for player $\Diamond$ of a game.
\begin{defi}[Multi-strategy]\label{defn:mstrat}
Let $\sgame=\gametuple$ be a stochastic game.
A (memoryless) \emph{multi-strategy} for $\sgame$
is a function $\mstrat:S_\Diamond{\rightarrow}\dist(2^{\act})$ with
$\mstrat(s)(\emptyset) = 0$ for all $s\in S_\Diamond$.
\end{defi}
As for strategies, a multi-strategy $\mstrat$ is
deterministic if $\mstrat$ always returns a Dirac distribution,
and randomised otherwise.
We write $\mstrats_\sgame^\dete$ and $\mstrats_\sgame^\rand$
for the sets of all deterministic and randomised multi-strategies in $\sgame$, respectively.

A deterministic multi-strategy $\mstrat$ chooses a set of \emph{allowed actions} in each state $s\in S_\Diamond$,
i.e., those in the unique set $B\subseteq\act$ for which $\mstrat(s)(B)=1$.
When $\mstrat$ is deterministic, we will often abuse notation
and write $a\in\mstrat(s)$ for the actions $a\in B$.
The remaining actions $\smash{\act(s)\setminus B}$ are said to be \emph{disallowed} in $s$.
In contrast to classical controller synthesis, where a strategy $\strat$
can be seen as providing instructions about precisely which action to take in each state,
in permissive controller synthesis a multi-strategy provides (allows) multiple actions, any of which can be taken.
A randomised multi-strategy generalises this by selecting a set of allowed actions in state $s$
randomly, according to distribution $\mstrat(s)$.

We say that a controller strategy $\ctrlstrat$ \emph{complies} with
multi-strategy $\mstrat$, denoted $\sigma \Compl \mstrat$, if it picks actions that are allowed by
$\mstrat$. Formally (taking into account the possibility of randomisation),
we define this as follows.
\begin{defi}[Compliant strategy]\label{def:compliant}
Let $\mstrat$ be a multi-strategy and $\ctrlstrat$ a strategy for a game $\sgame$.
We say that $\ctrlstrat$ is \emph{compliant} (or that it \emph{complies}) with $\mstrat$,
written $\sigma \Compl \mstrat$, if, for any state $s\in S_\Diamond$ and non-empty subset $B\subseteq A(s)$,
there is a distribution
$d_{s,B}\in\Dist(B)$ such that, for all $a\in \act(s)$, $\ctrlstrat(s)(a) = \sum_{B\ni a}\mstrat(s)(B)\cdot d_{s,B}(a)$.
\end{defi}


\begin{exa}
Let us explain the technical definition of a compliant strategy on the game
from \egref{eg:game} (see \figref{fig:running_game}).
Consider a randomised multi-strategy $\mstrat$ that, in $s_0$, picks $\{\mathit{east}, \mathit{south}\}$ with probability
$0.5$, $\{\mathit{south}\}$ with probability $0.3$, and $\{\mathit{east}\}$ with probability $0.2$.
A compliant strategy then needs to, for some number $0\le x\le 1$,
pick $\mathit{south}$ with probability $0.3 + 0.5\cdot x$ and $\mathit{east}$ with probability $0.2 + 0.5\cdot (1-x)$.
The number $x$ corresponds to the probability
$d_{s_0,\{\mathit{east}, \mathit{south}\}}(\mathit{south})$ in the formal definition above.

Hence, a strategy $\ctrlstrat$ that picks $\mathit{east}$ and $\mathit{south}$
with equal probability 0.5 satisfies the requirements
of compliance in state $s_0$, as witnessed by selecting $x=0.4$, or, in other words,
the distribution $d_{s_0,\{\mathit{east}, \mathit{south}\}}$
assigning $0.4$ and $0.6$ to $\mathit{south}$ and $\mathit{east}$, respectively. On the other hand,
a strategy that picks $\mathit{east}$ with probability $0.8$ cannot be compliant with $\mstrat$.
\end{exa}

Each multi-strategy determines a set of compliant strategies, and our aim is to
design multi-strategies which allow as many actions as possible, but at the same
time ensure that any compliant strategy satisfies some specified property.
We define the notion of a \emph{sound} multi-strategy,
i.e., one that is guaranteed to satisfy a property $\phi$ when complied with.

\begin{defi}[Sound multi-strategy]\label{defn:soundmstrat}
A multi-strategy $\mstrat$ for game $\sgame$ is \emph{sound} for a property $\phi$
if any strategy $\ctrlstrat$ that complies with $\mstrat$ is sound for $\phi$.
\end{defi}



\begin{exa}\label{eg:mstrat}
We return again to the stochastic game
from \egref{eg:game} (see \figref{fig:running_game}) and re-use the property
$\phi=\nprop{\mathit{moves}}{5}$.
A strategy that picks $\mathit{south}$ in $s_0$ and $\mathit{east}$ in $s_3$
results in an expected reward of 3.5 (i.e., 3.5 moves on average to reach $s_5$).
A strategy that picks $\mathit{east}$ in $s_0$ and $\mathit{south}$ in $s_2$
yields expected reward 5.
Thus, a (deterministic) \emph{multi-strategy} $\mstrat$ that picks $\{\mathit{south},\mathit{east}\}$ in $s_0$,
$\{\mathit{south}\}$ in $s_2$ and $\{\mathit{east}\}$ in $s_3$ is sound for $\phi$
since the expected reward is always at most 5.
\end{exa}


\subsection{Penalties and Permissivity}\label{sec:penalties}

The motivation behind synthesising multi-strategies is to offer flexibility
in the actions to be taken, while still satisfying a particular property $\phi$.
Generally, we want a multi-strategy $\mstrat$ to be as \emph{permissive} as possible,
i.e., to impose as few restrictions as possible on actions to be taken.
We formalise the notion of permissivity by assigning \emph{penalties} to actions in the model,
which we then use to quantify the extent to which actions are disallowed by $\mstrat$.
Penalties provide expressivity in the way that we quantify permissivity:
if it is more preferable that certain actions are allowed than others,
then these can be assigned higher penalty values.

A \emph{penalty scheme} is a pair $(\penbase,\pentype)$, comprising
a \emph{penalty function} $\penbase:S_\Diamond\times\act\to\Qset_{\ge 0}$
and a \emph{penalty type} $\tau\in\{\sta,\dyn\}$.
The function $\penbase$ represents the impact of disallowing each action
in each controller state of the game. 
The type $\pentype$ dictates how penalties for individual actions are
combined to quantify the permissivity of a specific multi-strategy.
For \emph{static penalties} ($\pentype=\sta$),
we simply sum penalties across all states of the model.
For \emph{dynamic penalties} ($\pentype=\dyn$),
we take into account the likelihood that disallowed actions would actually have been available,
by using the \emph{expected sum} of penalty values.

More precisely, for a penalty scheme $(\penbase,\pentype)$ and a multi-strategy $\mstrat$,
we define the resulting penalty for $\mstrat$, denoted $\pendef{\penbase}{\pentype}{\mstrat}$ as follows.
First, we define the \emph{local} penalty for $\mstrat$ at state $s\in S_\Diamond$ as
%
%
$\pendef{\penbase}{loc}{\mstrat,s} = \!\sum_{B\subseteq\act(s)}\!\sum_{a\notin B}\!\mstrat(s)(B)\penbase(s,a)$.
If $\mstrat$ is deterministic, $\pendef{\penbase}{loc}{\mstrat,s}$
is simply the sum of the penalties of actions that are disallowed by $\mstrat$ in $s$.
If $\mstrat$ is randomised, $\pendef{\penbase}{loc}{\mstrat,s}$ gives the expected penalty value in $s$,
i.e., the sum of penalties weighted by the probability with which $\mstrat$ disallows them in $s$.

Now, for the static case, we sum the local penalties over all states, i.e., we put:
$$\pendef{\penbase}{\sta}{\mstrat} = \sum\nolimits_{s\in S_\Diamond}\pendef{\penbase}{loc}{\mstrat,s}.$$
For the dynamic case, we use the (worst-case) expected sum of local penalties.
We define an auxiliary reward structure $\penrew{\mstrat}$
given by the local penalties: $\penrew{\mstrat}(s,a) = \pendef{\penbase}{loc}{\mstrat,s}$ for all $s\in S_\Diamond$ and $a\in\act(s)$,
and $\penrew{\mstrat}(s,a) = 0$ for all $s\in S_\Box$ and $a\in \act(s)$.
Then:
$$\smash{\pendef{\penbase}{\dyn}{\mstrat,s}
= \sup \{ E_{\sgame,s}^{\ctrlstrat,\envstrat}(\penrew{\mstrat})} \, | \, \ctrlstrat\in\strats_\sgame^\Diamond,
\envstrat\in\strats_\sgame^\Box \mbox{ and $\sigma$ complies with $\mstrat$} \}.$$
We use $\pendef{\penbase}{\dyn}{\mstrat} = \pendef{\penbase}{\dyn}{\mstrat,\sinit}$
to reference the dynamic penalty in the initial state.

%



\subsection{Permissive Controller Synthesis}
We can now formally define the central problem studied in this paper.

\begin{defi}[Permissive controller synthesis]\label{defn:pcs}
Consider a game $\sgame$,
a class of multi-strategies $\star\in\{\dete,\rand\}$,
a property $\phi$, 
a penalty scheme $(\penbase,\pentype)$
and a threshold $c\in\Qset_{\geq0}$.
The \emph{permissive controller synthesis} problem asks:
does there exist a multi-strategy $\mstrat\in\mstrats_\sgame^\star$ that is sound for $\phi$
and satisfies $\pendef{\penbase}{\pentype}{\mstrat} \le c$?
\end{defi}
Alternatively, in a more quantitative fashion, we can aim to synthesise (if it exists)
an \emph{optimally permissive} sound multi-strategy.

\begin{defi}[Optimally permissive]\label{defn:opt}
Let $\sgame$, $\star$, $\phi$ and $(\penbase,\pentype)$ be as in \defref{defn:pcs}.
A sound multi-strategy $\hat{\mstrat}\in\mstrats_\sgame^\star$ is \emph{optimally permissive}
if its penalty $\pendef{\penbase}{\pentype}{\hat{\mstrat}}$
equals the infimum $\inf\{ \pendef{\penbase}{\pentype}{\mstrat}  \,|\,\mstrat\in\mstrats_\sgame^{\star}\mbox{ and $\mstrat$ is sound for $\phi$}\}$.
\end{defi}


\begin{exa}\label{eg:pcs}
We return to \egref{eg:mstrat} and consider a static penalty scheme $(\penbase,\sta)$
assigning 1 to the actions $\mathit{north}$, $\mathit{east}$, $\mathit{south}$ (in any state).
The deterministic multi-strategy $\mstrat$ from \egref{eg:mstrat} is optimally permissive
for $\phi=\nprop{\mathit{moves}}{5}$, with 
penalty 1 (just $\emph{north}$ in $s_3$ is disallowed).
If we instead use $\phi'=\nprop{\mathit{moves}}{16}$,
the multi-strategy $\mstrat'$ that extends $\mstrat$ by also allowing $\mathit{north}$
is now sound and optimally permissive, with 
penalty 0.
Alternatively, the randomised multi-strategy $\mstrat''$ that picks
$\{\mathit{north}\}$ with probability 0.7 and $\{\mathit{north},\mathit{east}\}$ with probability 0.3 in $s_3$
is sound for $\phi$ with penalty just 0.7.
\end{exa}

It is important to point out that penalties will typically be used for {\em relative comparisons} of multi-strategies.
If two multi-strategies $\mstrat$ and $\mstrat'$ incur penalties $x$ and $x'$ with $x<x'$,
then the interpretation is that $\mstrat$ is better than $\mstrat'$; there is not necessarily any intuitive
meaning assigned to the values $x$ and $x'$ themselves. Accordingly, when modelling a system, the penalties of actions
should be chosen to reflect the actions' relative importance.
This is different from rewards, which usually correspond to a specific measure of the system.

Next, we establish several fundamental results
about the permissive controller synthesis problem.
Proofs that are particularly technical are postponed to the appendix
and we only highlight the key ideas in the main body of the paper.


\startpara{Optimality}
Recall that two key parameters of the problem are
the type 
of multi-strategy sought (deterministic or randomised)
and the type of 
penalty scheme used (static or dynamic).
%
%
We first note that \emph{randomised} multi-strategies
are strictly more powerful than deterministic ones,
i.e., they can be more permissive (yield a lower penalty)
whilst satisfying the same property $\phi$.
\begin{thm}\label{thm:rand}
The answer to a permissive controller synthesis problem
(for either a \emph{static} or \emph{dynamic} penalty scheme)
can be ``no'' for \emph{deterministic} multi-strategies,
but ``yes'' for \emph{randomised} ones.
\end{thm}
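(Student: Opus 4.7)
The plan is to exhibit a single small witness: a game $\sgame$, a property $\phi$, a penalty scheme $(\penbase,\pentype)$ and a threshold $c$ for which no deterministic multi-strategy is simultaneously sound and below the penalty budget, while some randomised one is. The same example can be made to work for both $\pentype=\sta$ and $\pentype=\dyn$. Concretely, I would use a game with one nontrivial controller state $s_0$ and two absorbing states $s_g,s_f$; from $s_0$ the actions $a_1,a_2$ are enabled, with $\delta(s_0,a_1)(s_g)=0.6$, $\delta(s_0,a_1)(s_f)=0.4$, $\delta(s_0,a_2)(s_g)=0.4$, $\delta(s_0,a_2)(s_f)=0.6$. (A trivial environment-player state can be inserted if one insists on both players being present.) Take $\phi=\calP_{\geq 0.5}[\future \{s_g\}]$, penalty $1$ on each of $a_1,a_2$ at $s_0$, and threshold $c=0.7$.

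For the deterministic direction, I would enumerate the three sensible multi-strategies at $s_0$: $\{a_1\}$, $\{a_2\}$, or $\{a_1,a_2\}$. The first gives reach probability $0.6$ and so is sound, but its penalty is $1>c$. The second is unsound, since $0.4<0.5$. The third is also unsound because a compliant strategy is allowed to pick $a_2$ deterministically, dropping the reach probability to $0.4$. Hence no deterministic multi-strategy meets both requirements, so the answer is ``no''.

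For the randomised direction, take the multi-strategy $\mstrat$ at $s_0$ choosing $\{a_1\}$ with probability $0.5$ and $\{a_1,a_2\}$ with probability $0.5$. By \defref{def:compliant}, any compliant strategy must play $a_1$ with probability at least $0.5$, so the worst-case reach probability is $0.5\cdot 0.6+0.5\cdot 0.4 = 0.5$, making $\mstrat$ sound. The local penalty at $s_0$ is $0.5\cdot 1 = 0.5\leq c$; since $s_0$ is visited exactly once on every play, both the static and the dynamic penalty collapse to this local value, so $\mstrat$ witnesses ``yes'' in both settings. The only delicate step is verifying that the deterministic multi-strategy $\{a_1,a_2\}$ is unsound, since this is precisely the slack that randomisation in the multi-strategy removes; but it follows directly from \defref{defn:soundmstrat} by exhibiting the deterministic compliant strategy that always plays $a_2$.
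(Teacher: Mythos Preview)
Your proof is correct and follows essentially the same approach as the paper: exhibit a small one-controller-state game where every deterministic multi-strategy is either unsound or exceeds the penalty budget, while a suitably randomised one is sound and cheap; the state being visited exactly once makes the static and dynamic penalties coincide. The paper's witness is marginally simpler (deterministic transitions $s\to t_i$ via $a_i$, a reward-$\geq 0.5$ property with reward $1$ on $a_1$ only, penalty $1$ on $a_2$ only, and the randomised multi-strategy $0.5{:}\{a_1\}+0.5{:}\{a_2\}$), but the structure of the argument is the same.
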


%
%
%
\startpara{Proof}
Consider an MDP with states $s$, $t_1$ and $t_2$, and actions $a_1$ and $a_2$, where
$\delta(s,a_i)(t_i)=1$ for $i\in \{1,2\}$, and $t_1,t_2$ have self-loops only.
Let $r$ be a reward structure assigning $1$ to $(s,a_1)$
and $0$ to all other state-action pairs,
and $\penbase$ be a penalty function
assigning $1$ to $(s,a_2)$ and $0$ elsewhere.
We then ask whether there is a multi-strategy satisfying $\phi=\property{r}{0.5}$
with penalty at most 0.5.

Considering either static or dynamic penalties,
the randomised multi-strategy $\mstrat$ that chooses distribution $0.5{:}\{a_1\}+0.5{:}\{a_2\}$ in $s$
is sound and yields penalty 0.5.
However, there is no such deterministic multi-strategy.
\qed

This is why we explicitly distinguish between classes of multi-strategies
when defining permissive controller synthesis.
This situation contrasts with classical controller synthesis,
where deterministic strategies are optimal for the same classes of properties $\phi$.
Intuitively, randomisation is more powerful in this case because
of the trade-off between rewards and penalties:
similar results exist in, for example,
multi-objective controller synthesis on MDPs~\cite{EKVY08}.

Next, we observe that, for the case of static penalties,
the optimal penalty value for a given property (the infimum of achievable values)
may not actually be achievable by any randomised multi-strategy.
\begin{thm}\label{thm:opt}
For permissive controller synthesis using a \emph{static} penalty scheme,
an optimally permissive \emph{randomised} multi-strategy does not always exist.
\end{thm}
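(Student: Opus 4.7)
The plan is to exhibit a small MDP (viewed as a stochastic game with $S_\Box = \emptyset$) and a probabilistic reachability property for which the set of static penalties attainable by sound randomised multi-strategies forms a half-open interval of the form $(0,c]$: the infimum $0$ is approached arbitrarily closely but is never achieved. Intuitively, one ``useful'' action must be allowed with strictly positive probability in order to preserve soundness, while the static penalty of the competing ``useless'' action scales linearly with that same probability, so the penalty can be driven towards $0$ without ever reaching it.

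Concretely, I would use a single controller state $s_0$ with two actions $a$ and $b$, where $\delta(s_0,a)(s_1)=1$ leads to an absorbing target state $s_1$, and $\delta(s_0,b)(s_0)=1$ is a self-loop. Take the property $\phi=\calP_{\geq 1}[\,\future \{s_1\}\,]$ and the penalty function $\penbase(s_0,a)=0$, $\penbase(s_0,b)=1$. The key steps are: (i) parameterise any memoryless randomised multi-strategy $\mstrat$ by the three weights $p_1,p_2,p_3$ assigned to $\{a\}$, $\{b\}$, $\{a,b\}$, with $p_1+p_2+p_3=1$; (ii) unfold the compliance definition to see that $\sigma\Compl\mstrat$ exactly when $\sigma(s_0)(a)=p_1+p_3\cdot q$ for some $q\in[0,1]$, so the minimum probability that a compliant strategy assigns to $a$ is precisely $p_1$; (iii) observe that a memoryless MDP strategy picking $a$ with probability $\alpha>0$ reaches $s_1$ almost surely (the probability of avoiding $s_1$ for $n$ steps being $(1-\alpha)^n\to 0$), while a strategy with $\alpha=0$ never reaches $s_1$, so that $\mstrat$ is sound iff $p_1>0$; (iv) read the static penalty directly from the definition: only the choice $B=\{a\}$ disallows a positive-penalty action, giving $\pendef{\penbase}{\sta}{\mstrat}=p_1\cdot\penbase(s_0,b)=p_1$.

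Combining these observations, the set of static penalties attainable by sound randomised multi-strategies equals $(0,1]$, whose infimum $0$ is not attained, which is the desired non-existence. I expect step (ii) to be the main obstacle, because it requires carefully checking both directions of the compliance definition: that every compliant $\sigma$ at $s_0$ arises from some value of $q=d_{s_0,\{a,b\}}(a)$, and conversely that every $q\in[0,1]$ yields a valid compliant $\sigma$. The almost-sure reachability claim in step (iii), although just a geometric-series computation, must be applied to \emph{every} compliant $\sigma$ (not merely a worst-case one) because \defref{defn:soundmstrat} quantifies universally; this is handled by the uniform lower bound $\alpha\geq p_1$ established in step (ii).
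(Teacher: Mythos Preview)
Your proposal is correct and mirrors the paper's own construction: a single controller state with a self-loop action and an ``escape'' action, where soundness forces the escape-only subset to be chosen with strictly positive probability, which in turn equals the static penalty. The only cosmetic differences are that the paper swaps the roles of the two action labels and phrases the property as an expected-reward bound $\property{r}{1}$ rather than reachability; your more explicit unfolding of the compliance definition in step (ii) is a nice addition.
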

%
%
%
%
\startpara{Proof}
Consider a game with states $s$ and $t$, and actions $a$ and $b$, where we define
$\delta(s,a)(s)=1$ and $\delta(s,b)(t)=1$, and $t$ has just a self-loop.
The reward structure $r$ assigns $1$ to $(s,b)$ and $0$ to all other state-action pairs.
The penalty function $\penbase$ assigns $1$ to $(s,a)$ and $0$ elsewhere.

Now observe that any multi-strategy which disallows the action $a$ with probability $\varepsilon>0$ and allows all other actions incurs penalty $\varepsilon$ and is sound for $\property{r}{1}$, since any strategy which complies with the multi-strategy leads to action $b$ being taken eventually. Thus, the infimum of achievable penalties is $0$. However, the multi-strategy that incurs penalty $0$, i.e. allows all actions, is not sound for $\property{r}{1}$.
\qed

If, on the other hand, we restrict our attention to deterministic strategies,
then an optimally permissive multi-strategy \emph{does} always exist
(since the set of deterministic, memoryless multi-strategies is finite).
For randomised multi-strategies with dynamic penalties, the question remains open.


\startpara{Complexity}
Next, we present complexity results for the different variants of the
permissive controller synthesis problem.
We begin with lower bounds.


\begin{thm}\label{hardness}
The permissive controller synthesis problem is NP-hard,
for either \emph{static} or \emph{dynamic} penalties,
and {\em deterministic} or {\em randomised} multi-strategies.
\end{thm}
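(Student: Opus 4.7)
The plan is to show NP-hardness by polynomial-time reduction from \emph{SUBSET SUM}. Given an instance $(a_1,\ldots,a_n;T)$ of positive integers, I would construct an MDP (a stochastic game with $S_\Box=\emptyset$) whose state space is a chain $q_0\to q_1\to\cdots\to q_n$ of controller states with deterministic transitions. At each $q_{i-1}$, two actions $\alpha_i$ and $\beta_i$ both send the play to $q_i$; I would set $r(q_{i-1},\alpha_i)=a_i$ and $\penbase(q_{i-1},\beta_i)=a_i$, with all other reward and penalty values equal to $0$. The specification would be $\phi=\property{r}{T}$ and the penalty threshold $c=T$.

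For a deterministic multi-strategy $\mstrat$, each controller state $q_{i-1}$ offers essentially three admissible choices: allow only $\alpha_i$ (any compliant strategy takes $\alpha_i$, contributing $a_i$ to both the worst-case reward and the penalty), allow only $\beta_i$ (contributing $0$ to each), or allow both (the worst-case compliant strategy still selects $\beta_i$, again contributing $0$ to each). Writing $S=\{i:\mstrat(q_{i-1})=\{\alpha_i\}\}$, the worst-case expected reward and the static penalty both equal $\sum_{i\in S}a_i$, so the constraints $\mathrm{reward}\ge T$ and $\mathrm{penalty}\le T$ are jointly satisfiable iff $\sum_{i\in S}a_i=T$, solving SUBSET SUM. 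Moreover, since every controller state in the chain is visited exactly once with probability one, the dynamic penalty $\pen{\dyn}(\mstrat)$ coincides with the static penalty, so the same instance establishes NP-hardness for both penalty types in the deterministic case.

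The randomised case is the main obstacle. For the chain above, a randomised multi-strategy yields only the fractional constraint $\sum_i p_i a_i=T$ with $p_i\in[0,1]$, where $p_i$ is the probability that $\mstrat(q_{i-1})=\{\alpha_i\}$; this is polynomial-time-solvable linear feasibility. To remedy this, I would strengthen the construction by inserting, after each $q_{i-1}$, an environment state $e_i$ together with a discriminator sub-game: the environment can either let the play continue unchanged or divert it through a gadget whose value depends non-linearly on the support of $\mstrat(q_{i-1})$, arranged so that every strictly randomised choice is strictly dominated by some deterministic choice with no larger penalty and no worse value for $\phi$. The gadget must be tuned so that it works uniformly for both the static and the dynamic penalty scheme, and must remain of polynomial size; designing such a discriminator is the delicate technical step. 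Once in place, the SUBSET SUM encoding carries over verbatim to the randomised setting, yielding NP-hardness in all four variants.
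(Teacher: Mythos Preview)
Your reduction for \emph{deterministic} multi-strategies is correct and in fact cleaner than the paper's: the paper reduces from Knapsack in all four cases, whereas your SUBSET SUM chain makes the worst-case reward and the (static) penalty coincide, so the single threshold $T$ pins both bounds at once; and since every state on the chain is visited exactly once with probability $1$, the static and dynamic penalties agree as you claim.

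The gap is the randomised case, and you have essentially conceded it: you describe what a discriminator gadget \emph{should} achieve without constructing one. This is exactly where the difficulty lies (the paper itself singles out this case as ``most delicate''). Your proposed mechanism---environment states that make randomised choices dominated---does not obviously work: in your chain both the worst-case reward and the penalty are \emph{linear} in the allowance probabilities $p_i$, and adding an environment branch can only replace the reward by a minimum of linear functions, which is still concave and cannot force $p_i\in\{0,1\}$. The paper's gadgets use no environment states at all. For randomised multi-strategies with \emph{dynamic} penalties, the key trick is a self-loop at each item state carrying the penalty: if the loop action is disallowed with any positive probability, the worst compliant strategy keeps looping until forced out, and the expected accumulated penalty equals the full item weight regardless of the exact probability---an all-or-nothing effect that restores integrality. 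For \emph{static} penalties the paper needs a two-state loop with one penalty scaled by a large factor $2^{3m}$, together with a granularity argument, to show that any fractional allowance is no better than an integral one. Without a concrete gadget of this kind, the randomised half of your argument is not yet a proof.
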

We prove NP-hardness by reduction from the Knapsack problem,
where weights of items are represented by penalties, and
their values are expressed in terms of rewards to be achieved.
The most delicate part is the proof for randomised strategies, where we need to ensure that the multi-strategy cannot benefit from picking certain actions (corresponding to items being put into the Knapsack) with probability other than  $0$ or $1$.
See \appref{appx:md_hardness} for details.
For upper bounds, we have the following.
\begin{thm}\label{upper-bound}
The permissive controller synthesis problem for \emph{deterministic} (resp. {\em randomised}) strategies is in NP (resp.\ PSPACE) for {\em dynamic}/\emph{static} penalties.
\end{thm}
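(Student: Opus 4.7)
\startpara{Deterministic case (NP)}
For the deterministic case we plan a standard guess-and-verify argument. A memoryless deterministic multi-strategy $\mstrat$ is a function $s\mapsto\mstrat(s)\subseteq\act(s)$ on $S_\Diamond$, of polynomial bit-size in $|\sgame|$, and so can be guessed nondeterministically in polynomial time. It remains to check two conditions in polynomial time. For \emph{soundness}, observe that fixing $\mstrat$ restricts the controller to $\mstrat(s)$ at each $s\in S_\Diamond$, and soundness for $\phi$ then says that the \emph{maximum} expected total reward for $r$ in the restricted game, taken over all compliant $\sigma$ and all $\pi$, is $\le b$. Since both the compliant controller and the environment now pull the reward upwards, they can be merged into a single maximising player and the check reduces to computing the maximum expected reward of an MDP, solvable in polynomial time by linear programming. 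For the \emph{penalty} check, the static case is the trivial sum $\sum_{s\in S_\Diamond}\sum_{a\notin\mstrat(s)}\penbase(s,a)$, while the dynamic case is the maximum expected cumulative value of $\penrew{\mstrat}$ in the same restricted MDP, again polynomial-time via LP. Both checks being polynomial yields the NP upper bound.

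\startpara{Randomised case (PSPACE)}
For the randomised case, the plan is to encode permissive controller synthesis as a sentence in the existential first-order theory of the reals ($\exists\mathsf{FO}(\Rset)$) and invoke the result of Canny/Renegar that this theory is decidable in PSPACE. The sentence asserts the existence of a memoryless $\mstrat\in\mstrats_\sgame^\rand$, marginals $p_{s,a}=\sum_{B\ni a}\mstrat(s)(B)$ for each $s\in S_\Diamond$ and $a\in\act(s)$, and values $v(s)$ representing the worst compliant controller's expected reward against the worst environment from $s$. Validity of $\mstrat$ is a linear (in)equality system. Soundness is encoded by Bellman-style inequalities: for $s\in S_\Box$, $v(s)\ge r(s,a)+\sum_{s'}\delta(s,a)(s')v(s')$ for every $a\in\act(s)$; for $s\in S_\Diamond$, one uses the identity
\[
\max_{\sigma\Compl\mstrat}\sum_{a\in\act(s)}\sigma(s)(a)R(s,a)\;=\;\sum_{B\subseteq\act(s)}\mstrat(s)(B)\,\max_{a\in B}R(s,a),
\]
with $R(s,a)=r(s,a)+\sum_{s'}\delta(s,a)(s')v(s')$, to force the matching upper bound on $v(s)$; finally one adds $v(\sinit)\le b$. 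The penalty bound is linear in the marginals in the static case, and a further fixed-point constraint on the auxiliary reward $\penrew{\mstrat}$ in the dynamic case.

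\startpara{Main obstacle}
The principal difficulty is keeping the $\exists\mathsf{FO}(\Rset)$ sentence of polynomial size, since a priori the variables $\mstrat(s)(B)$ are indexed by the $2^{|\act(s)|}$ subsets of $\act(s)$, yielding an exponentially large formula and only an EXPSPACE bound. The plan is to compress the encoding by observing that, for the soundness inequality, what is really needed about $\mstrat(s)$ is, for each action $a$, the probability $\alpha_{s,a}$ that $a$ is the \emph{best} allowed action at $s$ with respect to $R(s,\cdot)$; the term $\sum_B\mstrat(s)(B)\max_{a\in B}R(s,a)$ then collapses to $\sum_a\alpha_{s,a}R(s,a)$. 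One introduces polynomially many variables $\alpha_{s,a}$ and $p_{s,a}$, together with polynomially many auxiliary variables expressing the max/selection relation and certifying that some $\mstrat$ simultaneously realises the prescribed $(\alpha_{s,a},p_{s,a})$ profile. The delicate step---and what I expect to be the real obstacle---is producing such a polynomial-size joint-realisability certificate for $(\alpha,p)$ without sneaking back in exponentially many variables; once this is in place the formula has polynomial size and Canny/Renegar closes the argument.
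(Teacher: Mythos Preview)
\startpara{Deterministic case}
Your guess-and-verify argument is essentially the paper's, which observes that the deterministic problem reduces to verifying a polynomial-size MILP instance (\thmref{thm:main-static}, \thmref{thm:main-dynamic}). One quibble: in the paper's canonical form $\phi=\property{r}{b}$, soundness requires the \emph{infimum} over compliant $\sigma$ and all $\pi$ to be at least $b$; both players pull the reward \emph{down}, not up. This does not affect your argument structurally---the merged player is a minimiser rather than a maximiser, and the resulting MDP check is still LP-solvable in polynomial time.

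\startpara{Randomised case}
You correctly identify the overall scheme (encode into the existential theory of the reals and appeal to Canny) and correctly diagnose the obstacle: the na\"ive encoding has exponentially many variables $\mstrat(s)(B)$. But your proposed fix---introducing marginals $p_{s,a}$, ``best-action'' probabilities $\alpha_{s,a}$, and a polynomial-size joint-realisability certificate tying them together---is not actually carried out, and I do not see how to complete it. The difficulty is twofold. First, which action is ``best'' in $B$ depends on the ordering of the values $R(s,a)$, themselves unknowns, so one would need to disjoin over the $|A(s)|!$ orderings. Second, even for a fixed ordering, the set of jointly realisable $(p_{s,a},\alpha_{s,a})$ profiles is a projection of a polytope of exponential dimension, and there is no evident reason it admits a polynomial-size semialgebraic description. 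You yourself flag this step as ``delicate''; as stated it is a genuine gap.

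The paper bypasses the obstacle by a completely different structural lemma (\thmref{2_set_rand_chain_condition}): any sound multi-strategy can be replaced by one whose support in each state has size at most two, without increasing the penalty. With $|\support{\mstrat(s)}|\le 2$, one needs only a single real variable $p^s$ per state (the probability of the first set) plus polynomially many Boolean variables indicating which actions belong to each of the two sets. These plug into the gadget of \figref{fig:approximation} with the fixed $p_i$'s replaced by the unknowns $p^s,1{-}p^s$, after which the MILP constraints of \figfigref{lp_encoding_static}{lp_encoding_dynamic} become polynomial-size polynomial (in fact, bilinear) constraints over the reals, and the existential-theory bound applies directly. The key idea you are missing is this support-size reduction; once you have it, no realisability certificate is needed at all.
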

For deterministic multi-strategies, it is straightforward to show NP membership in both the dynamic and static penalty case, since we can guess a multi-strategy satisfying the required conditions
and check its correctness in polynomial time.
For randomised multi-strategies, with some technical effort,
we can encode existence of the required multi-strategy as a formula of the existential fragment of the theory of real arithmetic,
solvable with polynomial space~\cite{Canny}.
See \appref{appx:upper-bound}.
A natural question is whether the PSPACE upper bound for randomised multi-strategies can be improved.
We show that this is likely to be difficult, by giving a reduction from the square-root-sum problem. 

\begin{thm}\label{square-root-sum}
 There is a reduction from the square-root-sum problem to the permissive controller synthesis problem with {\em randomised} multi-strategies, for both {\em static} and {\em dynamic} penalties.
\end{thm}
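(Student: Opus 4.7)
The plan is to give a polynomial-time many-one reduction from the square-root-sum problem (SRS): given $n_1,\dots,n_k\in\Nset$ and $K\in\Nset$, decide whether $\sum_i\sqrt{n_i}\le K$. I will describe the construction for static penalties in detail; the dynamic-penalty case will follow from a variant whose main features I outline below.

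For the static case, the plan is to build a game $\sgame$ consisting of $k$ two-stage ``product'' gadgets $s_i{\to}t_i{\to}g$: at $s_i$ the controller has actions $a_i$ (to $t_i$) and $b_i$ (to an absorbing sink $z$), and at $t_i$ it has $c_i$ (to the target $g$) and $d_i$ (to $z$). The initial state $\sinit$ is a player-$\Box$ state with actions $e_1,\dots,e_k$, where $\delta(\sinit,e_i)$ puts probability $1/n_i$ on $s_i$ and probability $1-1/n_i$ on an absorbing fail state. I set $\penbase(s_i,b_i)=\penbase(t_i,d_i)=1$ (all other penalties $0$) and take $\phi=\calP_{\ge\theta}[\future g]$ with $\theta=1/(1+\max_i n_i)$.

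Writing $\alpha_i=\mstrat(s_i)(\{a_i\})$ and $\alpha'_i=\mstrat(t_i)(\{c_i\})$, the key observations will be: (i) the worst-case compliant strategy picks the sink action whenever it is available---including when the free set $\{a_i,b_i\}$ is allowed---so the worst-case reach probability from $s_i$ is exactly $\alpha_i\alpha'_i$, and moreover the local penalty at $s_i$ works out to $\alpha_i$ regardless of the probability placed on the free set; (ii) the worst-case environment deterministically picks the $e_i$ minimising $(1/n_i)\alpha_i\alpha'_i$, so soundness for $\phi$ decouples into per-gadget constraints $\alpha_i\alpha'_i\ge n_i\theta$; (iii) by AM-GM, each gadget's penalty contribution $\alpha_i+\alpha'_i$ is minimised subject to this constraint at $\alpha_i=\alpha'_i=\sqrt{n_i\theta}$, which lies in $[0,1]$ by the choice of $\theta$, giving the per-gadget optimum $2\sqrt{n_i\theta}$. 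Summing, the optimal static penalty of a sound randomised multi-strategy is exactly $2\sqrt{\theta}\sum_i\sqrt{n_i}$, so taking the threshold $c=2K\sqrt{\theta}$ yields the required biconditional.

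The main obstacle is the dynamic-penalty case: in the game above, because the environment adversarially picks a single $e_i$, the dynamic penalty---defined as a supremum of expected accumulated local penalty over all compliant/environment strategy pairs---collapses to the \emph{maximum} of per-gadget contributions rather than their sum, and the direct construction no longer produces $\sum_i\sqrt{n_i}$. The plan to handle this is to modify the game so that every gadget is visited with probability $1$ along each run---for instance by chaining the gadgets sequentially and using an expected-reward property whose reward weights are chosen to force the same decoupled quadratic constraints $\alpha_i\alpha'_i\ge n_i\theta$---so that the expected cumulative local penalty recovers an additive per-gadget form $\sum_i 2\sqrt{n_i\theta}$. Verifying this variant is the technically delicate part: one must argue simultaneously that the soundness constraints still decouple per gadget and that the AM-GM optimum per gadget is preserved for the (harder) supremum defining dynamic penalty.
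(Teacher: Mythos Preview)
Your static-penalty construction captures the same core mechanism as the paper's proof: two sequential controller states per item whose ``forced progress'' probabilities $\alpha_i,\alpha'_i$ coincide with the local penalties, a per-item soundness constraint of the form $\alpha_i\alpha'_i\ge c_i$, and AM--GM to conclude $\alpha_i+\alpha'_i\ge 2\sqrt{c_i}$. There is, however, a concrete bug: with $\theta=1/(1+\max_i n_i)$ the threshold $c=2K\sqrt{\theta}$ is in general irrational, whereas the permissive controller synthesis problem (\defref{defn:pcs}) requires $c\in\Qset_{\ge 0}$, so as written you have not produced a valid target instance. The fix is easy (e.g.\ take $\theta=1/M^2$ for an integer $M\ge\max_i n_i$), but it is worth noting how the paper sidesteps the issue: instead of scaling by a transition probability $1/n_i$, it encodes item $i$ via a reward weight $1/x_i$ on the terminal step, so that the constraint becomes $\alpha_i\alpha'_i\ge x_i$ directly, the optimal penalty is $2\sum_i\sqrt{x_i}$, and the threshold $2y$ is rational.

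For dynamic penalties you correctly diagnose why your game fails (the adversarial choice at $\sinit$ collapses the dynamic penalty to a maximum over gadgets), but the proposed remedy---chaining gadgets sequentially with suitably chosen reward weights---is only a sketch, and you explicitly flag the verification as ``technically delicate'' without carrying it out. This is a genuine gap. The paper's solution is considerably simpler and requires no new game topology: from the outset the initial state distributes \emph{uniformly} (via a single action) to the $n$ gadgets, and the adversarial $\Box$ choice is placed one level down, inside each gadget, between a reward branch and the two-controller-state branch. For soundness the environment minimises reward, which still decouples into per-item product constraints; for the dynamic penalty the uniform fan-out makes the expected accumulated local penalty a $\tfrac{1}{n}$-weighted \emph{sum} over gadgets automatically, since the environment (now maximising penalty) simply routes each gadget into its controller branch. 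The only additional tweak is to assign penalty $1$ to \emph{all} controller actions, not just the sink actions, so that the per-gadget dynamic contribution dominates $\alpha_i+\alpha'_i$; one then compares against the rational threshold $2y/n$. You should restructure the dynamic case along these lines rather than pursue a sequential chain.
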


We use a variant of the problem that 
asks, given positive rationals $x_1$,\dots,$x_n$ and $y$,
whether $\sum_{i = 1}^n \sqrt{x_i} \le y$.
This problem is known to be in PSPACE, but establishing a better complexity bound is a long-standing open problem in computational geometry \cite{Garey}. 
See \appref{appx:reduction} for details.

\section{MILP-Based Synthesis of Multi-Strategies}\label{sec:synthesis}

We now consider practical methods for synthesising
multi-strategies that are sound for a property $\phi$
and optimally permissive for some penalty scheme.
Our methods use mixed integer linear programming (MILP),
which optimises an objective function
subject to linear constraints that mix both real and integer variables.
A variety of efficient, off-the-shelf MILP solvers exists.

An important feature of the MILP solvers we use is that they work incrementally,
producing a sequence of increasingly good solutions.
Here, that means generating
a series of sound multi-strategies that are increasingly permissive.
In practice, when computational resources are constrained, it may be acceptable
to stop early and accept a multi-strategy that is sound but not necessarily optimally permissive.

Here, and in the rest of this section, we assume that the property $\phi$
is of the form $\property{r}{b}$. Upper bounds on expected rewards ($\phi=\nprop{r}{b}$)
can be handled by negating rewards and converting to a lower bound.
%
%
For the purposes of encoding into MILP,
we rescale $r$ and $b$ such that
$\sup_{\sigma,\pi}E_{\sgame,s}^{\sigma,\pi}(r)<1$ for all $s$,
and rescale every (non-zero) penalty
such that $\penbase(s,a) \geq 1$ for all $s$ and $a \in A(s)$.

We begin by discussing the synthesis of deterministic multi-strategies,
first for static penalties and then for dynamic penalties.
Subsequently, we present an approach to synthesising approximations
to optimal randomised multi-strategies.
In each case, we describe encodings into MILP problems and prove their correctness.
We conclude this section with a brief discussion of ways to optimise the MILP encodings.
Then, in \sectref{sec:expts}, we investigate the practical applicability of our techniques.


\subsection{Deterministic Multi-Strategies with Static Penalties}\label{sec:detsynthsta}

\begin{figure}[!t]
\begin{flalign*}
\mbox{Minimise:} & \
- x_{\sinit}\  + \sum\nolimits_{s \in S_{\Diamond}} \sum\nolimits_{a \in A(s)} (1-y_{s,a}){\cdot}\penbase(s,a) \ \ \mbox{ subject to:} &
\end{flalign*}
\begin{align}
\label{eqn:det-fst}x_{\sinit} &\geq b\\
\label{eqn:det-mstrat}1&\le \sum\nolimits_{a \in {A(s)}} y_{s,a} &\text{for all $s\in S_\Diamond$}\\
\label{eqn:det-rew1}x_{s} &\leq \sum\nolimits_{t \in S} \delta(s,a)(t){\cdot} x_t + r(s,a) + (1-y_{s,a}) &\text{for all $s\in S_\Diamond$, $a\in \act(s)$}\\
\label{eqn:det-rew2}x_{s} &\leq \sum\nolimits_{t \in S} \delta(s,a)(t){\cdot} x_{t} + r(s,a) &\text{for all $s\in S_\Box$, $a\in \act(s)$}\\
x_s & \le \alpha_s &\text{for all $s\in S$}\label{eq:path-a}\\
y_{s,a} & = (1-\alpha_s) + \!\!\sum\nolimits_{t\in \support{\delta(s,a)}} \beta_{s,a,t} & \text{for all $s\in S, a\in \act(s)$}\label{eq:path-b}\\
y_{s,a} & = 1 & \text{for all $s\in S_\Box , a\in \act(s)$}\label{eq:path-b2}\\[0.4em]
\gamma_t & < \gamma_s + (1 - \beta_{s,a,t}) + r(s,a) & \text{for all $s$, $a$, $t$ with $t\in\support{\delta(s,a)}$}\label{eq:path-c}
\end{align}
\vspace*{-1em}
\caption{MILP encoding for deterministic multi-strategies with static penalties.}
\label{lp_encoding_static}
\end{figure}
%
\begin{figure}[!t]
Minimise: $z_{\sinit}\ $ subject to (\ref{eqn:det-fst}),\dots,(\ref{eq:path-c}) and:
\begin{align}
\ell_{s} &= \sum\nolimits_{a\in\act(s)}\penbase(s,a){\cdot}(1-y_{s,a}) &\text{for all $s\in S_\Diamond$}\label{eq:dyn-diamond-a}\\
z_{s} &\geq \sum\nolimits_{t \in S} \delta(s,a)(t){\cdot} z_{t} + \ell_{s} - c{\cdot}(1-y_{s,a}) &\text{for all $s\in S_\Diamond$, $a\in \act(s)$}\label{eq:dyn-diamond-b}\\
z_{s} &\geq \sum\nolimits_{t \in S} \delta(s,a)(t){\cdot} z_{t} &\text{for all $s\in S_\Box$, $a\in \act(s)$}\label{eq:dyn-box}
\end{align}
\vspace*{-1em}
\caption{MILP encoding for deterministic multi-strategies with dynamic penalties.}
\label{lp_encoding_dynamic}
\end{figure}

\figref{lp_encoding_static} shows an encoding into MILP
of the problem of finding an optimally permissive \emph{deterministic} multi-strategy
for property $\phi=\property{r}{b}$ and a \emph{static} penalty scheme $(\penbase,\sta)$.
%
%
The encoding uses 5 types of variables:
$y_{s,a} \in \{0,1\}$,
$x_s \in [0,1]$,
$\alpha_s\in \{0,1\}$, $\beta_{s,a,t}\in \{0,1\}$ and $\gamma_t\in [0,1]$,
where $s,t\in S$ and $a\in\act$.
The worst-case size of the MILP problem is $\mathcal{O}(|A|{\cdot}|S|^2{\cdot}\kappa)$,
where $\kappa$ stands for the longest encoding of a number used.

Variables $y_{s,a}$ encode a multi-strategy $\mstrat$ as follows:
$y_{s,a}$ has value $1$ iff $\mstrat$ allows action $a$ in $s\in S_\Diamond$
(constraint \eqref{eqn:det-mstrat} enforces at least one allowed action per state).
Variables $x_s$ represent the worst-case expected total reward (for $r$) from state $s$,
under any controller strategy complying with $\mstrat$ and under any environment strategy.
This is captured by constraints \eqref{eqn:det-rew1}--\eqref{eqn:det-rew2}
(which are analogous to the linear constraints used when minimising the reward in an MDP).
Constraint \eqref{eqn:det-fst} puts the required bound of $b$ on the reward from $\sinit$.

The objective function minimises the static penalty (the sum of all local penalties)
minus the expected reward in the initial state.
The latter acts as a tie-breaker between solutions with equal penalties
(but, thanks to rescaling, is always dominated by the penalties
and therefore does not affect optimality).

As an additional technicality, we need to ensure
the values of $x_s$ are the \emph{least} solution of the defining inequalities,
to deal with the possibility of zero reward loops.
To achieve this, we use an approach similar to the one taken in~\cite{WJABK12}.
%
It is sufficient to ensure that $x_s=0$
whenever the minimum expected reward from $s$ achievable under $\mstrat$ is $0$,
which is true if and only if, starting from $s$, it is possible to avoid ever taking an action with positive reward.

In our encoding, $\alpha_s=1$  if $x_s$ is positive (constraint~\eqref{eq:path-a}). The binary variables $\beta_{s,a,t}=1$ represent, for each such $s$ and each action $a$ allowed in $s$, a choice of successor $t=t(s,a) \in\support{\delta(s,a)}$ (constraint~\eqref{eq:path-b}). The variables $\gamma_s$ then represent a ranking function: if $r(s,a)=0$, then $\gamma_s>\gamma_{t(s,a)}$ (constraint~\eqref{eq:path-c}). If a positive reward could be avoided starting from $s$, there would in particular be an infinite sequence $s_0,a_1,s_1,\dots$ with $s_0=s$ and, for all $i$, either (i) $x_{s_i} > x_{s_{i+1}}$, or (ii) $x_{s_i} = x_{s_{i+1}}$, $s_{i+1}=t(s_i,a_i)$ and $r(s_i,a_i)=0$, and therefore $\gamma_{s_i}>\gamma_{s_{i+1}}$.
This means that the sequence $(x_{s_0}, \gamma_{s_0}),(x_{s_1}, \gamma_{s_1}),\dots$
is (strictly) decreasing w.r.t. the lexicographical order, but at the same time $S$ is finite, and so this sequence would have to enter a loop, which is a contradiction.

\startpara{Correctness}
Before proving the correctness of the encoding (stated in \thmref{thm:main-static}, below),
we prove the following auxiliary lemma that characterises the reward achieved
under a multi-strategy in terms of a solution of a set of inequalities.

\begin{lem}\label{lem:value-soln}
Let $\sgame=\gametuple$ be a stochastic game, $\phi=\property{r}{b}$ a property,
$(\penbase,\sta)$ a \emph{static} penalty scheme and $\mstrat$ a deterministic multi-strategy.
Consider the inequalities: 
$$\begin{array}{lcll}
x_s & \le & \min_{a\in \mstrat(s)} \sum_{s' \in S} \delta(s,a)(s') x_{s'} + r(s,a) & \mbox{ for $s\in S_\Diamond$} \\
x_s & \le & \min_{a\in \act(s)} \sum_{s' \in S} \delta(s,a)(s') x_{s'} + r(s,a) & \mbox{ for $s\in S_\Box$} .
\end{array}$$
Then the following hold:
\begin{itemize}
 \item $\bar x_s = \inf_{\sigma \Compl \mstrat, \pi}E_{\sgame,s}^{\sigma,\pi}(r)$ is a solution to the above inequalities.
 \item A solution $\bar x_s$ to the above inequalities satisfies
  $\bar x_s\le \inf_{\sigma\Compl\mstrat, \pi}E_{\sgame,s}^{\sigma,\pi}(r)$ for all $s$ whenever the following condition holds:
  for every $s$ with $\bar x_s>0$, 
  every $\sigma \Compl\mstrat$ and every $\pi$
  there is a path $\omega = s_0a_0\ldots s_n a_n$
  starting in $s$ that satisfies $\Prb_{\sgame,s}^{\sigma,\pi}(\omega)>0$
  and $r(s_n,a_n) > 0$.
\end{itemize}
\end{lem}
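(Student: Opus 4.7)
I would prove the two assertions separately.

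For the first assertion, the plan is to recognise the system as the standard Bellman lower inequality for minimum expected total reward. Under the restriction that $\sigma$ complies with $\mstrat$ and the simultaneous infimum over $\pi$, both players act as a single minimiser, so the game collapses into an MDP whose action set at $s\in S_\Diamond$ is $\mstrat(s)$ and at $s\in S_\Box$ is $A(s)$. For any allowed action $a$ and any $\varepsilon>0$, the strategy pair that plays $a$ deterministically at the first step and $\varepsilon$-optimally thereafter witnesses
\[
\bar x_s \;\le\; r(s,a) + \sum_{s'\in S}\delta(s,a)(s')\,\bar x_{s'} + \varepsilon;
\]
minimising over $a$ and letting $\varepsilon\downarrow 0$ gives the claimed inequalities. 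Non-negativity of $r$ together with the finite-reward standing assumption is enough to make this classical dynamic-programming argument go through.

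For the second assertion, let $\bar x$ be any solution of the inequalities satisfying the stated hypothesis. Without loss of generality I would take $\bar x\ge 0$ pointwise: otherwise replace each $\bar x_s$ by $\bar x_s^+ := \max(\bar x_s,0)$, which still satisfies the inequalities (using $r\ge 0$ and $\bar x\le \bar x^+$) and preserves the set $T := \{s : \bar x_s > 0\}$, so the hypothesis carries over. Fix arbitrary $\sigma \Compl \mstrat$ and $\pi$. Averaging the pointwise inequality against $\sigma(s)$ for $s\in S_\Diamond$ and against $\pi(s)$ for $s\in S_\Box$ yields a one-step expected bound, and a straightforward induction using the tower property iterates it to
\[
\bar x_s \;\le\; E_{\sgame,s}^{\sigma,\pi}\Big[\textstyle\sum_{i=0}^{n-1} r(s_i,a_i)\Big] \;+\; E_{\sgame,s}^{\sigma,\pi}[\bar x_{s_n}] \qquad \text{for all } n\ge 1.
\]
As $n\to\infty$ the first summand converges monotonically (since $r\ge 0$) to $E_{\sgame,s}^{\sigma,\pi}(r)$, so the whole argument reduces to showing that the tail $E_{\sgame,s}^{\sigma,\pi}[\bar x_{s_n}]$ tends to $0$; once that is in hand, passing to the limit in $n$ and then to the infimum over $(\sigma,\pi)$ gives the claim.

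The vanishing of this tail is the main obstacle, and it is where the hypothesis does real work. My plan is to analyse the finite Markov chain $M_{\sigma,\pi}$ induced by $(\sigma,\pi)$ and to show that no bottom strongly connected component (BSCC) of $M_{\sigma,\pi}$ meets $T$. Indeed, if some BSCC $C$ contained a state $t\in T$, the hypothesis at $t$ would supply a finite path $\omega = t_0 a_0\cdots t_m a_m$ with $t_0=t$, $\Prb_{\sgame,t}^{\sigma,\pi}(\omega)>0$ and $r(t_m,a_m)>0$; since $C$ is closed under $M_{\sigma,\pi}$, the whole path $\omega$ stays inside $C$, so the positive-reward transition $(t_m,a_m)$ is taken with positive probability at a recurrent state of $C$ and is therefore taken infinitely often almost surely from any state of $C$, forcing $E_{\sgame,t}^{\sigma,\pi}(r)=\infty$ and contradicting the standing assumption $\sup_{\sigma,\pi}E_{\sgame,t}^{\sigma,\pi}(r)<\infty$. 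Hence $T$ lies entirely in the transient part of $M_{\sigma,\pi}$, so $\Prb_{\sgame,s}^{\sigma,\pi}(s_n\in T)\to 0$, and bounding $\bar x$ by $\max_{s'}\bar x_{s'}<\infty$ on the finite state space gives $E_{\sgame,s}^{\sigma,\pi}[\bar x_{s_n}]\le (\max_{s'}\bar x_{s'})\cdot\Prb_{\sgame,s}^{\sigma,\pi}(s_n\in T)\to 0$, as required. Without the hypothesis this step genuinely fails: already in the one-state example whose only action is a reward-$0$ self-loop, any $\bar x_s\ge 0$ solves the inequality, so Bellman solutions can overshoot the true value $0$.
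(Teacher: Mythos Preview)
Your proof is correct. The first part proceeds along the same lines as the paper: both of you collapse the game with the fixed deterministic multi-strategy $\mstrat$ into an MDP and invoke the Bellman characterisation of the minimum expected total reward; the paper cites \cite[Theorem~7.1.3]{Put94} directly, whereas you spell out the $\varepsilon$-optimal one-step argument.

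For the second part your route genuinely differs from the paper's. The paper again works through the induced MDP $\sgame^\mstrat$ and reduces to \cite[Proposition~7.3.4]{Put94}, which says that a solution of the Bellman inequalities equals the optimal value provided that, under every strategy, a state $s$ with $\bar x_s=0$ is reached almost surely; it then argues (tersely) that this reachability condition follows from the hypothesis together with the finite-reward assumption. You instead run the supermartingale-style iteration $\bar x_s\le E\big[\sum_{i<n}r(s_i,a_i)\big]+E[\bar x_{s_n}]$ and kill the tail by a BSCC analysis of the finite Markov chain induced by a fixed memoryless pair $(\sigma,\pi)$: a BSCC meeting $T=\{\bar x>0\}$ would contain a recurrent positive-reward transition and hence force infinite expected reward, contradicting the standing assumption. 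This is more self-contained and avoids the translation the paper needs between its setting and Puterman's ``negative models''; the paper's version is shorter on the page but leans on the reader to verify compatibility with the cited result. Your preliminary reduction to $\bar x\ge 0$ via $\bar x^+=\max(\bar x,0)$ is also a clean touch that the paper leaves implicit.
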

\begin{proof}
The game $\sgame$, together with $\mstrat$, determines a Markov decision process $\sgamestrat= \langle \emptyset, S_\Diamond \cup S_\Box, \sinit, \act, \delta'\rangle$ in which the choices disallowed by $\mstrat$ are removed, i.e.
$\delta'(s,a)$ is equal to $\delta(s,a)$ for every $s\in S_\Box$ and every $s \in S_\Diamond$ with $a \in \mstrat(s)$, and is undefined for any other combination of $s$ and $a$.
We have:
\[
 \inf_{\sigma \Compl\mstrat, \pi}E_{\sgame,s}^{\sigma,\pi}(r)
 = \inf_{\bar\sigma}E_{\sgame^\mstrat,s}^{\bar\sigma}(r)
 \]
since, for any strategy pair $\sigma \Compl\mstrat$ and $\pi$ in $\sgame$,
there is a strategy $\bar\sigma$ in $\sgamestrat$ which is defined,
for every finite path $\omega$ of $\sgamestrat$ ending in $t$,
by $\bar\sigma(\omega) = \sigma(\omega)$ or $\bar\sigma(\omega) = \pi(\omega)$,
depending on whether $t\in S_\Diamond$ or $t \in S_\Box$, and which satisfies
$\smash{E_{\sgame,s}^{\sigma,\pi}(r) = E_{\sgame^\mstrat,s}^{\bar\sigma}(r)}$.
Similarly, a strategy $\bar\sigma$ for $\sgamestrat$ induces
a compliant strategy $\sigma$ and a strategy $\pi$
defined for every finite path $\omega$ of $\sgame$ ending in $S_\Diamond$
(resp.\ $S_\Box$) by $\sigma(\omega) = \bar\sigma(\omega)$
(resp.\ $\pi(\omega) = \bar\sigma(\omega)$).

The rest is then the following simple application of results from the theory of Markov decision processes.
The first item of the lemma follows from~\cite[Theorem~7.1.3]{Put94}, which gives a characterisation of values in MDPs in terms of Bellman equations; the inequalities in the lemma are in fact a relaxation of these equations.
For the second part of the lemma, observe that if,
$\inf_{\sigma\Compl\mstrat,\pi}E_{\sgame,s}^{\sigma,\pi}(r)$ is infinite, then the claim holds trivially.
Otherwise, from the assumption on the existence of
$\omega$ we have that, under any compliant strategy,
there is a path $\omega' = s_0 a_0 s_1 \ldots s_n$ of length at most $|S|$ in
$\sgamestrat$ such that $\inf_{\sigma \Compl\mstrat,\pi}E_{\sgame,s_n}^{\sigma,\pi}(r) = 0$ (otherwise the reward would be infinite) and so $\bar x_{s_n} = 0$. We can thus apply~\cite[Proposition~7.3.4]{Put94},
which states that a solution to our inequalities gives optimal values whenever under any strategy the probability of reaching a state $s$ with $x_s=0$ is $1$. Note that the result of~\cite{Put94} applies for maximisation of
reward in ``negative models''; our problem can be easily reduced to this setting by multiplying the rewards by $-1$ and looking for maximising (instead of minimising) strategies.
\end{proof}

\begin{thm}\label{thm:main-static}
Let $\sgame$ be a game, $\phi=\property{r}{b}$ a property and $(\penbase,\sta)$ a \emph{static} penalty scheme. There is a sound multi-strategy in $\sgame$ for $\phi$ with penalty $p$ if and only if there is an optimal assignment to the
MILP instance from~\figref{lp_encoding_static} which satisfies
$p=\sum\nolimits_{s \in S_{\Diamond}} \sum\nolimits_{a \in A(s)} (1-y_{s,a}){\cdot}\penbase(s,a)$.
\end{thm}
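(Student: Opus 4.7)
The theorem has two directions, and I would structure the proof around a correspondence between sound deterministic multi-strategies and feasible MILP solutions: the binary $y$-variables encode which actions are allowed, the $x$-variables encode worst-case expected reward under the encoded multi-strategy, and $\alpha,\beta,\gamma$ are ranking witnesses that force $x_s$ to be the least (rather than some spuriously larger) solution of the Bellman-type inequalities. Under the scaling assumption $\sup E < 1$ and $\penbase(s,a)\geq 1$ whenever positive, the $-x_{\sinit}$ tie-breaker in the objective is dominated by any single unit of penalty, so the penalty component of the optimal MILP assignment coincides with the minimum penalty achievable by a sound multi-strategy; this is exactly the ``iff'' asserted by the theorem.

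For the forward direction, given a sound $\mstrat$ with static penalty $p$, I set $y_{s,a}=1$ iff $a\in\mstrat(s)$ and $x_s=\inf_{\sigma\Compl\mstrat,\pi}E_{\sgame,s}^{\sigma,\pi}(r)$. Constraint~\eqref{eqn:det-mstrat} holds because $\mstrat(s)\neq\emptyset$, \eqref{eqn:det-fst} by soundness, and \eqref{eqn:det-rew1}--\eqref{eqn:det-rew2} by the first item of Lemma~\ref{lem:value-soln}. I then set $\alpha_s=1$ iff $x_s>0$ (giving \eqref{eq:path-a}) and, for each such $s$ and each relevant action $a$, pick a single successor $t\in\support{\delta(s,a)}$ with $\beta_{s,a,t}=1$; the choice must be made so that the induced digraph, restricted to zero-reward transitions between positive-$x$ states, is acyclic, and a topological numbering then yields $\gamma$ satisfying \eqref{eq:path-c}. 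At states with $x_s=0$ constraint \eqref{eq:path-b} forces $y_{s,a}=1$ for every $a$, which we may assume of an optimally permissive $\mstrat$ without loss of generality: extending $\mstrat$ to allow every action at such states preserves soundness (any $\mstrat'$-compliant strategy differing only at such $s$ accumulates the same reward before reaching $s$ and at most $x_s=0$ afterwards, so $x_{s'}$ is unchanged for $s'\neq s$) and does not increase the penalty.

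For the backward direction, given an optimal MILP assignment, I define $\mstrat$ by $\mstrat(s)=\{a\mid y_{s,a}=1\}$; nonemptiness follows from \eqref{eqn:det-mstrat}. The $x$-values satisfy the inequalities in Lemma~\ref{lem:value-soln} for $\mstrat$, since \eqref{eqn:det-rew1} restricted to $y_{s,a}=1$ gives precisely the required bound for $a\in\mstrat(s)$ while \eqref{eqn:det-rew2} gives it for $s\in S_\Box$. To apply the second item of the lemma and conclude $x_s\leq\inf_{\sigma\Compl\mstrat,\pi}E_{\sgame,s}^{\sigma,\pi}(r)$, I verify the positive-path hypothesis: when $x_s>0$, \eqref{eq:path-a} forces $\alpha_s=1$; then \eqref{eq:path-b}--\eqref{eq:path-b2} supply, for each allowed $a$, a successor $t$ with $\beta_{s,a,t}=1$, and \eqref{eq:path-c} enforces $\gamma_t<\gamma_s$ along each zero-reward such edge. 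Finiteness of $S$ precludes infinite zero-reward chains through this successor map, so under every compliant $\sigma$ and every $\pi$ a positive-reward action is reached from $s$ with positive probability. Hence $x_{\sinit}\leq\inf E$, and combined with \eqref{eqn:det-fst} this gives soundness of $\mstrat$.

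The main obstacle is the forward construction of the successor map: one must justify that when $\mstrat$ is sound and $x_s>0$ the sub-MDP $\sgamestrat$ admits a per-action choice of single successor whose zero-reward restriction within $\{s:x_s>0\}$ is acyclic. I would resolve this by an end-component analysis of $\sgamestrat$: a zero-reward end component lying entirely in the positive-$x$ region would let a compliant controller and an adversarial environment loop inside forever without accruing reward, contradicting $x_s>0$ on that component; hence successors leading out of every such component can always be selected, and finiteness of $S$ then yields the required ranking $\gamma$.
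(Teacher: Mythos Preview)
Your overall architecture matches the paper's: encode $\mstrat$ via the $y$-variables, let $x$ carry the worst-case values, use $\alpha,\beta,\gamma$ as a ranking certificate, and invoke Lemma~\ref{lem:value-soln} in both directions. The forward direction is essentially the same as the paper's; your end-component justification for the existence of a ranking is a reasonable alternative to the paper's distance function $d_s$, and your WLOG step (allowing everything at zero-$x$ states) is fine for the optimal-penalty reading of the theorem.

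There is, however, a genuine gap in your backward direction. You argue that from any $s$ with $x_s>0$ you can follow the $\beta$-successor along zero-reward edges and, because $\gamma$ strictly decreases, the chain terminates at a positive-reward action. But nothing prevents the $\beta$-successor from being a state $t$ with $x_t=0$: the constraints only force $\alpha_t=1$ when $x_t>0$, so at such a $t$ one may have $\alpha_t=0$, in which case \eqref{eq:path-b} forces all $\beta_{t,a,\cdot}=0$ and your successor map is undefined there. Concretely, take $s$ with a single allowed zero-reward action whose distribution is $\tfrac12$ to $t$ (positive reward next step) and $\tfrac12$ to $u$ (zero-reward self-loop, so $x_u=0$); a feasible MILP solution may set $\beta_{s,a,u}=1$, and your chain then stalls at $u$ without ever exhibiting the path $s\to t$ that actually witnesses the hypothesis of Lemma~\ref{lem:value-soln}.

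The paper closes this gap by a lexicographic argument on $(x,\gamma)$ rather than on $\gamma$ alone, after first passing to a game $\sgame'$ in which zero-$x$ states are made into sinks: at each step, if some successor $s'$ of the chosen action has $x_{s'}>x_{s_i}$, move there; only when all successors share the current $x$-value (which constraints \eqref{eqn:det-rew1}--\eqref{eqn:det-rew2} force once the action has zero reward and no successor has strictly larger $x$) does one follow the $\beta$-successor and gain a strict $\gamma$-decrease. This keeps the constructed path inside the positive-$x$ region and guarantees termination because no state repeats. Your argument needs exactly this refinement.
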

\begin{proof}
We prove that every multi-strategy $\mstrat$ induces a satisfying assignment to the variables such that the static penalty under $\mstrat$ is
$\sum\nolimits_{s \in S_{\Diamond}} \sum\nolimits_{a \in A(s)} (1-y_{s,a}){\cdot}\penbase(s,a)$, and vice versa.
The theorem then follows from the rescaling of rewards and penalties that we performed.

We start by proving that, given a sound multi-strategy $\mstrat$,
we can construct a satisfying assignment
$\{\bar y_{s,a},\bar x_s,\bar \alpha_s,\bar \beta_{s,a,t},\bar \gamma_t\}_{s,t\in S,a\in\act}$
to the constraints from~\figref{lp_encoding_static}.
For $s\in S_\Diamond$ and $a\in A(s)$ we set $\bar y_{s,a}=1$ if $a \in \mstrat(s)$, and otherwise we set
$\bar y_{s,a}=0$. This gives satisfaction of contraint~\eqnref{eqn:det-mstrat}.
For $s\in S_\Box$ and $a\in A(s)$ we set $\bar y_{s,a}=1$, ensuring satisfaction of \eqnref{eq:path-b2}.
We then put $\bar x_s = \inf_{\sigma\Compl\mstrat, \pi}E_{\sgame,s}^{\sigma,\pi}(r)$.
By the first part of \lemref{lem:value-soln} we get that constraints~\eqnref{eqn:det-fst},
\eqnref{eqn:det-rew1} (for $a\in \mstrat(s)$) and \eqnref{eqn:det-rew2} are satisfied.
Constraint~\eqnref{eqn:det-rew1} for $a\notin\mstrat(s)$ is satisfied because
in this case $\bar y_{s,a} = 0$, and so the right-hand side is always at least $1$.

We further set $\bar\alpha_s = 1$ if $x_s > 0$ and $\bar\alpha_s = 0$ if $x_s=0$, thus satisfying constraint~\eqnref{eq:path-a}. For a state $s$, let $d_s$ be the maximum distance
to a positive reward. Formally, the values $d_s$ are defined inductively by putting $d_s=0$ for any state $s$ such that
we have $r(s,a) > 0$ for all $a\in \enab{s}$, and then for any other state $s$:
\begin{align*}
 d_s &= 1 + \min_{a\in\mstrat(s),r(s,a)=0}\ \max_{\delta(s,a)(t) > 0}d_t\quad\text{if $s\in S_\Diamond$}\\
 d_s &= 1 + \min_{a\in\act(s),r(s,a)=0}\ \max_{\delta(s,a)(t) > 0}d_t\quad\text{if $s\in S_\Box$}
\end{align*}
Put $d_s = \bot$ if $d_s$ was not defined by the above.
For $s$ such that $d_s \neq \bot$, we put
$\bar\gamma_s = d_s/|S|$, and for every $a$ we choose
$t$ such that $d_t<d_s$, and set $\bar\beta_{s,a,t} = 1$, leaving $\bar\beta_{s,a,t}=0$ for all other $t$.
For $s$ such that $d_s = \bot$ we define $\bar\gamma_s=0$ and for all $a$ and
$t$ put $\bar\beta_{s,a,t} = 0$. This ensures the satisfaction of the remaining constraints.

In the opposite direction, assume that we are given a satisfying assignment.
Firstly, we create a game $\sgame'$ from $\sgame$ by making
any states $s$ with $\bar x_s=0$ sink states (i.e. imposing a self-loop with no penalty on $s$ and removing all other transitions). Any sound multi-strategy
$\mstrat$ for $\phi$ in $\sgame'$ directly gives a sound multi-strategy $\mstrat'$ for $\phi$ in $\sgame$ defined by $\mstrat'(s) = \mstrat(s)$ for states $s\in S_\Diamond$ with $x_s>0$, and otherwise letting $\mstrat$ allow all available actions.

We construct $\mstrat$ for $\sgame'$ by putting
$\mstrat(s) = \{a \in A(s) \mid \bar y_{s,a} = 1\}$
for all $s \in S_\Diamond$ with $\bar x_s > 0$, and by allowing the self-loop in the states $s\in S_\Diamond$ with $x_s = 0$; note that $\mstrat(s)$ is non-empty by constraint~\eqnref{eqn:det-mstrat}.
First, by definition, the multi-strategy yields the penalty $\sum\nolimits_{s \in S_{\Diamond}} \sum\nolimits_{a \in A(s)} (1-\bar y_{s,a}){\cdot}\penbase(s,a)$.
Next, we will show that $\mstrat$ satisfies the assumption of
the second part of \lemref{lem:value-soln},
from which we get that:
\[
 \inf_{\sigma\Compl\mstrat, \pi}E_{\sgame',s}^{\sigma,\pi}(r)
 \ge \bar x_s
\]
which, together with constraint~\eqnref{eqn:det-fst} being satisfied, gives us the desired result.

Consider any $s$ such that
$\inf_{\sigma\Compl\mstrat, \pi}E_{\sgame',s}^{\sigma,\pi}(r) > 0$.
Then we have $\bar x_s > 0$ (by the definition of $\sgame'$). Let us fix any $\sigma\Compl\mstrat$ and any $\pi$,
and let $s_0 = s$.
We show that there is a path
$\omega$ satisfying the assumption of the lemma. We build $\omega = s_0\ldots s_n a_n$ inductively,
to satisfy: (i) $r(s_n,a_n) >0$,  (ii)
$\bar x_{s_i} \ge \bar x_{s_{i-1}}$ for all $i$, and (iii) for any sub-path $s_i a_i \ldots s_j$ with
$\bar x_{s_i} = \bar x_{s_j}$ we have that $\bar\gamma_{s_k} < \bar\gamma_{s_{k-1}}$ for all $i+1\le k \le j$.

Assume we have defined a prefix $s_0a_0\ldots s_i$ to satisfy conditions (ii) and (iii). We put $a_i$
to be the action picked by $\sigma$ (or $\pi$) in $s_i$.
If $r(s_i,a_i) > 0$, we are done.
Otherwise, we pick $s_{i+1}$ as follows:
\begin{itemize}
 \item If there is $s'\in\support{\delta(s_{i},a_i)}$ with $\bar x_{s'} > \bar x_s$, then we put $s_{i+1} = s'$.
  Such a choice again satisfies (ii) and (iii) by definition.
 \item If we have $\bar x_{s'} = \bar x_s$ for all $s'\in\support{\delta(s_{i},a_i)}$, then any choice will satisfy (ii). To satisfy the other conditions, we pick $s_{i+1}$ so that $\bar\beta_{s_i,a_i,s_{i+1}} = 1$
       is true. We argue that such an $s_{i+1}$ can be chosen. We have $\bar x_{s_i} > 0$ and so $\bar\alpha_s = 1$ by
       constraint~\eqnref{eq:path-a}. We also have $\bar y_{s,a} = 1$:
       for $s\in S_\Diamond$ this follows from the definition of $\mstrat$, for $s\in S_\Box$ from constraint~\eqnref{eq:path-b2}.
       Hence, since constraint~\eqnref{eq:path-b} is satisfied, there must be $s_{i+1}$ such that $\bar\beta_{s_i,a,s_{i+1}} = 1$.
       Then, we apply constraint~\eqnref{eq:path-c} (for $s=s_i$, $t=s_{i+1}$ and $a=a_i$) and,
       since the last two summands on the right-hand side
       are $0$, we get $\bar\gamma_{s_{i+1}} < \bar\gamma_{s_i}$, thus satisfying (iii).
\end{itemize}
Note that the above construction must terminate after at most $|S|$ steps since,
due to conditions (ii) and (iii), no state repeats on $\omega$.
Because the only way of terminating is satisfaction of (i), we are done.
\end{proof}

\subsection{Deterministic Multi-Strategies with Dynamic Penalties}\label{sec:detsynthdyn}
Next, we show how to compute a sound and optimally permissive \emph{deterministic} multi-strategy
for a \emph{dynamic} penalty scheme $(\penbase,\dyn)$.
This case is more subtle since the optimal penalty can be infinite.
Hence, our solution proceeds in two steps as follows.

Initially, we determine if there is {\em some} sound multi-strategy.
For this, we just need to check for the existence of a sound strategy,
using standard algorithms for solution of stochastic games~\cite{Con93,FV97}.
If there is no sound multi-strategy, we are done.
Otherwise,
we use the MILP problem in \figref{lp_encoding_dynamic}
to determine the penalty for an optimally permissive sound multi-strategy.
This MILP encoding extends the one in~\figref{lp_encoding_static} for static penalties,
adding variables $\ell_s$ and $z_s$,
representing the local and the expected penalty in state $s$,
and three extra sets of constraints.
First, \eqnref{eq:dyn-diamond-a} and \eqnref{eq:dyn-diamond-b} define the expected penalty in controller states,
which is the sum of penalties for all disabled actions and those in the successor states,
multiplied by their transition probabilities.
The behaviour of environment states is then captured by
constraint~\eqnref{eq:dyn-box}, where we only maximise the penalty,
without incurring any penalty locally.

The constant $c$ in \eqnref{eq:dyn-diamond-b} is chosen to be no lower than any
{\em finite} penalty achievable by a deterministic multi-strategy, a possible value being:
\begin{equation}
 \sum_{i=0}^\infty (1-p^{|S|})^i \cdot p^{|S|} \cdot i\cdot |S| \cdot \penmax\label{eqn:max-pen}
\end{equation}
where $p$ is the smallest non-zero probability assigned by $\delta$,
and $\penmax$ is the maximal local penalty over all states.
To see that \eqnref{eqn:max-pen} indeed gives a safe bound on $c$ (i.e. it is lower than any finite penalty achievable), observe that
for the penalty to be finite under a deterministic multi-strategy, for every
state $s$ there must be a path of length at most $|S|$ to a state from which no penalty
will be incurred. This path has probability at least $p^{|S|}$, and since the penalty accumulated
along a path of length $i\cdot |S|$ is at most $i\cdot |S| \cdot \penmax$, the properties of \eqnref{eqn:max-pen}
follow easily.

If the MILP problem has a solution, this is the
optimal dynamic penalty over all sound multi-strategies.
If not, no deterministic sound multi-strategy has a finite penalty
and the optimal penalty is $\infty$\label{page:inf-penalty}
(recall that we already established there is {\em some} sound multi-strategy).
In practice, we might choose a lower value of $c$ than the one above,
resulting in a multi-strategy that is sound, but possibly not optimally permissive.

\startpara{Correctness}
Formally, correctness of the MILP encoding for the case of dynamic penalties is captured by the following theorem.

\begin{thm}\label{thm:main-dynamic}
Let $\sgame$ be a game, $\phi=\property{r}{b}$ a property and $(\penbase,\dyn)$ a \emph{dynamic} penalty scheme.
Assume there is a sound multi-strategy for $\phi$.
The MILP formulation from~\figref{lp_encoding_dynamic} satisfies: (a) there is no solution if and only if the optimally permissive deterministic multi-strategy yields infinite penalty; and (b) there is a solution $\bar z_\sinit$ if and only if an optimally permissive deterministic multi-strategy yields
penalty $\bar z_\sinit$.
\end{thm}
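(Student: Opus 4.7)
The plan is to extend the proof of \thmref{thm:main-static} by handling the two new families of variables ($\ell_s$ and $z_s$) together with the constraints \eqnref{eq:dyn-diamond-a}--\eqnref{eq:dyn-box}, splitting the argument into the finite-penalty case (which establishes part~(b)) and the infinite-penalty case (which establishes part~(a)). For the multi-strategy-to-assignment direction, given a sound deterministic $\mstrat$ with finite dynamic penalty, I would re-use the assignments to $y,x,\alpha,\beta,\gamma$ from \thmref{thm:main-static} (which already secures \eqnref{eqn:det-fst}--\eqnref{eq:path-c}), define $\bar\ell_s$ to be the local penalty $\pendef{\penbase}{loc}{\mstrat,s}$ so that \eqnref{eq:dyn-diamond-a} holds by construction, and set $\bar z_s = \sup\{E_{\sgame,s}^{\sigma,\envstrat}(\penrew{\mstrat}) \mid \sigma\Compl\mstrat,\,\envstrat\in\strats_\sgame^\Box\}$. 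Standard Bellman characterisations for maximum expected total reward then yield \eqnref{eq:dyn-box} as well as \eqnref{eq:dyn-diamond-b} for $a\in\mstrat(s)$; for $a\notin\mstrat(s)$, so $\bar y_{s,a}=0$, the slack term $-c$ on the right-hand side of \eqnref{eq:dyn-diamond-b} renders the inequality trivial, provided $c$ is chosen at least as large as any finite dynamic penalty achievable by a deterministic sound multi-strategy.

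Conversely, given a satisfying assignment, I would recover a multi-strategy $\mstrat$ from $\bar y$ exactly as in \thmref{thm:main-static} and inherit soundness together with the bound \eqnref{eqn:det-fst} from that proof. Restricted to those $a$ with $\bar y_{s,a}=1$, the constraints \eqnref{eq:dyn-diamond-b} become the Bellman lower bounds $\bar z_s \geq \sum_t \delta(s,a)(t)\bar z_t + \bar\ell_s$, which combined with \eqnref{eq:dyn-box} are exactly the Bellman inequalities for the supremum of expected total reward $\penrew{\mstrat}$ in the induced MDP $\sgamestrat$ (both players maximise penalty). Since the MILP objective minimises $z_\sinit$, the optimal $\bar z_\sinit$ equals the least fixed point of this system, which is precisely $\pendef{\penbase}{\dyn}{\mstrat}$ by standard MDP theory \cite[Sec.~7]{Put94}, giving part~(b). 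For part~(a), assuming a sound multi-strategy exists but every deterministic sound $\mstrat$ has infinite penalty, the associated Bellman lower-bound system has no solution over $\Rset$ for every $\bar y$ encoding a sound multi-strategy, hence the MILP is infeasible; conversely, any feasible solution directly exhibits a sound multi-strategy of finite penalty $\bar z_\sinit$.

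The principal obstacle is the careful choice and justification of the constant $c$ in \eqnref{eqn:max-pen}: it must be large enough that for any $\bar y$ encoding a finite-penalty sound multi-strategy, disallowing an action in \eqnref{eq:dyn-diamond-b} vacuously satisfies the inequality, yet its presence must not artificially cap the achievable penalty. I would verify \eqnref{eqn:max-pen} along the lines already sketched in the text: under any deterministic sound multi-strategy with finite penalty, starting from every state the process reaches a penalty-free region within $|S|$ steps with probability at least $p^{|S|}$, and summing $i\cdot|S|\cdot\penmax$ weighted by the geometric tail probability $(1-p^{|S|})^i p^{|S|}$ yields \eqnref{eqn:max-pen}; combined with the observation that $\bar z_t$ is pinned to the actual (finite) expected penalty from $t$, this shows the $-c$ slack dominates $\sum_t \delta(s,a)(t)\bar z_t + \bar\ell_s$. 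A secondary technical issue, handled analogously to the zero-reward-loop treatment via $\alpha,\beta,\gamma$ in \thmref{thm:main-static}, is arguing that minimising $\bar z_\sinit$ genuinely selects the least fixed point of the Bellman system rather than a strictly larger solution; this follows since the constraints \eqnref{eq:dyn-diamond-b}--\eqnref{eq:dyn-box} are one-sided and the MILP objective drives $\bar z_\sinit$ downward.
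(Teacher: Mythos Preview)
Your proposal is correct and follows essentially the same approach as the paper: both extend the proof of \thmref{thm:main-static} unchanged for constraints \eqnref{eqn:det-fst}--\eqnref{eq:path-c}, set $\bar\ell_s=\pendef{\penbase}{loc}{\mstrat,s}$ and $\bar z_s=\sup_{\sigma\Compl\mstrat,\pi}E_{\sgame,s}^{\sigma,\pi}(\penrew{\mstrat})$ in the forward direction, and in the converse direction argue that the MILP constraints \eqnref{eq:dyn-diamond-b}--\eqnref{eq:dyn-box} coincide (via the $-c$ slack and the choice of $c$) with the Bellman inequalities whose least solution is the dynamic penalty, citing \cite[Sec.~7.2.7]{Put94}. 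Your remark that the $\alpha,\beta,\gamma$ machinery is \emph{not} needed for the $z$-variables because the objective directly minimises $z_\sinit$ is exactly right and matches the paper's (implicit) reasoning; the paper phrases the converse step as an equivalence of constraint systems rather than a least-fixed-point argument, but the content is the same.
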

\begin{proof}
We show that any sound multi-strategy with finite penalty $\bar z_\sinit$ gives rise to a satisfying assignment with the objective value $\bar z_\sinit$, and vice versa. Then, (b) follows directly, and (a) follows by the assumption that there is \emph{some} sound multi-strategy.

Let us prove that for any sound multi-strategy $\mstrat$ we can construct a satisfying assignment to
the constraints. For constraints \eqnref{eqn:det-fst} to \eqnref{eq:path-c}, the construction works exactly the same as in the proof of \thmref{thm:main-static}.
For the newly added variables, i.e. $z_s$ and $\ell_s$, we put $\bar\ell_s = \pendef{\penbase}{loc}{\mstrat,s}$, ensuring satisfaction of constraint~\eqnref{eq:dyn-diamond-a}, and:
\[
\bar z_s = \sup_{\sigma\Compl\mstrat,\pi} E_{\sgame,s}^{\ctrlstrat,\envstrat}(\penrew{\mstrat})
\]
which, together with \cite[Section 7.2.7, Equation 7.2.17]{Put94} (giving characterisation of optimal reward in terms of a linear program),
ensures that constraints~\eqnref{eq:dyn-diamond-b} and \ref{eq:dyn-box} are satisfied.

In the opposite direction, given a satisfying assignment we construct
$\mstrat$ for $\sgame'$ exactly as in the proof of \thmref{thm:main-static}. As before, we can argue that constraints \eqnref{eqn:det-fst} to \eqnref{eq:path-c}
are satisfied under any sound multi-strategy.
We now need to argue that the multi-strategy satisfies $\pendef{\penbase}{\dyn}{\mstrat,s} \ge \bar z_{\sinit}$.
It is easy to see that $\pendef{\penbase}{loc}{\mstrat,s} = \bar \ell_s$.
Moreover, by \cite[Section 7.2.7, Equation 7.2.17]{Put94} the penalty is the least solution to the
inequalities:
  \begin{align}
    z'_s &\ge \max_{a\in \mstrat(s)} \sum_{s' \in S} \delta(s,a)(s)\cdot z'_{s'} + \bar\ell_s&\text{for all $s\in S_\Diamond$}\label{eqn:dyn-proof-a}\\
    z'_s &\ge \max_{a\in \act(s)} \sum_{s' \in S} \delta(s,a)(s)\cdot z'_{s'}&\text{for all $s\in S_\Box$}\label{eqn:dyn-proof-b}
  \end{align}
We can replace \eqnref{eqn:dyn-proof-a} with:
\begin{align}
 z'_s \ge \max_{a\in \act(s)} \sum_{s' \in S} \delta(s,a)(s)\cdot z'_{s'} + \bar\ell_s - c\cdot(1-\bar y_{s,a})\label{eqn:dyn-proof-c}
\end{align}
since for $a\in \mstrat(s)$ we have $c\cdot(1 - \bar y_{s,a}) = 0$ and
otherwise $c\cdot(1-\bar y_{s,a})$ is greater than
$\sum_{s' \in S} \delta(s,a)(s)\cdot z'_{s'} + \bar\ell_s$
in the least solution to ~\eqnref{eqn:dyn-proof-a} and~\eqnref{eqn:dyn-proof-b}, by the definition of $c$.
Finally, it suffices to observe that the set of solutions to
\eqnref{eqn:dyn-proof-b} and~\eqnref{eqn:dyn-proof-c}
is the same as the set of solutions to \eqnref{eq:dyn-diamond-b} and~\eqnref{eq:dyn-box}.
\end{proof}




\subsection{Approximating Randomised Multi-Strategies}\label{sec:approximation}

In \sectref{sec:permissive_cont_synthesis}, we showed that
randomised multi-strategies can outperform deterministic ones.
The MILP encodings in \figfigref{lp_encoding_static}{lp_encoding_dynamic}, though,
cannot be adapted to the randomised case, since this would need non-linear constraints
(intuitively, we would need to multiply expected total rewards by probabilities of actions being allowed under a multi-strategy,
and both these quantities are unknowns in our formalisation).
%
Instead, in this section, we propose an \emph{approximation} which
finds the optimal randomised multi-strategy $\mstrat$ in which each
probability $\mstrat(s)(B)$ is a multiple of $\frac{1}{\gran}$ for a given
\emph{granularity} $\gran$. Any such multi-strategy can then be simulated
by a deterministic one on a transformed game, allowing synthesis to be carried out using
the MILP-based methods described in the previous section. Before giving
the definition of the transformed game, we show that we can simplify our
problem by restricting to multi-strategies
which in any state select at most two actions with non-zero probability.

\begin{thm}\label{2_set_rand_chain_condition}
Let $\sgame$ be a game, $\phi=\property{r}{b}$ a property, and $(\penbase,\pentype)$ a (static or dynamic) penalty scheme. For any sound
multi-strategy $\mstrat$ we can construct another sound multi-strategy
$\mstrat'$ such that $\pendef{\penbase}{\pentype}{\mstrat} \geq \pendef{\penbase}{\pentype}{\mstrat'}$ and
$|\support{\mstrat'(s)}| \leq 2$ for any $s \in S_{\Diamond}$.
\end{thm}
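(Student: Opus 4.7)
The plan is to reduce the support one state at a time: if some state $s \in S_\Diamond$ has $|\support{\mstrat(s)}| \geq 3$, I exhibit a modification of $\mstrat$ only at $s$ which has strictly smaller support there, remains sound, and does not increase the global penalty. Since $|S_\Diamond|$ and each $|A(s)|$ are finite, iterating this single-state reduction terminates in finitely many steps and yields the required $\mstrat'$.

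Fix such an $s$ and write $\mstrat(s) = \sum_{i=1}^{k} p_i \cdot \mathbf{1}_{B_i}$ with $p_i > 0$ and $k \geq 3$. I consider perturbations $p \mapsto p + \eta$ constrained to $\sum_i \eta_i = 0$ and $p + \eta \geq 0$, leaving $\mstrat$ unchanged at all other states. The static global penalty is linear in $\eta$, being a $\penbase$-weighted sum of marginals $q_a(\eta) = \sum_{i:\,a \in B_i}(p_i + \eta_i)$ taken over states. The dynamic global penalty is piecewise linear in $\eta$, linear on any neighbourhood where the pair $(\sigma,\pi)$ maximising $E_{\sgame,\sinit}^{\sigma,\pi}(\penrew{\mstrat_\eta})$ over $\sigma \Compl \mstrat_\eta$ and memoryless $\pi$ is fixed. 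Soundness for $\phi$ reduces, via the memoryless-suffices fact noted after \defref{defn:soundstrat}, to a finite system of linear inequalities in $\eta$ of the form $E_{\sgame,\sinit}^{\sigma,\pi}(r) \geq b$ taken over the finitely many memoryless pairs with $\sigma \Compl \mstrat_\eta$.

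Within the $(k-1)$-dimensional hyperplane $\{\eta : \sum_i \eta_i = 0\}$, the non-increasing-penalty constraint and the soundness constraints together define a polyhedral cone $\mathcal{Q}$ containing the origin. I then either (i) find a direction $\eta^\star \in \mathcal{Q}$ with a strictly negative coordinate, which exists whenever $\mathcal{Q}$ has positive dimension since any positive-dimensional cone inside $\{\sum_i \eta_i = 0\}$ contains vectors with both positive and negative entries, and scale until $p + t\eta^\star$ hits the simplex boundary $p_i + t\eta^\star_i = 0$, reducing the support; or (ii) if $\mathcal{Q}$ collapses, argue that $\mstrat(s)$ already lies on a low-dimensional face of the sound, penalty-bounded polytope and pass to an extreme point of this face, which has at most two strictly positive coordinates by a standard extreme-point count (an extreme point in the $(k-1)$-dimensional simplex face with at most one additional active non-simplex equality has at most two nonzero coordinates).

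The main obstacle is the degenerate case where many soundness constraints are simultaneously tight at $\mstrat(s)$, making the feasible cone $\mathcal{Q}$ shrink; the argument then requires carefully managing the codimension to extract a support-$2$ extreme point without violating any active soundness inequality. For the dynamic-penalty setting, the piecewise-linearity is handled by restricting each perturbation to a region where the worst-case strategy pair is fixed, performing the reduction in that linear region, and patching along the pieces. Iterating the single-state reduction across all $s \in S_\Diamond$ finitely many times produces the required $\mstrat'$ with $|\support{\mstrat'(s)}| \leq 2$ at every state.
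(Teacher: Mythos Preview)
Your state-by-state reduction plan matches the paper's, but the core linearity claims are false in the presence of cycles through $s$, and this is a genuine gap. The quantity $\inf_{\sigma\Compl\mstrat_\eta,\pi}E_{\sgame,\sinit}^{\sigma,\pi}(r)$ is not (piecewise) linear in $\eta$: changing the probabilities at $s$ changes the value $x_s$, which feeds back into the values $x_{s'}$ of successors that can return to $s$, and the resulting fixpoint is rational rather than linear in $\eta$. For a concrete two-set instance, take a single $\Diamond$-state $s$ with $B_1=\{a\}$, $B_2=\{b\}$, where $a$ returns to $s$ with probability $\tfrac12$ and otherwise terminates with reward $0$, and $b$ terminates with reward $1$; then with $\mstrat(s)(B_1)=p$ the worst-case reward is $(2-2p)/(2-p)$, which is strictly concave. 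The same recursion makes the dynamic penalty nonlinear. Your appeal to ``finitely many memoryless pairs $(\sigma,\pi)$ with $\sigma\Compl\mstrat_\eta$'' does not repair this: compliant $\sigma$ are randomised and form a continuum depending on $\eta$, while for any fixed $(\sigma,\pi)$ the value $E_{\sgame,\sinit}^{\sigma,\pi}(r)$ is a constant, not a linear function of $\eta$.

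The paper decouples the two difficulties. It defines $R_i,P_i$ as the reward and penalty after choosing $B_i$ \emph{computed under the original $\mstrat$}, so that $R=\sum_i\lambda_iR_i$ and $P=\sum_i\lambda_iP_i$ genuinely are linear, and a two-point support $(u,v,\alpha)$ with $\alpha R_u+(1-\alpha)R_v\ge R$ and $\alpha P_u+(1-\alpha)P_v\le P$ is obtained by elementary convex geometry in the plane. It then proves separately that the reward $R'$ under the \emph{new} multi-strategy is at least $R$ by a contraction argument: writing $c'_i$ for the probability of returning to $s$ after $B_i$, one obtains $(1-\alpha c'_u-(1-\alpha)c'_v)(R'-R)\ge 0$, and concludes $R'\ge R$ provided $\alpha c'_u+(1-\alpha)c'_v<1$. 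This last strict inequality is ensured by always choosing one of $B_u,B_v$ to be ``progressing'' (unable to return to $s$ almost surely with zero reward), which the paper arranges in its case analysis on the shape of the polygon $T$. This progressing condition and the accompanying fixpoint comparison are precisely what your polyhedral argument is missing.
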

%

%

%



\begin{proof}
If the (dynamic) penalty under $\mstrat$ is infinite,
then the solution is straightforward: we can simply take $\mstrat'$ which, in every state,
allows a single action so that the reward is maximised. This restrictive multi-strategy enforces
a strategy that maximises the reward (so it performs at least as well as any other multi-strategy),
and at the same time it cannot yield the dynamic penalty worse than $\mstrat$, as the dynamic penalty under $\mstrat$
is already infinite.
From now on, we will assume that the penalty is finite.

Let $\mstrat$ be a multi-strategy allowing $n>2$ different sets
$A_1,\dots,A_n$ with non-zero probabilities $\lambda_1,\dots,\lambda_n$
in $s_1 \in S_{\Diamond}$. We construct a multi-strategy $\mstrat'$ that in
$s_1$ allows only two of the sets $A_1,\ldots,A_n$ with non-zero probability, and in other states behaves like $\mstrat$.

We first prove the case of dynamic penalties and then describe the differences for static penalties. Supposing that
$\inf_{\ctrlstrat \Compl\mstrat,\envstrat} E_{\sgame,s_1}^{\ctrlstrat,\envstrat}(r)
\le
\inf_{\ctrlstrat \Compl\mstrat',\envstrat} E_{\sgame,s_1}^{\ctrlstrat,\envstrat}(r)
$,
we have that the total reward is:
\begin{align*}
 \inf_{\ctrlstrat \Compl\mstrat,\envstrat}
  E_{\sgame,\sinit}^{\ctrlstrat,\envstrat}(r)
  & = \inf_{\ctrlstrat\Compl\mstrat,\envstrat} \Big( E_{\sgame,\sinit}^{\ctrlstrat,\envstrat}(r{\downarrow}s_1) + \Prb_{\sgame,\sinit}^{\ctrlstrat,\envstrat}(\future s_1)
   \cdot E_{\sgame,s_1}^{\ctrlstrat,\envstrat}(r)\Big)\\
  & = \inf_{\ctrlstrat\Compl\mstrat,\envstrat} \Big( E_{\sgame,\sinit}^{\ctrlstrat,\envstrat}(r{\downarrow}s_1)
   + \Prb_{\sgame,\sinit}^{\ctrlstrat,\envstrat}(\future s_1)
   \cdot \inf_{\ctrlstrat'\Compl\mstrat,\envstrat'} E_{\sgame,s_1}^{\ctrlstrat',\envstrat'}(r)\Big)\\
  & \le \inf_{\ctrlstrat\Compl\mstrat,\envstrat} \Big( E_{\sgame,\sinit}^{\ctrlstrat,\envstrat}(r{\downarrow}s_1)
   + \Prb_{\sgame,\sinit}^{\ctrlstrat,\envstrat}(\future s_1)
   \cdot \inf_{\ctrlstrat'\Compl\mstrat',\envstrat'} E_{\sgame,s_1}^{\ctrlstrat',\envstrat'}(r)\Big)\\
  & = \inf_{\ctrlstrat\Compl\mstrat',\envstrat} \Big( E_{\sgame,\sinit}^{\ctrlstrat,\envstrat}(r{\downarrow}s_1)
   + \Prb_{\sgame,\sinit}^{\ctrlstrat,\envstrat}(\future s_1)
   \cdot \inf_{\ctrlstrat'\Compl\mstrat',\envstrat'} E_{\sgame,s_1}^{\ctrlstrat',\envstrat'}(r)\Big)\tag{$\ast$}\\
   &= \inf_{\ctrlstrat\Compl\mstrat',\envstrat}E_{\sgame,\sinit}^{\ctrlstrat,\envstrat}(r)
\end{align*}
where the equation ($\ast$) above follows by the fact that, up to the first time $s_1$ is reached, $\mstrat$ and $\mstrat'$
allow the same actions.
Hence, it suffices to define $\mstrat'$ so that
$\inf_{\ctrlstrat\Compl\mstrat,\envstrat} E_{\sgame,s_1}^{\ctrlstrat,\envstrat}(r)
\le
\inf_{\ctrlstrat\Compl\mstrat',\envstrat} E_{\sgame,s_1}^{\ctrlstrat,\envstrat}(r)
$. Similarly, for the penalties, it is enough to ensure
$\sup_{\ctrlstrat\Compl\mstrat,\envstrat} E_{\sgame,s_1}^{\ctrlstrat,\envstrat}(\penrew{\mstrat})
\ge
\sup_{\ctrlstrat\Compl\mstrat',\envstrat} E_{\sgame,s_1}^{\ctrlstrat,\envstrat}(\penrew{\mstrat'})
$.

Let $P_i$ and $R_i$, where $i \in \{1, ..., n\}$, be the penalties and
rewards from $\mstrat$ after allowing $A_i$ against an optimal
opponent strategy, i.e.:
\begin{align*}
 P_i & = \sum_{a\notin A_i}\penbase(s_1,a) + \sup_{\ctrlstrat\Compl\mstrat,\envstrat}\max_{a\in A_i}\sum_{s'\in S}\delta(s_1,a)(s')\cdot E_{\sgame,s'}^{\ctrlstrat,\envstrat}(\penrew{\mstrat})\\
 R_i & = \inf_{\ctrlstrat\Compl\mstrat,\envstrat}\min_{a\in A_i}(r(s_1,a) + \sum_{s'\in S}\delta(s_1,a)(s')\cdot E_{\sgame,s'}^{\ctrlstrat,\envstrat}(r))
\end{align*}
We also define $R=\inf_{\ctrlstrat\Compl\mstrat,\envstrat} E_{\sgame,s_1}^{\ctrlstrat,\envstrat}(r)$
and $P=\sup_{\ctrlstrat\Compl\mstrat,\envstrat} E_{\sgame,s_1}^{\ctrlstrat,\envstrat}(\penrew{\mstrat})$
and have $R = \sum_{i=1}^n \lambda_i R_i$ and $P = \sum_{i=1}^n \lambda_i P_i$.


%
Let $S_0\subseteq S$ be those states for which there are
$\ctrlstrat\Compl\mstrat$ and $\pi$
ensuring a return to $s_1$ without accumulating any reward.
Formally, $S_0$ contains all states $s_0$ which satisfy $\Prb_{\sgame,s_0}^{\ctrlstrat,\envstrat}(\future \{s_1\})=1$ and $E_{\sgame,s_0}^{\ctrlstrat,\envstrat}(r{\downarrow}s_1)$ for some $\ctrlstrat\Compl\mstrat$ and $\pi$.
%
We say that $A_i$ is {\em progressing} if for
all $a\in A_i$ we have $r(s_1,a)>0$ or $\support{\delta(s_1,a)}\not\subseteq S_0$.
We note that $A_i$ is progressing whenever
$R_i>R$ (since any $a$ violating the condition above could have been used by the
opponent to force $R_i\le R$).

For each tuple $\mu = (\mu_1,\dots,\mu_n)\in\mathbb{R}^n$, let
$R^\mu = \mu_1R_1+\dots+\mu_nR_n$ and $P^\mu =
\mu_1P_1+\dots+\mu_nP_n$. Then the set $T = \{(R^\mu, P^\mu) \mid 0\le\mu_i\le
1, \mu_1+\dots+\mu_n=1\}$ is a bounded convex polygon, with vertices
given by images $(R^{e_i},P^{e_i})$ of unit vectors (i.e., Dirac distributions)
$e_i=(0,\dots,0,1,0,\dots,0)$, and containing
$(R^\lambda,P^\lambda) = (R,P)$. To each vertex
$(R_j,P_j)$ we associate the (non-empty) set $I_j = \{i\mid (R^{e_i},P^{e_i}) = (R_j,P_j)\}$ of
indices.

We will find $\alpha \in (0,1)$ and $1\le u,v \le n$ such that one of $A_u$ or $A_v$ is progressing, and define the multi-strategy $\mstrat'$ to pick $A_u$ and $A_v$ with
probabilities $\alpha$ and $1-\alpha$, respectively. We distinguish several cases, depending on the shape of $T$:
\begin{enumerate}
  \item $T$ has non-empty interior. Let $(R_1,P_1),\dots,(R_m,P_m)$ be
    its vertices in the anticlockwise order. Since all $\lambda_i$ are positive,
    $(R,P)$ is in the interior of $T$.
    Now consider the point $(R,P')$ directly below
    $(R,P)$ on the boundary of $T$, i.e. $P' = \min\{P''\mid (R,P'')\in T\}$.
    If $(R,P')$ is not a vertex, it is a
    convex combination of adjacent vertices $(R,P')=\alpha(R_j,P_j) + (1-\alpha)(R_{j+1},P_{j+1})$,
    and we pick such $\alpha$ and $u\in I_j$ and $v\in I_{j+1}$.
    If $(R,P')$ happens to be a vertex $(R_j,P_j)$ we can
    (since $P_j<P$) instead choose sufficiently small $\alpha>0$ so that $R \ge \alpha R_j + (1-\alpha) R_{j+1}$
    and $P \le \alpha P_j + (1-\alpha) P_{j+1}$ and again pick $u\in I_j$ and $v\in I_{j+1}$.
    In either case, we necessarily have $R_{j+1} > R$ (by ordering of the vertices in the anticlockwise order and since $(R,P)$ is in the interior of $T$),
    and so $A_v$ is progressing.

  \item $T$ is a vertical line segment, i.e. it is the convex hull of
    two extreme points $(R,P_0)$ and $(R,P_1)$ with $P_0<P_1$.  In
    case $R=0$, we can simply always allow some $A_i$ with $i\in I_0$,
    minimising the penalty and still achieving reward $0$.

    If $R>0$, there must be at least one progressing $A_u$.
    Since all $\lambda_i$ are positive, $(R,P)$ lies inside the
    line segment, and in particular $P>P_0$.  We can therefore choose
    some $v$ and $\alpha\in (0,1)$ such that $P \le \alpha \cdot P_u + (1-\alpha) \cdot P_v$.
  \item $T$ is a non-vertical line segment, i.e. it is the convex hull
    of two extreme points $(R_0,P_0)$ and $(R_1,P_1)$ with $R_0<R_1$.
    Since all $\lambda_i$ are positive, $(R,P)$ is not one of the
    extreme points, i.e. $(R,P) = \alpha(R_0,P_0) +
    (1-\alpha)(R_1,P_1)$ with $0<\alpha<1$. We can therefore choose
    $u\in I_0,v\in I_1$.  Again,
    since $R_1>R$, $A_v$ is progressing.
  \item $T$ consists of a single point $(R,P)$. This can be treated like the
    second case: either $R=0$, and we can allow any combination, or $R>0$, and
    there is some progressing $A_u$, and we then pick arbitrary $v$ and $\alpha$.
\end{enumerate}

\noindent We now want to show that the reward of the updated
multi-strategy is indeed no worse than before.
For $i\in \{u,v\}$ we define:
\[
 R'_i = \inf_{\ctrlstrat\Compl\mstrat',\envstrat}\min_{a\in A_i}(r(s_1,a) + \sum_{s'\in S}\delta(s_1,a)(s')\cdot E_{\sgame,s'}^{\ctrlstrat,\envstrat}(r))
\]
and we define $R' = \alpha R'_u + (1-\alpha) R'_v$.
Pick an action $a$ (resp. $a'$) that realises the minimum and strategies $\sigma$ and $\pi$ (resp. $\sigma'$ and $\pi'$)
that realise the infimum in the definition of $R_i$ (resp. $R'_i$).
(Such strategies indeed exist).
Define:
\begin{align*}
 c_i &= \sum_{s'} \delta(s_1,a)(s') \cdot \Prb_{\sgame,s'}^{\ctrlstrat,\envstrat}(\future s_1)&
 c'_i &= \sum_{s'} \delta(s_1,a')(s') \cdot \Prb_{\sgame,s'}^{\ctrlstrat,\envstrat}(\future s_1)\\
 d_i &= r(s,a) + \sum_{s'} \delta(s_1,a)(s') \cdot E_{\sgame,s'}^{\ctrlstrat,\envstrat}(r{\downarrow}s_1)&
 d'_i &= r(s,a') + \sum_{s'} \delta(s_1,a')(s') \cdot E_{\sgame,s'}^{\ctrlstrat,\envstrat}(r{\downarrow}s_1)
\end{align*}
We have $R_i = c_i \cdot R + d_i$ for every $1\le i \le n$, and $R'_i = c'_i \cdot R' + d'_i$ for all $i\in \{u,v\}$.
Then:
\begin{equation*}
  \begin{split}
    R' - R &= (\alpha R'_u + (1-\alpha)R'_v) - \sum\nolimits_i\lambda_iR_i\\
    &= (\alpha R'_u + (1-\alpha)R'_v) - (\alpha R_u + (1-\alpha)R_v) + (\alpha R_u + (1-\alpha)R_v) - \sum\nolimits_i\lambda_iR_i\\
        &\ge (\alpha R'_u + (1-\alpha)R'_v) - (\alpha R_u + (1-\alpha)R_v)\\
        &\  \hspace*{15pt}\mbox{ (by the choice of $\alpha,u,v$)}\\
    &= (\alpha(c'_uR' + d'_u) + (1-\alpha)(c'_vR' + d'_v)) - (\alpha(c_uR + d_u) + (1-\alpha)(c_vR + d_v))\\
    &\ge (\alpha(c'_uR' + d'_u) + (1-\alpha)(c'_vR' + d'_v)) - (\alpha(c'_uR + d'_u) + (1-\alpha)(c'_vR + d'_v))\\
    &\  \hspace*{15pt}\mbox{ ($c_iR+d_i\leq c'_iR + d'_i$ by optimality of actions/strategies defining $c_i$ and $d_i$)}\\
    &= (\alpha c'_u + (1-\alpha)c'_v)(R' - R),
  \end{split}
\end{equation*}
i.e., $(1 - \alpha c'_u - (1-\alpha)c'_v) (R' - R) \ge 0$.
By finiteness of rewards and the choice of $\mstrat(s_1)$, at least one of the return probabilities $c'_u,c'_v$ is less than $1$, and thus so is $\alpha c'_u + (1-\alpha)c'_v$, therefore $R'\ge R$.

We can show that the penalty under $\mstrat'$ is at most as big as the penalty under $\mstrat$ in exactly the same way (note that in addition using $\penrew{\mstrat'}$ instead of $\penrew{\mstrat}$ for  $c'$, $d'$, $R'$ and $R'_i$).
For static penalties, the proof that the new multi-strategy is no worse
than the old one is straightforward from the choice of
$\mstrat'(s_1)$.
\end{proof}

The result just proved allows us to simplify the game construction that we use
to map between (discretised) randomised multi-strategies and deterministic ones.
Let the original game be $\sgame$ and the transformed game be $\sgame'$.
The transformation is illustrated in \figref{fig:approximation}.
The left-hand side shows a controller state $s \in S_{\Diamond}$ in the original game $\sgame$
(i.e., the one for which we are seeking randomised multi-strategies).
For each such state, we add the two layers of states illustrated on the
right-hand side of the figure:
\emph{gadgets} $s'_1,s'_2$ 
representing the two subsets $B\subseteq\act(s)$ with
$\mstrat(s)(B)>0$, and \emph{selectors} $s_i$ (for $1\le i \le m$),
which distribute probability among the two gadgets.
Two new actions, $b_1$ and $b_2$, are also added
to label the transitions between selectors $s_i$ and gadgets $s'_1,s'_2$.

The selectors $s_i$ are reached from $s$
via a transition using fixed probabilities $p_1,\dots,p_m$ which need
to be chosen appropriately.
For efficiency, we want to minimise the number of selectors $m$ for each state $s$.
We let $m=\lfloor 1+\log_2 \gran \rfloor$ and $p_i=\frac{l_i}{\gran}$,
where $l_1\,\dots,l_m\in\Nset$ are defined recursively as follows:
$l_1 = \lceil\frac{M}{2}\rceil$
and $\smash{l_i = \lceil\frac{\gran-(l_1+\dots+l_{i-1})}{2}\rceil}$ for $2\leq i\leq m$.
For example, for $M{=}10$, we have $m=4$ and $l_1,\dots,l_4=5,3,1,1$, so
$p_1,\dots,p_4=\frac{5}{10},\frac{3}{10},\frac{1}{10},\frac{1}{10}$.

\begin{figure}[!t]

\begin{center}
\begin{tikzpicture}[scale=1]
\tikzstyle{every node}=[font=\small]
\tikzstyle{distr}=[inner sep=0mm, minimum size=1mm, draw, circle, fill]
\tikzstyle{statebox}=[inner sep=1mm, minimum size=5mm, draw]
\tikzstyle{statecircle}=[inner sep=0.3mm, minimum size=3mm, draw, circle]
\tikzstyle{statediamond}=[draw, diamond, inner sep=0.01mm, minimum size=6mm]
\tikzstyle{trarr}=[-latex, rounded corners]
\tikzstyle{lab}=[inner sep=0.7mm]
\tikzstyle{bigarrow}=[draw, single arrow, minimum height=10ex,minimum width=1ex]
\node[statediamond] at (-1,-2) (s0) {$s$};

\node[distr] at (0,-1.65) (s0a1) {};
\node[rotate=90] at (0,-2) {...};
\node[distr] at (0,-2.35) (s0ak) {};
  

\draw[trarr] (s0) -- (s0a1) node[lab, pos=0.4,above] {$a_1$} -- +(0.8,0.25) node[coordinate,pos=0.4](s0a11){};
\draw[trarr] (s0a1) -- +(0.8,-0.25) node[coordinate,pos=0.4](s0a12){};
\draw[trarr] (s0) -- (s0ak) node[lab, pos=0.4,below] {$a_k$} -- +(0.8,0.25) node[coordinate,pos=0.4](s0ak1){};
\draw[trarr] (s0ak) -- +(0.8,-0.25) node[coordinate,pos=0.4](s0ak2){};
\path[-] (s0a11) edge [bend left] (s0a12);
\path[-] (s0ak1) edge [bend left] (s0ak2);


\node[statediamond] at (3,-2) (s0_app) {$s$};
\node[statediamond] at (5,-1.25) (g_bar_1) {$s_1$};
\node[statediamond] at (5,-2.75) (g_bar_n) {$s_m$};
\draw[very thick, dotted] ($(g_bar_1.south)+(0,-.1)$) -- ($(g_bar_n.north)+(0,.1)$);

\node[statediamond] at (7,-1.4) (g1) {$s'_1$};
\node[statediamond] at (7,-2.6) (gk) {$s'_2$};


\node[distr] at (4,-2) (s0_app_g) {};
\draw[trarr] (s0_app) -- (s0_app_g) -- (g_bar_1) node[lab, pos=0.5,above left] {$p_1$} node[coordinate,pos=0.3](g_bar_11){};
\draw[trarr] (s0_app_g) -- (g_bar_n) node[lab, pos=0.5,below left] {$p_m$} node[coordinate,pos=0.3](g_bar_n1){};
\path[-] (g_bar_11) edge [bend left] (g_bar_n1);

\draw[trarr] (g_bar_1) -- (g1) node[pos=0.4, distr] {} node[pos=0.2, above] {$b_1$};
\draw[trarr] (g_bar_1) -- (gk) node[pos=0.4, distr] {} node[lab, pos=0.1,below] {$b_2$};
  
\draw[trarr] (g_bar_n) -- (g1) node[pos=0.4, distr] {} node[pos=0.1, above] {$b_1$};
\draw[trarr] (g_bar_n) -- (gk) node[pos=0.4, distr] {} node[lab, pos=0.2,below] {$b_2$};

\node[distr] at (8,-1.1) (g1_a1) {};
\node[distr] at (8,-1.7) (g1_ak) {};
\node[rotate=90] at (8,-1.4) {...};
  
\draw[trarr] (g1) -- (g1_a1) node[lab, pos=0.4,above] {$a_1$} -- +(0.8,0.15) node[coordinate,pos=0.4](g1_a11){};
\draw[trarr] (g1_a1) -- +(0.8,-0.15) node[coordinate,pos=0.4](g1_a12){};
\draw[trarr] (g1) -- (g1_ak) node[lab, pos=0.4,below] {$a_k$} -- +(0.8,0.15) node[coordinate,pos=0.4](g1_ak1){};
\draw[trarr] (g1_ak) -- +(0.8,-0.15) node[coordinate,pos=0.4](g1_ak2){};
\path[-] (g1_a11) edge [bend left] (g1_a12);
\path[-] (g1_ak1) edge [bend left] (g1_ak2);

\node[distr] at (8,-2.3) (gk_a1) {};
\node[distr] at (8,-2.9) (gk_ak) {};
\node[rotate=90] at (8,-2.6) {...};

\draw[trarr] (gk) -- (gk_a1) node[lab, pos=0.4,above] {$a_1$} -- +(0.8,0.15) node[coordinate,pos=0.4](gk_a11){};
\draw[trarr] (gk_a1) -- +(0.8,-0.15) node[coordinate,pos=0.4](gk_a12){};
\draw[trarr] (gk) -- (gk_ak) node[lab, pos=0.4,below] {$a_k$} -- +(0.8,0.15) node[coordinate,pos=0.4](gk_ak1){};
\draw[trarr] (gk_ak) -- +(0.8,-0.15) node[coordinate,pos=0.4](gk_ak2){};
\path[-] (gk_a11) edge [bend left] (gk_a12);
\path[-] (gk_ak1) edge [bend left] (gk_ak2);

\end{tikzpicture}
\end{center}
\caption{A node in the original game $\sgame$ (left), and the corresponding nodes in the transformed game $\sgame'$ for approximating randomised multi-strategies (right, see \sectref{sec:approximation}).}
\label{fig:approximation}
\end{figure}

We are now able to find optimal discretised randomised multi-strategies in $\sgame$
by finding optimal deterministic multi-strategies in $\sgame'$.
This connection will be formalised in \lemref{transformed_mstrats} below.
But we first point out that, for the case of \emph{static} penalties,
a small transformation to the MILP encoding (see \figref{lp_encoding_static})
used to solve game $\sgame'$ is required.
The problem is that the definition of static penalties on $\sgame'$
does not precisely capture the static penalties of the original game $\sgame$.
In this, case we adapt \figref{lp_encoding_static} as follows.
For each state $s$, action $a\in A(s)$ and $i\in\{1,\dots,n\}$,
we add a binary variable $y'_{s_i,a}$ and
constraints $y'_{s_i,a} \ge y_{s'_j,a} - (1-y_{s_i,b_j})$ for $j\in\{1,2\}$.
We then change the objective function that we minimise to:
\[
- x_{\sinit}\  + \sum\nolimits_{s \in S_{\Diamond}} \sum\nolimits_{a \in A(s)}\sum\nolimits_{i=1}^m p_i\cdot(1-y'_{s_i,a})\cdot\penbase(s,a)
\]
\begin{lem}\label{transformed_mstrats}
 Let $\sgame$ be a game, $\phi=\property{r}{b}$ a property, $(\penbase,\pentype)$ a (static or dynamic) penalty scheme,
 and let $\sgame'$ be the game transformed as described above. The following two claims are equivalent:
 \begin{enumerate}
  \item There is a sound multi-strategy $\mstrat$ in $\sgame$ with $\pendef{\penbase}{\dyn}{\mstrat} = x$ (or, for static penalties, $\pendef{\penbase}{\sta}{\mstrat} = x$), and $\mstrat$ only uses probabilities that are multiples of $\frac{1}{M}$.
  \item There is a sound deterministic multi-strategy $\mstrat'$ in $\sgame'$ and $\pendef{\penbase}{\dyn}{\mstrat} = x$
  (or, for static penalties,
  $\sum\nolimits_{s \in S_{\Diamond}} \sum_{i=1}^m p_i \cdot \psi'(i) = x$ where
  $\psi'(i)$ is $\sum_{a\in A(s) \setminus (\mstrat'(s'_1) \cup \mstrat'(s'_2))} \penbase(s,a)$ if $\mstrat'(s_i)=\{b_1,b_2\}$, and 
   otherwise $\psi'(i)$ is $\sum_{a\in A(s) \setminus \mstrat'(s'_j)} \penbase(s,a)$ where $j$ is the (unique) number with $b_j \in\mstrat'(s_i)$).
 \end{enumerate}
\end{lem}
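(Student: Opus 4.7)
The plan is to prove the two implications by an explicit back-and-forth construction between discretised randomised multi-strategies on $\sgame$ and deterministic multi-strategies on $\sgame'$, showing that compliant strategy pairs in the two games are in bijection up to the auxiliary transitions through selectors and gadgets, so that both soundness and penalty are preserved.

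For the direction $(1)\Rightarrow(2)$, I would first apply \thmref{2_set_rand_chain_condition} to assume without loss of generality that $|\support{\mstrat(s)}|\le 2$, writing $\mstrat(s) = q_1{:}B_1 + q_2{:}B_2$ with $q_j=k_j/M$ and $k_1+k_2=M$. The key combinatorial fact — which I would prove first by induction — is that the halving construction $l_1=\lceil M/2\rceil$, $l_i=\lceil (M-\sum_{j<i}l_j)/2\rceil$ ensures every integer $k\in\{0,\ldots,M\}$ equals $\sum_{i\in I}l_i$ for some $I\subseteq\{1,\ldots,m\}$. Choosing such an $I$ for $k=k_1$, I define $\mstrat'(s'_1)=B_1$, $\mstrat'(s'_2)=B_2$, and $\mstrat'(s_i)=\{b_1\}$ if $i\in I$, $\{b_2\}$ otherwise. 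Since the total probability routed to gadget $j$ is $q_j$ and the auxiliary transitions carry no reward, any $(\sigma,\pi)$ compliant with $\mstrat$ lifts to a compliant pair on $\sgame'$ inducing the same measure on original paths, so soundness transfers. In the dynamic case penalties match directly; in the static case, the modified MILP objective gives contribution $\sum_{i\in I}p_i\sum_{a\notin B_1}\penbase(s,a) + \sum_{i\notin I}p_i\sum_{a\notin B_2}\penbase(s,a) = q_1\sum_{a\notin B_1}\penbase(s,a)+q_2\sum_{a\notin B_2}\penbase(s,a) = \pendef{\penbase}{loc}{\mstrat,s}$.

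For the direction $(2)\Rightarrow(1)$, given a sound deterministic $\mstrat'$ on $\sgame'$, I would define $\mstrat$ by case analysis on $\mstrat'(s_i)$: when $\mstrat'(s_i)=\{b_j\}$, contribute probability $p_i$ to the subset $\mstrat'(s'_j)$; when $\mstrat'(s_i)=\{b_1,b_2\}$, contribute $p_i$ to $\mstrat'(s'_1)\cup\mstrat'(s'_2)$. All the resulting weights are multiples of $1/M$, and the definition of $\psi'(i)$ in the lemma has been set up precisely so that the static penalty $\sum_s\pendef{\penbase}{loc}{\mstrat,s}$ equals $\sum_s\sum_i p_i\psi'(i)$. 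The dynamic case is analogous, using that $\penrew{\mstrat}$ agrees action-by-action with the induced penalty reward on $\sgame'$. The justification for the choice $B_1\cup B_2$ in the two-element case is that a compliant strategy at the selector may mix $b_1,b_2$ arbitrarily and then, in the chosen gadget, pick any allowed action — giving exactly the freedom of allowing $B_1\cup B_2$ in $\sgame$.

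The main obstacle is showing that in this last case the correspondence between compliant strategies is tight in both directions — that is, no compliant $\sigma$ in $\sgame$ under $\mstrat$ exceeds what is achievable by a compliant strategy in $\sgame'$ under $\mstrat'$. Given any distribution prescribed by $\sigma$ on $B_1\cup B_2$ after reaching $s$, one can realise it in $\sgame'$ by choosing, at selector $s_i$, the distribution on $\{b_1,b_2\}$ whose marginal on $\mstrat'(s'_1)\setminus\mstrat'(s'_2)$ and $\mstrat'(s'_2)\setminus\mstrat'(s'_1)$ matches $\sigma$ and then distributing actions in the intersection arbitrarily. Combined with the converse direction, which is immediate by projection, this yields identical worst-case expected rewards and penalties in the two games and closes the equivalence.
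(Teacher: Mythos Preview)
Your proposal is correct and follows essentially the same approach as the paper's proof: both directions use the same explicit constructions (selectors route to gadgets via an index set $I_k$ realising $k=\sum_{i\in I_k}l_i$; conversely, each selector's allowed set contributes its weight $p_i$ to the appropriate subset, with the case $\mstrat'(s_i)=\{b_1,b_2\}$ mapping to $\mstrat'(s'_1)\cup\mstrat'(s'_2)$), and both invoke \thmref{2_set_rand_chain_condition} to reduce to two-element supports before encoding. You spell out the static-penalty bookkeeping and the two-sided compliant-strategy correspondence in somewhat more detail than the paper, which dispatches these points tersely (``clearly'' and ``similarly to above''), but the underlying argument is the same.
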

\begin{proof}
 Firstly, observe that for any integer $0\le k \le M$ there is a set $I_{k} \subseteq \{1,\ldots,m\}$
 with $\sum_{j \in I_{k}} l_{j} = k$. The opposite direction also holds.

 Let $\mstrat$ be a multi-strategy in $\sgame$.
 By \thmref{2_set_rand_chain_condition} we can assume that $|\support{\mstrat(s)}| \leq 2$ for any $s$.
 We create $\mstrat'$ as follows. For every state $s\in S_\Diamond$ with $\{A_1,A_2\} = \support{\mstrat(s)}$,
 we set $\mstrat'(s'_1)(A_1) = 1$ and
 $\mstrat'(s'_2)(A_2) = 1$. Then, supposing $\mstrat(s)(A) = \frac{k}{M}$, we
 let $\mstrat'(s_i)(\{b_1\})=1$ whenever $i\in I_k$,
 and $\mstrat'(s_i)(\{b_2\})=1$ whenever $i\not\in I_k$. If $\mstrat(s)$ is a singleton set, the construction is analogous.
 Clearly, the property for static penalties is preserved.
 For any memoryless $\sigma'\Compl\mstrat'$ there is
 a memoryless strategy $\sigma\Compl\mstrat$ that is given by $\sigma(s)(a) = \frac{k}{M} \cdot \sigma(s'_1)(a) + (1-\frac{k}{M}) \cdot \sigma(s'_1)(a)$ for any $a$,
 and conversely for any $\sigma\Compl\mstrat$ we can define $\sigma'\Compl\mstrat'$
 by putting $\sigma'(s'_1)=d_{s,A_1}$ and $\sigma'(s'_2)=d_{s,A_2}$ for all $s$, where $d_{s,A_1}$
 and $d_{s,A_2}$ are distributions witnessing that $\sigma$ is compliant with $\mstrat$.
 It is easy to see that both $\sigma$ and $\sigma'$ in either of the above yield the same reward and dynamic penalty.

 In the other direction, we define $\mstrat$ from $\mstrat'$ for all $s\in S_\Diamond$ as follows.
 Let $A_1$ and $A_2$ be the sets allowed by $\mstrat'$ in $s'_1$ and $s'_2$ respectively. If $A_1 = A_2$, then $\mstrat(s)$ allows this set
 with probability $1$. Otherwise $\mstrat(s)$ allows
 the set $A_1\cup A_2$ with probability $\sum_{i : \mstrat(s_i) = \{b_1,b_2\}} p_i$,
 the set $A_1$ with probability $\sum_{i : \mstrat(s_i) = \{b_1\}} p_i$ and
 the set $A_2$ with probability $\sum_{i : \mstrat(s_i) = \{b_2\}} p_i$.
 The correctness can be proved similarly to above.
\end{proof}



Our next goal is to show that, by varying the granularity $\gran$, we
can get arbitrarily close to the optimal penalty for a randomised
multi-strategy and, for the case of static penalties, define a suitable choice of $\gran$.
This will be formalised in \thmref{thm:approx} shortly.
First, we need to establish the following intermediate result,
stating that, in the static case, in addition to
\thmref{2_set_rand_chain_condition} we can require the
action subsets allowed by a multi-strategy to be ordered with respect to the subset relation.

\begin{thm}\label{2_set_rand_subset}
Let $\sgame$ be a game, $\phi=\property{r}{b}$ a property and $(\penbase,\sta)$ a \emph{static} penalty scheme.
For any sound
multi-strategy $\mstrat$ we can construct another sound multi-strategy
$\mstrat'$ such that $\pendef{\penbase}{\sta}{\mstrat} \geq \pendef{\penbase}{\sta}{\mstrat'}$ and
for each $s \in S_{\Diamond}$,
if $\support{\mstrat'(s)}{=}\{B,C\}$, then either $B\subseteq C$ or $C\subseteq B$.
\end{thm}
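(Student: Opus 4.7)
The plan is to apply \thmref{2_set_rand_chain_condition} first so that we may assume $|\support{\mstrat(s)}|\le 2$ at every state, and then to ``comparabilise'' each non-comparable support in turn by a purely local replacement at one state. Fix a state $s$ with $\support{\mstrat(s)}=\{B,C\}$ incomparable and $\mstrat(s)(B)=\alpha$, $\mstrat(s)(C)=1-\alpha$, and enrich the candidate family of sets at $s$ to $\{B\cap C,\,B,\,C,\,B\cup C\}$, in the spirit of the polygon argument used in the previous theorem.

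Two structural identities will drive the construction. Writing $X=\sum_{a\notin B\cup C}\penbase(s,a)$, $Y=\sum_{a\in B\setminus C}\penbase(s,a)$ and $Z=\sum_{a\in C\setminus B}\penbase(s,a)$, a direct computation yields the equality of local static penalties $P_B+P_C=P_{B\cap C}+P_{B\cup C}\,(=2X+Y+Z)$. For rewards, let $R_A$ denote the worst-case expected reward from $s$ obtained by replacing $\mstrat(s)$ by the singleton $\{A{:}1\}$ while leaving $\mstrat$ unchanged elsewhere. Since the worst case over compliant action distributions at $s$ of the resulting affine functional is attained at Dirac distributions on single actions, the original worst-case reward at $s$ satisfies $R=\alpha R_B+(1-\alpha)R_C$, and moreover $R_{B\cup C}=\min(R_B,R_C)$.

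I then split into two cases according to which of $R_B, R_C$ is smaller. If $R_C\le R_B$, I replace $\mstrat(s)$ by $\mstrat'(s)=\{B{:}\alpha,\,B\cup C{:}1-\alpha\}$; the support $B\subseteq B\cup C$ is comparable, the new local reward equals $\alpha R_B+(1-\alpha)R_{B\cup C}=\alpha R_B+(1-\alpha)R_C=R$, and the new local penalty $X+\alpha Z$ is $(1-\alpha)Y\ge 0$ smaller than the original $X+\alpha Z+(1-\alpha)Y$. The case $R_B\le R_C$ is symmetric: use $\mstrat'(s)=\{C{:}1-\alpha,\,B\cup C{:}\alpha\}$, again with reward $R$ and a penalty reduction of $\alpha Z\ge 0$. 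Since only $\mstrat(s)$ changes, iterating across all non-comparable states produces the desired globally comparable $\mstrat'$ with no greater static penalty.

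The hard part will be making the transition from ``the local worst-case reward at $s$ is unchanged'' to ``the global worst-case reward is unchanged'', because with cycles through $s$ the quantities $R_B$ and $R_C$ themselves depend on $\mstrat(s)$. I would address this exactly as in the proof of \thmref{2_set_rand_chain_condition}: the preceding local analysis shows that the value function $V^\mstrat$ of the original $\mstrat$ continues to satisfy the Bellman constraint at $s$ with equality under $\mstrat'$, and the standing finiteness assumption on expected total rewards together with the fixed-point characterisation of the worst-case value (cf.~\cite[Ch.~7]{Put94}) forces $V^{\mstrat'}=V^\mstrat$. Soundness of $\mstrat'$ for $\phi$ then follows from that of $\mstrat$.
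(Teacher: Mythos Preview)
Your construction is the same as the paper's: replace the set achieving the smaller one-step value by the union $B\cup C$ and keep the other set with its original probability. The penalty computation is correct, and you are right that $V^\mstrat$ continues to satisfy the Bellman equation at $s$ under the modified multi-strategy. But two points deserve attention, one minor and one a genuine gap.

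\emph{Minor.} Your definition of $R_A$ as the value obtained by \emph{replacing $\mstrat(s)$ by $\{A{:}1\}$} does not give $R=\alpha R_B+(1-\alpha)R_C$ in the presence of cycles through $s$; for that identity you need the one-step quantities $R_A:=\min_{a\in A}\bigl(r(s,a)+\sum_{t}\delta(s,a)(t)\,V^\mstrat_t\bigr)$ computed against the \emph{original} value function. Your ``hard part'' paragraph suggests you mean the latter; the stated definition should be adjusted.

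\emph{The gap.} Showing that $V^\mstrat$ is \emph{a} fixed point of the Bellman operator for $\mstrat'$ only gives $V^{\mstrat'}\le V^\mstrat$; you still need that it is the \emph{least} fixed point. Your appeal to the global finiteness assumption does not settle this, and in fact the ``symmetric'' tie-break when $R_B=R_C$ can go wrong. Consider $S_\Diamond=\{s,u\}$ with actions $a$ (from $s$ to a sink with reward~$1$) and $c$ (from $s$ to $u$, reward $0$), and a single action at $u$ returning to $s$ with reward $0$. With $B=\{a\}$, $C=\{c\}$ and $\mstrat(s)=\{B{:}\alpha,C{:}1{-}\alpha\}$ one has $V^\mstrat_s=1$, and the one-step values satisfy $R_B=R_C=1$. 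If you resolve the tie by keeping $C$ and replacing $B$ by $B\cup C$, every compliant strategy may now always play $c$, creating a zero-reward cycle; the least fixed point becomes $V^{\mstrat'}_s=0$ and soundness is lost. The paper handles exactly this by first disposing of the case $V^\mstrat_{s}=0$ (allow everything) and, when $V^\mstrat_s>0$, insisting that the \emph{retained} set $C$ satisfy the ``no free return'' condition
\[
\min_{a\in C}\Bigl(r(s,a)+\inf_{\sigma\Compl\mstrat,\pi}\sum_{s'}\delta(s,a)(s')\,E_{\sgame,s'}^{\sigma,\pi}(r{\downarrow}s)\Bigr)>0,
\]
swapping $B$ and $C$ if necessary when $R_B=R_C$. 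This is precisely what pins down $V^\mstrat$ as the least fixed point for $\mstrat'$: any state with positive value under $\mstrat$ still has positive value under $\mstrat'$ because with probability $1-\alpha$ the retained set $C$ forces positive reward before any return to $s$. You should incorporate this tie-breaking criterion; once you do, your argument coincides with the paper's.
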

\begin{proof}
%
Let $\mstrat$ be a multi-strategy and fix $s_1$ such that $\mstrat$ takes two different actions $B$ and $C$ with probability $p\in(0,1)$ and $1-p$ where $B\nsubseteq C$ and $C\nsubseteq B$.
If $\inf_{\sigma\Compl\mstrat,\pi} E_{\sgame,s_1}^{\sigma,\pi}(r) = 0$,
then we can in fact allow deterministically the single set $A(s_1)$ and we are done. Hence, suppose that the reward accumulated from $s_1$ is non-zero.

Suppose, w.l.o.g., that:
\begin{equation}\label{eqn:subset-comparison}
 \min_{a\in B}r(s_1,a) + \sum_{s'\in S} \delta(s_1,a)(s') \cdot  \inf_{\sigma\Compl\mstrat,\pi} E_{\sgame,s'}^{\sigma,\pi}(r) \le
 \min_{a\in C}r(s_1,a) + \sum_{s'\in S} \delta(s_1,a)(s') \cdot  \inf_{\sigma\Compl\mstrat,\pi} E_{\sgame,s'}^{\sigma,\pi}(r)
\end{equation}
It must be the case that, for some $D\in \{B,C\}$, we have:
\begin{equation}\label{eqn:subset-nonzero}
 \min_{a\in D} r(s_1,a) + \inf_{\sigma\Compl\mstrat,\pi} \sum_{s'\in S} \delta(s_1,a)(s')\cdot E_{\sgame,s'}^{\sigma,\pi}(r{\downarrow}s_1) > 0
\end{equation}
(otherwise the minimal reward accumulated from $s_1$ is $0$ since there is a compliant strategy that keeps returning to $s_1$ without ever accumulating any reward), and if the inequality in \eqnref{eqn:subset-comparison} is strict, then \eqnref{eqn:subset-nonzero} holds for $D=C$. W.l.o.g., suppose that
the above property holds for $C$. We define $\mstrat'$ by modifying $\mstrat$ and picking $B\cup C$ with probability
$p$, $C$ with $(1-p)$, and $B$ with probability $0$.

Under $\mstrat$, the minimal reward achievable by some compliant strategy is given as the least solution to the following equations \cite[Theorem 7.3.3]{Put94} (as before, the notation of \cite{Put94} requires ``negative'' models):
\begin{align*}
 x_s &= \sum_{A\in \support{\mstrat(s)}} \mstrat(s)(A) \cdot \min_{a\in A} \sum_{s'\in S} r(s,a) + \delta(s,a)(s') \cdot x_{s'} &\text{for $s\in S_\Diamond$}\\
 x_s &= \min_{a\in A(s)} \sum_{s'\in S} r(s,a) + \delta(s,a)(s') \cdot x_{s'} &\text{for $s\in S_\Box$}
\end{align*}
The minimal rewards $x'_s$ achievable under $\mstrat'$ are defined analogously.
In particular, for the equation with $s_1$ on the left-hand side we have:
\begin{align*}
 x_{s_1} &= p\cdot \min_{a\in B} r(s_1,a) + \sum_{s'\in S} \delta(s_1,a)(s') \cdot x_{s'} + (1-p) \cdot \min_{a\in C} r(s_1,a) + \sum_{s' \in S} \delta(s_1,a)(s') \cdot x_{s'}\\ 
 x'_{s_1} &= p\cdot \min_{a\in B \cup C} r(s_1,a) + \sum_{s'\in S} \delta(s_1,a)(s') \cdot x'_{s'} + (1-p) \cdot \min_{a\in C} r(s_1,a) + \sum_{s'\in S} \delta(s_1,a)(s') \cdot x'_{s'} 
\end{align*}
We show that the least solution $\bar x$ to $x$ is also the least solution to $x'$.

First, note that $\bar x$ is clearly a solution to any equation with $s\neq s_1$
on the left-hand side since these equations remain unchanged in both sets of equations.
As for the equation with $s_1$, we have $\min_{a\in B} \sum_{s'} r(s_1,a) + \delta(s_1,a)(s') \cdot \bar x_{s'} \le \min_{a\in C} \sum_{s'} r(s_1,a) + \delta(s_1,a)(s') \cdot \bar x_{s'}$,
and so necessarily $\min_{a\in B} \sum_{s'} r(s_1,a) +\delta(s_1,a)(s') \cdot \bar x_{s'} = \min_{a\in B \cup C} \sum_{s'} r(s_1,a) + \delta(s_1,a)(s') \cdot \bar x_{s'}$.

To see that $\bar x$ is the {\em least} solution to $x'$, we show that (i) for all $s$, if
$\inf_{\sigma\Compl\mstrat', \pi} E_{\sgame,s}^{\sigma,\pi}(r) = 0$ then $\bar x_s = 0$; and
(ii) there
is a unique fixpoint satisfying $\bar x_s = 0$ for all $s$ such that
$\inf_{\sigma\Compl\mstrat', \pi} E_{\sgame,s}^{\sigma,\pi}(r) = 0$.

For (i), suppose $\bar x_s > 0$. Let $\sigma'$ be a strategy compliant with $\mstrat'$, and $\pi$ an arbitrary strategy.
Suppose $\Prb_{\sgame,s}^{\sigma',\pi}(\future s_1) = 0$, then there is a strategy $\sigma$
compliant with $\mstrat$ which behaves exactly as $\sigma'$ when starting from $s$, and by
our assumption on the properties of $\bar x_s$ we get that $E_{\sgame,s}^{\sigma,\pi}(r) > 0$ and so $E_{\sgame,s}^{\sigma',\pi}(r) > 0$.
Now suppose that $\Prb_{\sgame,s}^{\sigma',\pi}(\future s_1) > 0$. For this case, by condition~\eqnref{eqn:subset-nonzero},
the fact that it holds for $D=C$ and by defining $\mstrat'$ so that it picks $C$ with nonzero probability we
get that the reward under any strategy compliant with $\mstrat'$ is
non-zero when starting in $s_1$, and so $E_{\sgame,s}^{\sigma',\pi}(r) > 0$.

Point (ii) can be obtained by an application of~\cite[Proposition 7.3.4]{Put94}.
%
\end{proof}

We can now return to the issue of how to vary the granularity $\gran$
to get sufficiently close to the optimal penalty.
We formalise this as follows.

\begin{thm}\label{thm:approx}
Let $\sgame$ be a game, $\phi=\property{r}{b}$ a property, and $(\penbase,\pentype)$ a (static or dynamic) penalty scheme.
Let $\mstrat$ be a sound multi-strategy. For any $\varepsilon>0$,
there is an $\gran$ and a sound multi-strategy $\mstrat'$ of granularity $\gran$ satisfying
$\pendef{\penbase}{\pentype}{\mstrat'} - \pendef{\penbase}{\pentype}{\mstrat} \le \varepsilon$.
Moreover, for static penalties it suffices to take
$\gran = \lceil \sum_{s\in S,a\in\act(s)}\frac{\penbase(s,a)}{\varepsilon}\rceil$.
\end{thm}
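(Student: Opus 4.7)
The plan is to handle the static and dynamic cases separately: the static case yields the explicit bound stated, while the dynamic case reduces to a continuity argument.

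For static penalties, first invoke \thmref{2_set_rand_subset} to assume without loss of generality that the support of $\mstrat(s)$ at each state $s\in S_\Diamond$ consists of (at most) two nested sets $B_s \subseteq C_s$. Write $p_s = \mstrat(s)(B_s)$, and define $\mstrat'$ by rounding each $p_s$ upward to the nearest multiple of $1/\gran$, denoted $p'_s$, setting $\mstrat'(s)(B_s) = p'_s$ and $\mstrat'(s)(C_s) = 1 - p'_s$ (and $\mstrat'(s) = \mstrat(s)$ when the support is a singleton). The key observation is that every strategy compliant with $\mstrat'$ is also compliant with $\mstrat$: the ``extra'' mass $p'_s - p_s$ placed on $B_s$ in $\mstrat'$ can be reallocated in $\mstrat$ by mixing the distribution witnessing compliance for $B_s$ into the $C_s$-slot with probability $(p'_s - p_s)/(1 - p_s)$, which is permissible precisely because $B_s \subseteq C_s$. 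Soundness of $\mstrat'$ therefore follows from soundness of $\mstrat$. The local static penalty at $s$ increases by at most $(p'_s - p_s)\sum_{a \in C_s \setminus B_s}\penbase(s,a) \le \frac{1}{\gran}\sum_{a\in\act(s)}\penbase(s,a)$, and summing over $S_\Diamond$ gives a total difference of at most $\frac{1}{\gran}\sum_{s\in S,a\in\act(s)}\penbase(s,a)$, which is $\le \varepsilon$ by the choice of $\gran$.

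For dynamic penalties, apply \thmref{2_set_rand_chain_condition} to restrict to multi-strategies with $|\support{\mstrat(s)}| \le 2$ everywhere. The argument then hinges on continuity: once the support profile is fixed, both the worst-case reward $\inf_{\sigma \Compl \mstrat, \pi} E_{\sgame,\sinit}^{\sigma,\pi}(r)$ and the dynamic penalty $\pendef{\penbase}{\dyn}{\mstrat}$ are values of finite MDPs whose transition probabilities and reward structures depend linearly on the mixing probabilities of $\mstrat$; using the finite-expected-reward assumption, both are continuous in those probabilities. When the slack $R(\mstrat) - b$ is strictly positive, taking $\gran$ sufficiently large guarantees that the multi-strategy $\mstrat'$ obtained by snapping each probability to a nearest multiple of $1/\gran$ remains sound and satisfies $\pendef{\penbase}{\dyn}{\mstrat'} \le \pendef{\penbase}{\dyn}{\mstrat} + \varepsilon$. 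For the boundary case $R(\mstrat) = b$, first perturb $\mstrat$ slightly toward a sound deterministic strategy (which exists since some sound strategy exists whenever a sound multi-strategy does) to obtain strict slack while changing the penalty by at most $\varepsilon/2$, and then apply the rounding argument with tolerance $\varepsilon/2$.

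The main obstacle lies in the dynamic case, specifically the boundary situation where $\mstrat$ saturates the soundness constraint: rounding in an arbitrary direction could violate $R(\mstrat') \ge b$, and unlike the static case there is no nested structure on $\support{\mstrat(s)}$ singling out a safe rounding direction. Establishing continuity of the dynamic penalty itself also requires some care, since the reward structure $\penrew{\mstrat}$ depends on $\mstrat$, so one must argue that the MDP transitions and the reward structure both vary continuously and then invoke continuity of total-reward MDP values in their parameters.
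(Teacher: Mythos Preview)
Your static-penalty argument is essentially the paper's: invoke \thmref{2_set_rand_subset} to get nested supports $B_s\subseteq C_s$, round the weight on the smaller set upward, observe that compliance under $\mstrat'$ implies compliance under $\mstrat$, and sum the per-state penalty changes. Nothing to add there.

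For dynamic penalties, however, your route diverges from the paper's and the divergence creates a genuine gap. The paper does \emph{not} rely on a symmetric continuity argument followed by a boundary-case patch. Instead, at each state $t$ with $\support{\mstrat(t)}=\{A_1,A_2\}$ it first identifies which of $A_1,A_2$ yields the larger worst-case onward reward, and then shifts probability mass \emph{toward} that set. A fixed-point / Bellman-operator argument (using \cite[Proposition~7.3.4]{Put94}) shows that this directional shift can only \emph{increase} $\inf_{\sigma\Compl\mstrat,\pi}E_{\sgame,\sinit}^{\sigma,\pi}(r)$, so soundness is preserved for \emph{every} shift size, not merely for sufficiently small ones. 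Only the penalty then needs a limiting argument, and that is a one-sided bound with no soundness constraint attached.

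Your boundary-case fix (``perturb $\mstrat$ slightly toward a sound deterministic strategy to obtain strict slack'') does not do the job. Writing $\mstrat_\lambda=(1-\lambda)\mstrat+\lambda\mstrat^{\det}$ at the level of multi-strategies, the worst-case reward under $\mstrat_\lambda$ is \emph{not} the convex combination $(1-\lambda)R(\mstrat)+\lambda R(\mstrat^{\det})$, because per-state mixtures interact multiplicatively across the induced Markov chain; so there is no reason $\mstrat_\lambda$ achieves reward strictly above $b$ for small $\lambda$, and in particular if the game's optimal reward equals $b$ there is no strict slack to be gained at all. Even when slack can be obtained, your argument does not control the penalty of the perturbed multi-strategy. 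The clean way out is exactly the paper's directional-rounding idea, which eliminates the boundary case rather than patching it.
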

%
%
%
%
\begin{proof}

By \thmref{2_set_rand_chain_condition}, we can assume
$\support{\mstrat(t)} = \{A_{1},A_{2}\}$ for any state $t\in S_\Diamond$.

We deal with the cases of static and dynamic penalties separately.
For static penalties, let $t\in S_\Diamond$ and $\mstrat(t)(A_{1}) = q$, $\mstrat(t)(A_{2}) = 1-q$
  for $A_{1}\subseteq A_{2}\subseteq\act(t)$. Modify $\mstrat$
  by rounding $q$ up to the number $q'$ which is the
  nearest multiple of $\frac{1}{\gran}$. The resulting multi-strategy $\mstrat'$ is again sound,
  since any strategy compliant with $\mstrat'$ is also compliant with $\mstrat$: the witnessing
  distributions (see Definition~\ref{def:compliant}) $d_{t,A_1}$ and $d_{t,A_2}$ for $\mstrat$ are obtained
  from the distributions $d'_{t,A_1}$ and $d'_{t,A_2}$ for $\mstrat'$ by setting
  $d_{t,A_1}(a) = d'_{t,A_1}(a)$ for all $a\in A_1$
  and
  $d_{t,A_2}(a) = \frac{q'-q}{1-q}\cdot d'_{t,A_1}(a) + \frac{1-q'}{1-q}\cdot d'_{t,A_2}(a)$ for all $a\in A_2$;
  note that both $d_{t,A_1}$ and $d_{t,A_2}$ are indeed probability distributions.
  Further, the penalty in $\mstrat'$
  changes by at most
  $\frac{1}{\gran}\sum_{a\in\act(s)}\penbase(t,a)$. To obtain the result we repeat the above for all $t$.

Now let us consider dynamic penalties. Intuitively, the claim follows since by making small changes to the multi-strategy,
 while not (dis)allowing any new actions, we only cause small changes to the reward and penalty.

Let $\mstrat$ be a multi-strategy and $t\in S_\Diamond$ a state. W.l.o.g., suppose:
 \[
  \inf_{\ctrlstrat \Compl \mstrat, \pi}\min_{a\in A_{1}} r(s,a) + \sum_{s'} \delta(s,a)(s') \cdot E_{\sgame,s'}^{\ctrlstrat,\envstrat}(r) \ge \inf_{\ctrlstrat \Compl \mstrat, \pi}\min_{a\in A_{2}} r(s,a) + \sum_{s'} \delta(s,a)(s') \cdot E_{\sgame,s'}^{\ctrlstrat,\envstrat}(r)
\]
 For $0 < x < \mstrat(t)(A_2)$,
 we define a multi-strategy $\mstrat_x$
 by $\mstrat_x(t)(A_{1}) = \mstrat(t)(A_{1}) + x$ and
 $\mstrat_x(t)(A_{2}) = \mstrat(t)(A_{2}) - x$ 
 , and  $\mstrat_x(s) = \mstrat(s)$ for all $s\neq t$. We will show that
 $\inf_{\ctrlstrat \Compl \mstrat_x, \pi} E_{\sgame,\sinit}^{\ctrlstrat,\envstrat}(r)
  \ge \inf_{\ctrlstrat \Compl \mstrat, \pi} E_{\sgame,\sinit}^{\ctrlstrat,\envstrat}(r)$.
 Consider the following functional $F_x: (S \rightarrow \Rset) \rightarrow (S\rightarrow \Rset)$, constructed for the multi-strategy $\mstrat_x$
 \begin{align*}
  F_x(f)(s) &= \sum_{A\in \support{\mstrat(s)}} \mstrat_x(s) \cdot \min_{a\in A} \sum_{s'\in S} r(s,a) + \delta(s,a)(s') \cdot f(s') &\text{for $s\in S_\Diamond$}\\
  F_x(f)(s) &= \min_{a\in A(s)} \sum_{s'\in S} r(s,a) + \delta(s,a)(s') \cdot f(s') &\text{for $s\in S_\Box$}
 \end{align*}
 Let $f$ be the function assigning $\inf_{\ctrlstrat \Compl \mstrat, \pi} E_{\sgame,t}^{\ctrlstrat,\envstrat}(r)$ to $s$.
 Observe that $f(s)=0$ whenever $\inf_{\ctrlstrat \Compl \mstrat_x, \pi} E_{\sgame,t}^{\ctrlstrat,\envstrat}(r)$; this follows since
 $x < \min\{\mstrat(t)(A_1),\mstrat(t)(A_2)\}$ and so both $\mstrat$ and $\mstrat_x$ allow the same actions with non-zero probability. Also,
 $F_x(f)(s) \ge f(s)$: for $s\neq t$ in fact $F_x(f)(s) = f(s)$ because the corresponding functional $F$ for $\mstrat$ coincides with $F_x$ on $s$;
 for $s = t$, we have $F_x(f)(s) \ge f(s)$ since $\min_{a\in A_1} r(s,a) + \sum_{s'} \delta(s,a)(s') \cdot f(s')\ge \min_{a\in A_2} r(s,a) + \sum_{s'} \delta(s,a)(s') \cdot f(s')$ by the properties of $A_1$ and $A_2$ and since $x$ is non-negative. Hence, we can apply \cite[Proposition 7.3.4]{Put94} and obtain that $\mstrat_x$ ensures at least the same reward as $\mstrat$.
 Thus, by increasing the probability of allowing $A_{1}$ in $t$ the soundness of the multi-strategy is preserved.


 Further, for any strategy $\sigma'$ compliant with $\mstrat_x$ and any $\envstrat$, the penalty when starting in $t$, i.e. $E_{\sgame,t}^{\sigma',\envstrat}(\penrew{\mstrat_x})$, is equal to:
 \[
   \pendef{\penbase}{loc}{\mstrat_x,t} + \sum_{a\in A} \xi'(t)(a) \sum_{t'\in S} \delta(t,a)(t') \cdot \big(E_{\sgame,t'}^{\sigma',\envstrat}(\penrew{\mstrat_x} {\downarrow} t) + \Prb_{\sgame,t'}^{\sigma',\envstrat}(\future t) \cdot E_{\sgame,t}^{\sigma',\envstrat}(\penrew{\mstrat_x})\big)
 \]
for $\xi' = \sigma'\cup \pi$. 
 There is a strategy $\sigma$ compliant with $\mstrat$ which
 differs from $\sigma'$ only on $t$, where $\sum\nolimits_{a\in A}|\sigma'(s,a)-\sigma(s,a)| \le x$. We have, for any $\envstrat$:
 \begin{align*}
  \lefteqn{E_{\sgame,t}^{\ctrlstrat,\envstrat}(\penrew{\mstrat})}\\ &= \pendef{\penbase}{loc}{\mstrat,t} + \sum_{a\in A} \xi(t,a) \sum_{s\in S} \delta(t,a)(s) \cdot (E_{\sgame, s}^{\ctrlstrat,\envstrat}(\penrew{\mstrat} {\downarrow} t) + \Prb_{\sgame,s}^{\ctrlstrat,\envstrat}(\future t) \cdot E_{\sgame,t}^{\ctrlstrat,\envstrat}(\penrew{\mstrat}))\\
   &\ge \pendef{\penbase}{loc}{\mstrat_x,t} {-} x {+} \sum_{a\in A} (\xi'(t,a) {-} x) \sum_{s\in S} \delta(t,a)(s) {\cdot} (E_{\sgame,s}^{\ctrlstrat',\envstrat}(\penrew{\mstrat_x} {\downarrow} t) {+} \Prb_{\sgame,s}^{\ctrlstrat',\envstrat}(\future t) {\cdot} E_{\sgame,t}^{\ctrlstrat,\envstrat}(\penrew{\mstrat_x})) 
 \end{align*}
 where $\xi=\sigma\cup \envstrat$ and the rest is as above.

 Thus:
 \begin{eqnarray*}
  E_{\sgame,t}^{\ctrlstrat',\envstrat}(\penrew{\mstrat_x}) &=&  \frac{\pendef{\penbase}{loc}{\mstrat_x,t} + \sum_{a\in A} \xi'(t)(a) \sum_{t'\in S} \delta(t,a)(t') \cdot E_{\sgame,t'}^{\sigma',\envstrat}(\penrew{\mstrat_x}{\downarrow}t)}{1-\sum_{a\in A} \xi'(t)(a) \sum_{t'\in S} \delta(t,a)(t') \cdot \Prb_{\sgame,t'}^{\sigma',\envstrat}(\future t)}\\
  E_{\sgame,t}^{\ctrlstrat,\envstrat}(\penrew{\mstrat}) &\ge&  \frac{\pendef{\penbase}{loc}{\mstrat,t} - x + \sum_{a\in A} (\xi'(t)(a) - x) \sum_{t'\in S} \delta(t,a)(t') \cdot E_{\sgame,t'}^{\sigma',\envstrat}(\penrew{\mstrat}{\downarrow}t)}{1-\sum_{a\in A} (\xi'(t)(a) - x) \sum_{t'\in S} \delta(t,a)(t') \cdot \Prb_{\sgame,t'}^{\sigma',\envstrat}(\future t)}\\
 \end{eqnarray*}
and so $E_{\sgame,t}^{\ctrlstrat',\envstrat}(\penrew{\mstrat_x}) - E_{\sgame,t}^{\ctrlstrat,\envstrat}(\penrew{\mstrat})$ goes to $0$ as $x$ goes to $0$. Hence,
$\pendef{\penbase}{\dyn}{\mstrat_x} - \pendef{\penbase}{\dyn}{\mstrat}$ goes to $0$ as $x$ goes to $0$.
 
The above gives us that, for any error bound $\varepsilon$ and a fixed state $s$, there is an $x$ such that we can modify the decision of $\mstrat$ in $s$ by $x$, not violate the soundness property and increase the penalty by at most $\varepsilon/|S|$. We thus need to pick $M$ such that $1/M \le x$. To finish the proof, we repeat this procedure for every state $s$.
\end{proof}

%
%
%
For the sake of completeness, we also show that \thmref{2_set_rand_subset}
does \emph{not} extend to dynamic penalties.
%
This is because, in this case, increasing the probability of allowing an action
can lead to an increased penalty if one of the successor states has a high expected penalty.
An example is shown in \figref{fig:chain_condition_cex}, for which we want to
reach the goal state $s_3$ with probability at least $0.5$.

\begin{figure}[!h]
\centering

\begin{tikzpicture}[scale=2]
\tikzstyle{every node}=[font=\small]
\tikzstyle{distr}=[inner sep=0mm, minimum size=1mm, draw, circle, fill]
\tikzstyle{statebox}=[inner sep=1mm, minimum size=5mm, draw]
\tikzstyle{stateboxgoal}=[inner sep=1mm, minimum size=5mm, draw, fill=gray!25]
\tikzstyle{statecircle}=[inner sep=0.3mm, minimum size=3mm, draw, circle]
\tikzstyle{statediamond}=[draw, diamond, inner sep=0.01mm, minimum size=6mm]
\tikzstyle{trarr}=[-latex, rounded corners]
\tikzstyle{lab}=[inner sep=0.7mm]
\node[statediamond] at (0,0) (s0) {$s_0$};
\node[statediamond] at (1,0) (s1) {$s_1$};
\node[statebox] at (0,-0.5) (s2) {$s_2$};
\node[stateboxgoal] at (2,0) (s3) {$s_3$};
\node[statebox] at (1,-0.5) (s4) {$s_4$};

\draw[trarr] (s0) -- (s1) node[distr,pos=0.5]{} node[pos=0.2, above] {$b$};
\draw[trarr] (s0) -- (s2) node[distr,pos=0.5]{} node[pos=0.2, left] {$c$};
\draw[trarr] (s1) -- (s3) node[distr,pos=0.5]{} node[pos=0.2, above] {$d$};
\draw[trarr] (s1) -- (s4) node[distr,pos=0.5]{} node[pos=0.2, left] {$e$};

\draw[trarr]  (s2) .. controls +(0.5,-0.25) and +(0.5,0.25) .. (s2)
 node[distr,pos=0.5]{};

\draw[trarr]  (s3) .. controls +(0.5,-0.25) and +(0.5,0.25) .. (s3)
 node[distr,pos=0.5]{};

\draw[trarr]  (s4) .. controls +(0.5,-0.25) and +(0.5,0.25) .. (s4)
 node[distr,pos=0.5]{};


\node[draw] at (3.25,-0.25) {\scriptsize\begin{tabular}{cc}
action & penalty\\
$b$ & $0$\\
$c$ & $1$\\
$d$ & $0$\\
$e$ & $1$
\end{tabular}};

\end{tikzpicture}

\caption{Counterexample for \thmref{2_set_rand_subset} in case of dynamic penalties.}
\label{fig:chain_condition_cex}

\end{figure}

\noindent
This implies $\mstrat(s_0,\{b\}){\cdot}\mstrat(s_1)(\{d\}){\ge} 0.5$, and
so $\mstrat(s_0)(\{b\}){>}0, \mstrat(s_1)(\{d\}){>}0$. If $\mstrat$ satisfies
the condition of \thmref{2_set_rand_subset}, then $\mstrat(s_0)(\{c\})=\mstrat(s_1)(\{e\})=0$, so an
opponent can always use $b$, forcing an expected penalty of
$\mstrat(s_0)(\{b\})+\mstrat(s_1)(\{d\})$, for a minimal value of $\sqrt{2}$.
However, the sound multi-strategy $\mstrat$ with
$\mstrat(s_0)(\{b\}){=}\mstrat(s_0)(\{c\}){=}0.5$ and $\mstrat(s_1,\{d\}){=}1$ achieves a
dynamic penalty of just $1$.

\subsection{Optimisations}\label{sec:optimisations}

We conclude this section on MILP-based multi-strategy synthesis by presenting some
optimisations that can be applied to our methods.
The general idea is to add additional constraints to the MILP problems
that will reduce the search space to be explored by a solver.
We present two different optimisations, targeting different aspects of our encodings:
(i) possible variable values; and (ii) penalty structures.

\startpara{Bounds on variable values}
%
In our encodings, for the variables $x_s$, we only specified very general lower and upper bounds that would
constrain their value. Narrowing down the set of values that a variable may take
can significantly reduce the search space and thus the solution time required by an MILP solver.
One possibility that works well in practice is to bound the values of the variables
by the minimal and maximal expected reward achievable from the given state, i.e.,
add the constraints:
\begin{align*}
\inf_{\sigma, \pi} E_{\sgame,s}^{\sigma,\pi}(r) & \le x_s \le \sup_{\sigma,\pi} E_{\sgame,s}^{\sigma,\pi}(r) &\text{for all $s\in S$}
\end{align*}
where both the infima and suprema above are constants obtained by applying standard
probabilistic model checking algorithms.

\startpara{Actions with zero penalties}
Our second optimisation exploits the case where an action has zero penalty assigned to it.
Intuitively, this action could always be disabled without harming the overall penalty of the multi-strategy.
On the other hand, enabling an action with zero penalty
might be the only way to satisfy the property and therefore we cannot disable all such actions.
However, it is enough to allow at most one action that has zero penalty.
For simplicity of the presentation, we assume $Z_s = \{ a \in A(s) \, | \, \penbase(s,a) = 0 \}$;
then formally we add the constraints
$\sum_{a\in Z_s} y_{s,a} \le 1$ for all $s\in S_\Diamond$.


\section{Experimental Results}\label{sec:expts}

We have implemented our techniques within PRISM-games~\cite{CFK+13},
an extension of the PRISM model checker~\cite{KNP11} for performing model checking and strategy synthesis on stochastic games.
PRISM-games can thus already be used for (classical) controller synthesis problems on
stochastic games. To this, we add the ability to synthesise multi-strategies
using the MILP-based method described in \sectref{sec:synthesis}.
%
%
Our implementation currently uses CPLEX~\cite{CPLEX} or Gurobi~\cite{GUROBI}
to solve MILP problems.
%
%
We investigated the applicability and performance of our approach
on a variety of case studies, some of which are existing benchmark examples
and some of which were developed for this work.
These are described in detail below and the files used can be found online~\cite{files}.
Our experiments were run on a PC with a 2.8GHz Xeon processor and 32GB of RAM, running Fedora 14.

\subsection{Deterministic Multi-strategy Synthesis}

We first discuss the generation of optimal \emph{deterministic} multi-strategies,
the results of which are presented in~\tabtabref{tab:perm_case_studies}{tab:perm_det_ms}.
\tabref{tab:perm_case_studies} summarises the models and properties considered.
For each model, we give:
the case study name, any parameters used,
the number of states ($|S|$) and of controller states ($|S_\Diamond|$), and the property used.
The final column gives, for comparison purposes,
the time required for performing classical (single) strategy synthesis on each model and property $\phi$.

In~\tabref{tab:perm_det_ms}, we show, for each different model,
the penalty value of the optimal multi-strategy and the time to generate it.
We report several different times, each for different combinations of
the optimisations described in \sectref{sec:optimisations}
(either no optimisations, one or both).
For the last result, we give times for both MILP solvers: CPLEX and Gurobi.

\startpara{Case studies}
Now, we move on to give further details for each case study,
illustrating the variety of ways that permissive controller synthesis can be used.
Subsequently, we will discuss the performance and scalability of our approach.

\begin{table}
\begin{center}	
\begin{tabular}{|c|c|c|c|c|c|}
\hline
\multirow{2}{2.3cm}{\centering Name\\{[parameters]}} &
\multirow{2}{1.8cm}{\centering Parameter\\values} & 				
\multirow{2}{1.2cm}{\centering States} &
\multirow{2}{2.0cm}{\centering Controller\\states}&
\multirow{2}{1.8cm}{\centering Property}&
\multirow{2}{2.4cm}{\centering Strat. synth. \\ time (s)}\\
&&&&&\\
\hline
\hline
\multirow{2}{2.1cm}{\centering \emph{cloud}\\$[vm]$}
 & 5 & 8,841 & 2,177 &$\calP_{\geq0.9999}[\,\future \mathit{deployed}\,]$&0.04\\ 
 & 6 & 34,953 & 8,705 & $\calP_{\geq0.999}[\,\future \mathit{deployed}\,]$&0.10 \\ 
\hline
\multirow{3}{2.1cm}{\centering \emph{android}\\$[r, s]$}
 & 1, 48 & 2,305 & 997 &\multirow{3}{*}{$\nprop{\mathit{time}}{10000}$}&0.16 \\ 
 & 2, 48 & 9,100 & 3,718 &&0.57 \\ 
 & 3, 48 & 23,137 & 9,025 && 0.93\\ 
\hline
\multirow{2}{2.1cm}{\centering \emph{mdsm}\\$[N]$}
 & 3 & 62,245 & 9,173 &$\calP_{\leq0.1}[\,\future \mathit{deviated}\,]$ &\multirow{2}{*}{5.64}\\ 
 & 3 & 62,245 & 9,173 &$\calP_{\leq0.01}[\,\future \mathit{deviated}\,]$ &\\ 
\hline
\multirow{2}{2.1cm}{\centering \emph{investor}\\$[vinit,vmax]$}
 & 5,10 & 10,868 & 3,344 &$\npropl{\mathit{profit}}{4.98}$ &0.87\\ 
 & 10, 15& 21,593 & 6,644 &$\npropl{\mathit{profit}}{8.99}$ &2.20\\ 
\hline
\multirow{2}{2.1cm}{\centering \emph{team-form}\\$[N]$}
 & 3 & 12,476 & 2,023 & \multirow{2}{*}{$\,\calP_{\geq0.9999}[\,\future \mathit{done}_1\,]\,$}&0.20 \\ 
 & 4 & 96,666 & 13,793&&0.83\\ 
\hline
\emph{cdmsn} $[N]$
 & 3 & 1240 & 604 &$\calP_{\geq0.9999}[\,\future \mathit{prefer}_1]$&0.05 \\ 
\hline
\end{tabular}
\end{center}
\caption{Details of the models and properties used for experiments with deterministic multi-strategies, and execution times for \emph{single} strategy synthesis.}
\label{tab:perm_case_studies}
\end{table}

\begin{table}
\begin{center}	
\begin{tabular}{|c|c|c|c|c|c|c|c|c|}
\hline
\multirow{3}{2.3cm}{\centering Name\\{[parameters]}} &
\multirow{3}{1.8cm}{\centering Parameter\\values} & 				
\multirow{3}{1.4cm}{\centering Penalty} &
\multicolumn{5}{c|}{\centering Multi-strategy synthesis time (s)}\\
\cline{4-8}
&&&\multicolumn{4}{c|}{\centering CPLEX} & Gurobi \\
\cline{4-8}
&&&No-opts&Bounds&Zero&Both&Both\\
\hline
\hline
\multirow{2}{2.1cm}{\centering \emph{cloud}\\$[vm]$}
 & 5 & 0.001 & 2.5 & 3.35 & 13.04 & 10.36 & 1.45 \\ 
 & 6 & 0.01 & 62.45 & $^*$ & 63.59 & 25.37 & 4.73 \\ 
\hline
\multirow{3}{2.1cm}{\centering \emph{android}\\$[r, s]$}
 & 1, 48 & 0.0009 & 1.07 & 0.66 & 1.04 & 0.48 & 0.53 \\ 
 & 2, 48 & 0.0011 & 28.56 & 8.41 & 28.48 & 8.42 & 3.6 \\ 
 & 3, 48 & 0.0013 & $^*$ & 13.41 & $^*$ & 13.30 & 47.62 \\ 
\hline
\multirow{2}{2.1cm}{\centering \emph{mdsm}\\$[N]$}
 & 3 & 52 & 28.06 & 36.28 & 27.88 & 33.72 & 19.40 \\ 
 & 3 & 186 & 11.89 & 11.57 & 11.88 & 11.56 & 12.27 \\ 
\hline
\multirow{2}{2.1cm}{\centering \emph{investor}\\$[vinit,vmax]$}
 & 5,10 & 1 & 68.64 & 131.38 & 68.90 & 131.36 & 12.02  \\ 
 & 10, 15 & 1 & $^*$ & $^*$ & $^*$ & $^*$ & 208.95 \\ 
\hline
\multirow{2}{2.1cm}{\centering \emph{team-form}\\$[N]$}
 & 3 & 0.890 & 0.15 & 0.26 & 0.15 & 0.26 & 0.80 \\ 
 & 4 & 0.890 & 249.36 & 249.49 & 186.41 & 184.50 & 3.84 \\ 
\hline
\emph{cdmsn} $[N]$
 & 3 & 2 & 0.57 & 0.62 & 0.62 & 0.61 & 1.65 \\ 
\hline
\end{tabular}
\\[0.4em]
{\scriptsize * No optimal solution to MILP problem within 5 minute time-out.}
\end{center}
\vspace{-0.2cm}
\caption{Experimental results for synthesising optimal deterministic multi-strategies.}
\label{tab:perm_det_ms}
\end{table}

%

\vskip3pt\noindent{\em cloud:}
We adapt a PRISM model from~\cite{CJK12} to synthesise
deployment of services across virtual machines (VMs) in a cloud infrastructure.
Our property $\phi$ specifies that, with high probability, 
services are deployed to a preferred subset of VMs,
and we then assign unit (dynamic) penalties to all actions corresponding to deployment on this subset.
The resulting multi-strategy has very low expected penalty (see \tabref{tab:perm_det_ms})
indicating that the goal $\phi$ can be achieved whilst the controller experiences
reduced flexibility only on executions with low probability.

\vskip3pt\noindent{\em android:}
We apply permissive controller synthesis to a model created for
runtime control of an Android application that provides real-time stock monitoring.
We extend the application to use multiple data sources
and synthesise a multi-strategy which specifies an
efficient runtime selection of data sources
($\phi$ bounds the total expected response time).
We use static penalties, assigning higher values to
actions that select the two most efficient data sources at each time point
and synthesise a multi-strategy that always provides a choice of at least two sources
(in case one becomes unavailable), while preserving $\phi$.

\vskip3pt\noindent{\em mdsm:}
Microgrid demand-side management (MDSM) is a randomised scheme for managing local energy usage.
A stochastic game analysis~\cite{CFK+12} previously showed it is beneficial
for users to selfishly deviate from the protocol,
by ignoring a random back-off mechanism designed to reduce load at busy times.
We synthesise a multi-strategy for a (potentially selfish) user,
with the goal $(\phi$) of bounding the probability of deviation (at either $0.1$ or $0.01$).
The resulting multi-strategy could be used to modify the protocol, restricting the behaviour of this user to reduce selfish behaviour.
To make the multi-strategy as permissive as possible, restrictions are only introduced where necessary to ensure $\phi$.
We also guide where restrictions are made by assigning (static) penalties at certain times of the day.



\vskip3pt\noindent{\em investor:}
This example \cite{MM07} synthesises strategies for a futures market investor,
who chooses when to reserve shares, operating in a
(malicious) market which can periodically ban him/her from investing.
We generate a multi-strategy that achieves $90\%$ of the maximum expected profit
(obtainable by a single strategy) and assign (static) unit penalties to all actions,
showing that, after an immediate share purchase,
the investor can choose his/her actions freely and still meet the $90\%$ target.

\vskip3pt\noindent{\em team-form:}
This example~\cite{CKPS11} synthesises strategies for forming teams of agents
in order to complete a set of collaborative tasks.
Our goal ($\phi$) is to guarantee that a particular task is completed with high probability (0.9999).
We use (dynamic) unit penalties on all actions of the first agent
and synthesise a multi-strategy representing several possibilities for this agent while still achieving the goal.

\vskip3pt\noindent{\em cdmsn:}
Lastly, we apply permissive controller synthesis
to a model of a protocol for collective decision making in sensor networks (CDMSN)~\cite{CFK+12}.
We synthesise strategies for nodes in the network
such that consensus is achieved with high probability (0.9999).
We use (static) penalties inversely proportional to the energy associated with each action a node can perform
to ensure that the multi-strategy favours more efficient solutions.


\startpara{Performance and scalability}
Unsurprisingly, permissive controller synthesis is more costly to execute than
classical controller synthesis -- this is clearly seen by comparing the times in the rightmost column of
\tabref{tab:perm_case_studies} with the times in \tabref{tab:perm_det_ms}.
However, we successfully synthesised deterministic multi-strategies for a wide range of models and properties,
with model sizes ranging up to approximately 100,000 states.
The performance and scalability of our method is affected (as usual) by the state space size.
In particular, it is also affected by the number of actions in controller states,
since these result in integer MILP variables, which are the most expensive part of the solution.

The performance optimisations presented in~\sectref{sec:optimisations} often allowed
us to obtain an optimal multi-strategy quicker. In many cases, 
it proved beneficial to apply both optimisations at the same time.
In the best case ($\mathit{android},r{=}3,s{=}48),$ an order of magnitude gain was observed.
We reported a slowdown after applying optimisation in the case of \emph{investor}.
We attribute this to the fact that adding bounds on variable value can make finding
the initial solution of the MILP problem harder, causing a slowdown of the overall solution process.

Both solvers were run using their off-the-shelf configuration and Gurobi proved to be a more efficient solver.
In the case of CPLEX, we also observed worse numerical stability, causing it to return a sub-optimal
multi-strategy as optimal. In the case of Gurobi, we did not see any such behaviour.

\subsection{Randomised multi-strategy synthesis}
Next, we report the results for approximating optimal \emph{randomised} multi-strategies.
\tabref{tab:perm_case_studies_rand} summarises the models and properties used.
In~\tabref{tab:perm_rand_ms_states}, we report the effects on state space size
caused by adding the approximation gadgets to the games.
We picked three different granularities $M=100, M=200$ and $M=300$; for higher values of $M$ we did
not observe improvements in the penalties of the generated multi-strategies.
Finally, in~\tabref{tab:perm_rand_ms},
we show the penalties obtained by the randomised multi-strategies that were generated.
We compare the (static) penalty value of the randomised multi-strategies to the value
obtained by optimal deterministic multi-strategies for the same models.
The MILP encodings for randomised multi-strategies are larger than deterministic ones and thus slower to solve,
so we impose a time-out of 5 minutes. We used Gurobi as the MILP solver in every case.

\begin{table}
\begin{center}	
\begin{tabular}{|c|c|c|c|c|}
\hline
\multirow{2}{2.6cm}{\centering Name\\{[parameters]}} &
\multirow{2}{2.0cm}{\centering Parameters\\values} &
\multirow{2}{1.4cm}{\centering States} &
\multirow{2}{1.8cm}{\centering Controller\\states}&
\multirow{2}{1.2cm}{\centering Property}\\
&&&&\\
\hline
\hline
\multirow{2}{1.6cm}{\centering \emph{android}\\$[r,s]$}
& 1,1 & 49 & 10 & $\calP_{\geq0.9999}[\,\future\mathit{done}]$\\
& 1,10 & 481 & 112 & $\calP_{\geq0.999}[\,\future\mathit{done}]$\\
\hline
\multirow{2}{1.6cm}{\centering \emph{cloud}\\$[vm]$}
& \multirow{2}{*}{5} & \multirow{2}{*}{8,841} & \multirow{2}{*}{2,177} &\multirow{2}{*}{$\,\calP_{\geq0.9999}[\,\future\mathit{deployed}]\,$}\\
&&&&\\
\hline
\multirow{1}{2.6cm}{\centering \emph{investor}\\$[vinit,vmax]$}
& \multirow{2}{*}{5,10} & \multirow{2}{*}{10,868} & \multirow{2}{*}{3,344} &\multirow{2}{*}{$\npropl{\mathit{profit}}{4.98}$}\\
&&&&\\
\hline
\multirow{2}{2.2cm}{\centering \emph{team-form}\\$[N]$}
& \multirow{2}{*}{3} & \multirow{2}{*}{12,476} & \multirow{2}{*}{2,023} &\multirow{2}{*}{$\calP_{\geq0.9999}[\,\future\mathit{done}_1\,]$}\\
&&&&\\
\hline
\multirow{2}{1.6cm}{\centering \emph{cdmsn} $[N]$}
 & \multirow{2}{*}{3} & \multirow{2}{*}{1240} & \multirow{2}{*}{604} &\multirow{2}{*}{$\calP_{\geq0.9999}[\,\future\mathit{prefer}_1]$}\\
&&&&\\
\hline
\end{tabular}
\\
\end{center}
\caption{Details of models and properties for approximating optimal randomised multi-strategies.}
\label{tab:perm_case_studies_rand}
\end{table}

We are able to generate a sound multi-strategy for all the examples;
in some cases it is optimally permissive, in others it is not (denoted by a $^*$ in \tabref{tab:perm_rand_ms}).
As would be expected, often, larger values of $M$ give smaller penalties.
In some cases, this is not true, which
we attribute to the size of the MILP problem (which grows with $M$).
For all examples, we built
randomised multi-strategies with smaller or equal penalties than the deterministic ones.

\begin{table}
\begin{center}	
\begin{tabular}{|c|c|c|c||c|c|c|}
\hline
\multirow{2}{2.6cm}{\centering Name\\{[parameters]}} &
\multirow{2}{2.0cm}{\centering Parameters\\values} &
\multirow{2}{1.4cm}{\centering States} &
\multirow{2}{1.8cm}{\centering Controller\\states}&
\multicolumn{3}{|c|}{States} \\
\cline{5-7}
&&&&$\gran{=}100$&$\gran{=}200$&$\gran{=}300$ \\
\hline
\hline
\multirow{2}{2.0cm}{\centering \emph{android}  \\$[r,s]$}
& 1,1 & 49 & 10 & 90 & 94 & 98\\
& 1,10 & 481 & 112 & 1629 & 1741 & 1853\\
\hline
\multirow{2}{2.0cm}{\centering \emph{cloud} $[vm]$}
& \multirow{2}{*}{5} & \multirow{2}{*}{8,841} & \multirow{2}{*}{2,177} & \multirow{2}{*}{29447} & \multirow{2}{*}{32686} & \multirow{2}{*}{35233}\\
&&&&&&\\
\hline
\multirow{2}{2.6cm}{\centering \emph{investor}   \,$[vinit,vmax]$}
& \multirow{2}{*}{5,10} & \multirow{2}{*}{10,868} & \multirow{2}{*}{3,344} & \multirow{2}{*}{33440} & \multirow{2}{*}{35948} & \multirow{2}{*}{38456}\\
&&&&&&\\
\hline
\multirow{2}{2.2cm}{\centering \emph{team-form} $[N]$}
& \multirow{2}{*}{3} & \multirow{2}{*}{12,476} & \multirow{2}{*}{2,023} & \multirow{2}{*}{31928} & \multirow{2}{*}{33716} & \multirow{2}{*}{35504}\\
&&&&&&\\
\hline
\multirow{2}{2.0cm}{\centering \emph{cdmsn} $[N]$}
& \multirow{2}{*}{3} & \multirow{2}{*}{1240} & \multirow{2}{*}{604} & \multirow{2}{*}{3625} & \multirow{2}{*}{3890} & \multirow{2}{*}{4155}\\
&&&&&&\\
 \hline
\end{tabular}
\\
\end{center}
\caption{State space growth for approximating optimal randomised multi-strategies.}
\label{tab:perm_rand_ms_states}
\end{table}

\begin{table}
\begin{center}	
\begin{tabular}{|c|c|c|c||c|c|c|c|}
\hline
\multirow{2}{2.2cm}{\centering Name\\{[parameters]}} &
\multirow{2}{1.9cm}{\centering Parameters\\values} &
\multirow{2}{1.4cm}{\centering States} &
\multirow{2}{1.7cm}{\centering Controller\\states}&
\multirow{2}{0.8cm}{\centering Pen. (det.)}&
\multicolumn{3}{|c|}{Penalty (randomised)} \\
\cline{6-8}
&&&&&$\gran{=}100$&$\gran{=}200$&$\gran{=}300$ \\
\hline
\hline
\multirow{2}{2.0cm}{\centering \emph{android} \\$[r,s]$}
& 1,1 & 49 & 10 & 1.01 & 0.91 & 0.905 & 0.903\\
& 1,10 & 481 & 112 & 19.13 & 12.27$^*$ & 9.12$^*$ & 17.18$^*$\\
\hline
\multirow{2}{2.0cm}{\centering \emph{cloud}\\$[vm]$}
& \multirow{2}{*}{5} & \multirow{2}{*}{8,841} & \multirow{2}{*}{2,177} &\multirow{2}{*}{1} & \multirow{2}{*}{0.91$^*$} & \multirow{2}{*}{0.905$^*$} & \multirow{2}{*}{0.91$^*$}\\
&&&&&&&\\
\hline
\multirow{2}{2.2cm}{\centering \emph{investor}\\ $[vinit,vmax]$}
& \multirow{2}{*}{5,10} & \multirow{2}{*}{10,868} & \multirow{2}{*}{3,344} &\multirow{2}{*}{1}& \multirow{2}{*}{1$^*$} & \multirow{2}{*}{1$^*$} & \multirow{2}{*}{1$^*$}\\
&&&&&&&\\
\hline
\multirow{2}{2.0cm}{\centering \emph{team-form}  $[N]$}
& \multirow{2}{*}{3} & \multirow{2}{*}{12,476} & \multirow{2}{*}{2,023} & \multirow{2}{*}{264} & \multirow{2}{*}{263.96$^*$} & \multirow{2}{*}{263.95$^*$} & \multirow{2}{*}{263.95$^*$}\\
&&&&&&&\\
\hline
\multirow{2}{2.0cm}{\centering \emph{cdmsn} $[N]$}
 & \multirow{2}{*}{3}& \multirow{2}{*}{1240}& \multirow{2}{*}{604}&\multirow{2}{*}{2}&\multirow{2}{*}{0.38$^*$}&\multirow{2}{*}{1.9$^*$}&\multirow{2}{*}{0.5$^*$}\\
&&&&&&&\\
\hline
\end{tabular}
\\
[0.3em]
{\scriptsize $^*$ Sound but possibly non-optimal multi-strategy obtained after 5 minute MILP time-out.}
\end{center}
\caption{Experimental results for approximating optimal randomised multi-strategies.}
\label{tab:perm_rand_ms}
\end{table}

\section{Conclusions}

We have presented a framework for permissive controller synthesis on stochastic two-player games,
based on generation of multi-strategies that guarantee a specified objective
and are optimally permissive with respect to a penalty function.
We proved several key properties, developed MILP-based synthesis methods
and evaluated them on a set of case studies.

In this paper, we have imposed several restrictions on permissive controller synthesis.
Firstly, we focused on properties expressed in terms of expected total reward
(which also subsumes the case of probabilistic reachability).
A possible topic for future work would be to consider more expressive temporal logics or parity objectives.
The results might also be generalised so that  both positive and negative rewards can be used,
for example by using the techniques of~\cite{TV87}.
We also restricted our attention to memoryless multi-strategies,
rather than the more general class of history-dependent multi-strategies.
Extending our theory to the latter case and exploring the additional power
brought by history-dependent multi-strategies is another interesting direction of future work.


\bibliographystyle{abbrv}
\bibliography{main}

\begin{thebibliography}{10}

\bibitem{BCD+07}
G.~Behrmann, A.~Cougnard, A.~David, E.~Fleury, K.~Larsen, and D.~Lime.
\newblock {UPPAAL-Tiga}: Time for playing games!
\newblock In {\em Proc. 19th International Conference on Computer Aided
  Verification (CAV'07)}, volume 4590 of {\em LNCS}, pages 121--125. Springer,
  2007.

\bibitem{BJW02}
J.~Bernet, D.~Janin, and I.~Walukiewicz.
\newblock Permissive strategies: from parity games to safety games.
\newblock {\em ITA}, 36(3):261--275, 2002.

\bibitem{BDMR09}
P.~Bouyer, M.~Duflot, N.~Markey, and G.~Renault.
\newblock Measuring permissivity in finite games.
\newblock In {\em Proc. CONCUR'09}, pages 196--210, 2009.

\bibitem{BMOU11}
P.~Bouyer, N.~Markey, J.~Olschewski, and M.~Ummels.
\newblock Measuring permissiveness in parity games: Mean-payoff parity games
  revisited.
\newblock In {\em Proc. ATVA'11}, pages 135--149, 2011.

\bibitem{CGKM12}
R.~Calinescu, C.~Ghezzi, M.~Kwiatkowska, and R.~Mirandola.
\newblock Self-adaptive software needs quantitative verification at runtime.
\newblock {\em Communications of the ACM}, 55(9):69--77, 2012.

\bibitem{CJK12}
R.~Calinescu, K.~Johnson, and S.~Kikuchi.
\newblock Compositional reverification of probabilistic safety properties for
  large-scale complex {IT} systems.
\newblock In {\em Large-Scale Complex IT Systems -- Development, Operation and
  Management}, pages 303--329, 2012.

\bibitem{Canny}
J.~Canny.
\newblock Some algebraic and geometric computations in {PSPACE}.
\newblock In {\em Proc, STOC'88}, pages 460--467, New York, NY, USA, 1988. ACM.

\bibitem{CFK+12}
T.~Chen, V.~Forejt, M.~Kwiatkowska, D.~Parker, and A.~Simaitis.
\newblock Automatic verification of competitive stochastic systems.
\newblock In C.~Flanagan and B.~K{\"{o}}nig, editors, {\em Proc. 18th
  International Conference on Tools and Algorithms for the Construction and
  Analysis of Systems (TACAS'12)}, volume 7214 of {\em LNCS}, pages 315--330.
  Springer, 2012.

\bibitem{CFK+13}
T.~Chen, V.~Forejt, M.~Kwiatkowska, D.~Parker, and A.~Simaitis.
\newblock {PRISM-games}: A model checker for stochastic multi-player games.
\newblock In N.~Piterman and S.~Smolka, editors, {\em Proc. 19th International
  Conference on Tools and Algorithms for the Construction and Analysis of
  Systems (TACAS'13)}, volume 7795 of {\em LNCS}, pages 185--191. Springer,
  2013.

\bibitem{CKPS11}
T.~Chen, M.~Kwiatkowska, D.~Parker, and A.~Simaitis.
\newblock Verifying team formation protocols with probabilistic model checking.
\newblock In {\em Proc. 12th International Workshop on Computational Logic in
  Multi-Agent Systems (CLIMA XII 2011)}, volume 6814 of {\em LNCS}, pages
  190--297. Springer, 2011.

\bibitem{CKSW13}
T.~Chen, M.~Kwiatkowska, A.~Simaitis, and C.~Wiltsche.
\newblock Synthesis for multi-objective stochastic games: An application to
  autonomous urban driving.
\newblock In {\em Proc. 10th International Conference on Quantitative
  Evaluation of SysTems (QEST'13)}, volume 8054 of {\em LNCS}, pages 322--337.
  Springer, 2013.

\bibitem{Con93}
A.~Condon.
\newblock On algorithms for simple stochastic games.
\newblock {\em Advances in computational complexity theory, DIMACS Series in
  Discrete Mathematics and Theoretical Computer Science}, 13:51--73, 1993.

\bibitem{DFK+14}
K.~Draeger, V.~Forejt, M.~Kwiatkowska, D.~Parker, and M.~Ujma.
\newblock Permissive controller synthesis for probabilistic systems.
\newblock In {\em Proc. 20th International Conference on Tools and Algorithms
  for the Construction and Analysis of Systems (TACAS'14)}, volume 8413 of {\em
  LNCS}, pages 531--546. Springer, 2014.

\bibitem{EKVY08}
K.~Etessami, M.~Kwiatkowska, M.~Vardi, and M.~Yannakakis.
\newblock Multi-objective model checking of {Markov} decision processes.
\newblock {\em Logical Methods in Computer Science}, 4(4):1--21, 2008.

\bibitem{FV97}
J.~Filar and K.~Vrieze.
\newblock {\em Competitive {Markov} Decision Processes}.
\newblock Springer, 1997.

\bibitem{FKN+11}
V.~Forejt, M.~Kwiatkowska, G.~Norman, D.~Parker, and H.~Qu.
\newblock Quantitative multi-objective verification for probabilistic systems.
\newblock In P.~Abdulla and K.~Leino, editors, {\em Proc. 17th International
  Conference on Tools and Algorithms for the Construction and Analysis of
  Systems (TACAS'11)}, volume 6605 of {\em LNCS}, pages 112--127. Springer,
  2011.

\bibitem{Garey}
M.~R. Garey, R.~L. Graham, and D.~S. Johnson.
\newblock Some {NP}-complete geometric problems.
\newblock In {\em Proceedings of the Eighth Annual ACM Symposium on Theory of
  Computing}, STOC '76, pages 10--22, New York, NY, USA, 1976. ACM.

\bibitem{HJ94}
H.~Hansson and B.~Jonsson.
\newblock A logic for reasoning about time and reliability.
\newblock {\em Formal Aspects of Computing}, 6(5):512--535, 1994.

\bibitem{KSK76}
J.~Kemeny, J.~Snell, and A.~Knapp.
\newblock {\em Denumerable {M}arkov Chains}.
\newblock Springer-Verlag, 2nd edition, 1976.

\bibitem{KG01}
R.~Kumar and V.~Garg.
\newblock Control of stochastic discrete event systems modeled by probabilistic
  languages.
\newblock {\em IEEE Trans. Automatic Control}, 46(4):593--606, 2001.

\bibitem{KNP11}
M.~Kwiatkowska, G.~Norman, and D.~Parker.
\newblock {PRISM} 4.0: Verification of probabilistic real-time systems.
\newblock In G.~Gopalakrishnan and S.~Qadeer, editors, {\em Proc. 23rd
  International Conference on Computer Aided Verification (CAV'11)}, volume
  6806 of {\em LNCS}, pages 585--591. Springer, 2011.

\bibitem{LWAB10}
M.~Lahijanian, J.~Wasniewski, S.~B. Andersson, and C.~Belta.
\newblock Motion planning and control from temporal logic specifications with
  probabilistic satisfaction guarantees.
\newblock In {\em Proc. 2010 IEEE International Conference on Robotics and
  Automation}, pages 3227--3232, 2010.

\bibitem{MM07}
A.~McIver and C.~Morgan.
\newblock Results on the quantitative mu-calculus {qMu}.
\newblock {\em ACM Transactions on Computational Logic}, 8(1), 2007.

\bibitem{OTMW11}
N.~Ozay, U.~Topcu, R.~Murray, and T.~Wongpiromsarn.
\newblock Distributed synthesis of control protocols for smart camera networks.
\newblock In {\em Proc. IEEE/ACM International Conference on Cyber-Physical
  Systems (ICCPS'11)}, 2011.

\bibitem{Put94}
M.~Puterman.
\newblock {\em Markov Decision Processes: Discrete Stochastic Dynamic
  Programming}.
\newblock John Wiley and Sons, 1994.

\bibitem{Rad12}
F.~Radmacher.
\newblock {\em Games on Dynamic Networks: Routing and Connectivity}.
\newblock PhD thesis, RWTH Aachen University, 2012.

\bibitem{Schrijver}
A.~Schrijver.
\newblock {\em Theory of Linear and Integer Programming}.
\newblock John Wiley\,\&\,Sons, New York, NY, USA, 1998.

\bibitem{Sha10}
N.~Shankar.
\newblock A tool bus for anytime verification.
\newblock In {\em Usable Verification}, 2010.

\bibitem{Ste06}
G.~Steel.
\newblock Formal analysis of {PIN} block attacks.
\newblock {\em Theoretical Computer Science}, 367(1-2):257--270, 2006.

\bibitem{TV87}
F.~Thuijsman and O.~Vrieze.
\newblock The bad match; a total reward stochastic game.
\newblock {\em Operations-Research-Spektrum}, 9(2):93--99, 1987.

\bibitem{WJABK12}
R.~Wimmer, N.~Jansen, E.~{\'A}brah{\'a}m, B.~Becker, and J.-P. Katoen.
\newblock Minimal critical subsystems for discrete-time {Markov} models.
\newblock In C.~Flanagan and B.~König, editors, {\em Proc. TACAS'12}, volume
  7214 of {\em Lecture Notes in Computer Science}, pages 299--314. Springer
  Berlin Heidelberg, 2012.
\newblock Extended version available as technical report SFB/TR 14 AVACS 88.

\bibitem{CPLEX}
\url{http://www.ilog.com/products/cplex/}.

\bibitem{GUROBI}
\url{http://www.gurobi.com/}.

\bibitem{files}
\url{http://www.prismmodelchecker.org/files/tacas14pcs/}.

\end{thebibliography}

\appendix

\newpage
\section{Appendix}

\subsection{Proof of \thmref{hardness} (NP-hardness)}\label{appx:md_hardness}

%
%
We start with the case of randomised multi-strategies and static penalties which is the most delicate. Then we
analyse the case of randomised multi-strategies and dynamic penalties, and finally show that this case can easily be modified for the remaining two combinations.

\startpara{Randomised multi-strategies and static penalties}
We give a reduction from the Knapsack problem.
Let $n$ be the number of items, each of which can
either be or not be put in a knapsack, let $v_i$ and $w_i$ be the value and the weight of item $i$, respectively,
and let $V$ and $W$ be the bounds on the value and weight of the items to be picked. We assume that $v_i\le 1$ for every $1\le i \le n$,
and that all numbers $v_i$ and $w_i$ are given as fractions with denominator $q$.
Let us construct the following MDP, where $m$ is chosen such that
$\smash{2^{-m} < \frac{1}{q}}$ and $\smash{2^{-m}\cdot W \le \frac{1}{q}}$.
\begin{center}
\begin{tikzpicture}[scale=1.2]
\tikzstyle{every node}=[font=\small]
\tikzstyle{distr}=[inner sep=0mm, minimum size=1mm, draw, circle, fill]
\tikzstyle{statebox}=[inner sep=1mm, minimum size=5mm, draw]
\tikzstyle{statecircle}=[inner sep=0.3mm, minimum size=3mm, draw, circle]
\tikzstyle{statediamond}=[draw, diamond, inner sep=0.01mm, minimum size=6mm]
\tikzstyle{trarr}=[-latex, rounded corners]
\tikzstyle{lab}=[inner sep=0.7mm]

\draw[very thick, dotted] (1.5,0.3) -- (1.5,-0.3);
\node[statediamond] at (0,0) (s0) {$s_0$};
\node[distr] at (1,0) (s0d) {};

\node[statediamond] at (2,1) (t1) {$t_1$};
\node[statediamond] at (4,1) (tp1) {$t'_1$};

\node[distr] at (5,1) (tp1d) {};

\node[statediamond] at (6,1.4) (fin1a) {$t''_1$};
\node[statediamond] at (7,1.4) (fin1b) {$\top$};
\node[statediamond] at (6,0.6) (bot1) {$\bot$};

\node[statediamond] at (2,-1) (tn) {$t_n$};
\node[statediamond] at (4,-1) (tpn) {$t'_n$};

\node[distr] at (5,-1) (tpnd) {};

\node[statediamond] at (6,-0.6) (finna) {$t''_1$};
\node[statediamond] at (7,-0.6) (finnb) {$\top$};
\node[statediamond] at (6,-1.4) (botn) {$\bot$};

\draw[trarr] (s0) -- (s0d) -- (t1) node[above,pos=0.5] {$1/n$};
\draw[trarr] (s0d) -- (tn) node[below,pos=0.5] {$1/n$};

\draw[trarr] (t1) -- (tp1) node[below,pos=0.5] {$a_1$};
\draw[trarr] (t1) |- (bot1.south west) node[below,pos=0.75] {$b_1, \underline{w_1}$};
\draw[trarr] (tp1) -- (tp1d) node[above,pos=0.5] {$d_1$} -- (fin1a) node[above,pos=0.75] {$v_1$} -- (fin1b) node[above,pos=0.5] {$\overline{1}$};
\draw[trarr] (tp1d) -- (bot1) node[below,pos=0.25] {$1-v_1$};
\draw[trarr] (tp1) -- +(0,0.4) -| (t1) node[above,pos=0.25] {$c_1, \underline{2^{3m}\cdot w_1}$};

\draw[trarr] (tn) -- (tpn) node[below,pos=0.5] {$a_n$};
\draw[trarr] (tn) |- (botn.south west) node[below,pos=0.75] {$b_n, \underline{w_n}$};
\draw[trarr] (tpn) -- (tpnd) node[above,pos=0.5] {$d_n$} -- (finna) node[above,pos=0.75] {$v_n$} -- (finnb) node[above,pos=0.5] {$\overline{1}$};
\draw[trarr] (tpnd) -- (botn) node[below,pos=0.25] {$1-v_n$};
\draw[trarr] (tpn) -- +(0,0.4) -| (tn) node[above,pos=0.25] {$c_n, \underline{2^{3m}\cdot w_n}$};
\end{tikzpicture}
\end{center}
The rewards and penalties are
as given by the overlined and underlined expressions, and set to $0$ where not present. In particular, note that for any state $s$ different from $\top$
the probability of reaching $\top$ from $s$ is the same as the expected total reward from $s$.

We show that there is a multi-strategy $\mstrat$ sound for the property
$\smash{\property{r}{V/n}}$ 
such that $\pendef{\penbase}{\sta}{\mstrat} \le W + 2^{-m}\cdot W$
if and only if the answer to the Knapsack problem is ``yes''.

In the direction $\Leftarrow$, let $I\subseteq\{1,\ldots,n\}$
be the set of items put in the knapsack. It suffices to define the multi-strategy $\mstrat$ by:
\begin{itemize}
 \item $\mstrat(t'_i)(\{c_i,d_i\})=1-2^{-4m}$, $\mstrat(t'_i)(\{d_i\}) = 2^{-4m}$, $\mstrat(t_i)(\{a_i\}) = 1$ for $i\in I$,
 \item $\mstrat(t'_i)(\{c_i,d_i\})=1$, $\mstrat(t_1)(\{a_i,b_i\}) = 1$ for $i\not\in I$.
\end{itemize}

In the direction $\Rightarrow$, let us have a multi-strategy $\mstrat$ satisfying the assumptions.
Let $P(s\rightarrow s')$ denote the lower bound on the probability of reaching $s'$ from $s$ under a
strategy which complies with the multi-strategy $\mstrat$.
Denote by $I\subseteq \{1,\ldots,n\}$ the indices $i$ such that $P(t_i \rightarrow \top) \ge 2^{-m}$.

Let $\beta_i = \mstrat(t_i)(\{a_i\})$ and $\alpha_i = \mstrat(t'_1)(\{d_i\})$. We will show that for
any $i\in I$ we have $\alpha_i \ge 2^{-m}(1-\beta_i)$. When $\beta_i=1$, this obviously holds. For $\beta_i < 1$,
assume that $\alpha_i \ge 2^{-m}(1-\beta_i)$.
Because the optimal strategy $\sigma$ will pick $b_i$ and $c_i$ whenever they are available, we have:
\begin{align*}
 P(t_i \rightarrow \top) &= \beta_i\cdot \sum_{j=0}^{\infty}((1-\alpha_i)\cdot\beta_i)^j\cdot \alpha_i \cdot v_i
     = \frac{\alpha_i\beta_i v_i}{1- (1-\alpha_i)\beta_i}
     = \frac{\alpha_i\beta_i v_i}{1-\beta_i + \alpha_i\beta_i}\\
     &< \frac{\alpha_i\beta_i}{1-\beta_i + \alpha_i\beta_i}
     < \frac{2^{-m}(1-\beta_i)\beta_i}{1-\beta_i + 2^{-m}(1-\beta_i)\beta_i}
     < \frac{2^{-m}\beta_i}{1+ 2^{-m}\beta_i}
     \le 2^{-m}
\end{align*}
which is a contradiction with $i\in I$.
Hence, $\alpha_i \ge 2^{-m}(1-\beta_i)$ and so:
\[
 \pendef{\penbase}{\sta}{\mstrat, t_i} + \pendef{\penbase}{\sta}{\mstrat, t'_i} = \beta_i w_i + \alpha_i 2^{3m} w_i \ge \beta_i w_i + 2^{-m} (1-\beta_i) 2^{3m} w_i \ge w_i
\]
We have:
\[
 \sum_{i\in I} w_i \le \sum_{i\in I} \big(\pendef{\penbase}{\sta}{\mstrat, t_i} + \pendef{\penbase}{\sta}{\mstrat, t'_i}\big) \le W + 2^{-m}\cdot W
\]
and, because $\sum_{i\in I} w_i$ and $W$ are fractions with denominator $q$, by the choice of $m$, we can infer that $\sum_{i\in I} w_i \le W$.
Similarly:
\[
 \sum_{i\in I}\frac{1}{n} v_i \ge \sum_{i\in I} \frac{1}{n} P(t_i \rightarrow \top)
  \ge \Big(\frac{1}{n}\sum_{i=1}^n P(t_i\rightarrow \top)\Big) - \frac{1}{n}2^{-m}n
  \ge \frac{1}{n}V - 2^{-m}
\]
and again, because $\sum_{i\in I} v_i$ and $V$ are
fractions with denominator $q$, by the choice of $m$ we can infer that $\sum_{i\in I} v_i \ge V$.
Hence, in the instance of the knapsack problem, it suffices to pick exactly items from $I$ to satisfy the restrictions.


\startpara{Randomised multi-strategies with dynamic penalties}
The proof is analogous to the proof above,
we only need to modify the MDP and the computations.
For an instance of the Knapsack problem given as before, we construct the following MDP:
\begin{center}
\begin{tikzpicture}[scale=1.2]
\tikzstyle{every node}=[font=\small]
\tikzstyle{distr}=[inner sep=0mm, minimum size=1mm, draw, circle, fill]
\tikzstyle{statebox}=[inner sep=1mm, minimum size=5mm, draw]
\tikzstyle{statecircle}=[inner sep=0.3mm, minimum size=3mm, draw, circle]
\tikzstyle{statediamond}=[draw, diamond, inner sep=0.01mm, minimum size=6mm]
\tikzstyle{trarr}=[-latex, rounded corners]
\tikzstyle{lab}=[inner sep=0.7mm]

\draw[very thick, dotted] (1.35,0.2) -- (1.35,-0.2);
\node[statediamond] at (0,0) (s0) {$s_0$};
\node[distr] at (1,0) (s0d) {};

\node[statediamond] at (2,1) (t1) {$t_1$};
\node[distr] at (3,1) (t1d) {};

\node[statediamond] at (4,1.4) (fin1a) {};
\node[statediamond] at (5,1.4) (fin1b) {$\top$};
\node[statediamond] at (4,0.6) (bot1) {$\bot$};

\node[statediamond] at (2,-1) (tn) {$t_n$};

\node[distr] at (3,-1) (tnd) {};

\node[statediamond] at (4,-0.6) (finna) {};
\node[statediamond] at (5,-0.6) (finnb) {$\top$};
\node[statediamond] at (4,-1.4) (botn) {$\bot$};

\draw[trarr] (s0) -- (s0d) -- (t1) node[left,pos=0.6] {$1/n$};
\draw[trarr] (s0d) -- (tn) node[left,pos=0.6] {$1/n$};

\draw[trarr] (t1) -- (t1d) node[below,pos=0.5] {$a_1$} -- (fin1a) node[above,pos=0.5] {$v_1$} -- (fin1b) node[above,pos=0.5] {$\overline{1}$};
\draw[trarr] (t1d) -- (bot1) node[below,pos=0.5] {$1-v_1$};
\draw[trarr] (t1) -- ++(-0.3,0.5) -- ++(0.6,0) node[distr,pos=0.5] {} node[above,pos=0.25] {$b_1, \underline{w_1}$} -- (t1);

\draw[trarr] (tn) -- (tnd) node[below,pos=0.5] {$a_n$} -- (finna) node[above,pos=0.5] {$v_n$} -- (finnb) node[above,pos=0.5] {$\overline{1}$};
\draw[trarr] (tnd) -- (botn) node[below,pos=0.5] {$1-v_n$};
\draw[trarr] (tn) -- ++(-0.3,0.5) -- ++(0.6,0) node[distr,pos=0.5] {} node[above,pos=0.25] {$b_n, \underline{w_n}$} -- (tn);

\end{tikzpicture}
\end{center}
We claim that there is a multi-strategy $\mstrat$ sound for the property
$\smash{\property{r}{V/n}}$ 
such that $\pendef{\penbase}{\dyn}{\mstrat} \le \frac{1}{n}W$
if and only if the answer to the Knapsack problem is ``yes''.

In the direction $\Leftarrow$, for $I\subseteq\{1,\ldots,n\}$
the set of items in the knapsack, we define $\mstrat$ by $\mstrat(t_i)(\{a_i\})=1$ for $i\in I$ and by allowing
all actions in every other state.

In the direction $\Rightarrow$,
let us have a multi-strategy $\mstrat$ satisfying the assumptions.
Let $P(s\rightarrow s')$ denote the lower bound on the probability of reaching $s'$ from $s$ under a
strategy which complies with the multi-strategy $\mstrat$.
Denote by $I\subseteq \{1,\ldots,n\}$ the indices $i$ such that $\mstrat(t_i)(\{a_i\})>0$.
Observe that $P(t_i \rightarrow \top) = v_i$ if $i\in I$ and $P(t_i \rightarrow \top) = 0$ otherwise.
Hence:
\[
 \sum_{i\in I}\frac{1}{n} v_i = \sum_{i\in I} \frac{1}{n} P(t_i \rightarrow \top)
  = \frac{1}{n}\sum_{i=1}^n P(t_i\rightarrow \top)
  \ge \frac{1}{n}V
\]
and for the penalty, denoting $x_i:= \mstrat(t_i)(\{a_i\})$, we get:
\begin{equation}\label{hardness_eqn_pen}
  \frac{1}{n}W \ge \pendef{\penbase}{\dyn}{\mstrat}
  =\frac{1}{n} \sum_{i=0}^n \sum_{j=0}^\infty (1-x_i)^j x_i w_i
  =\frac{1}{n} \sum_{i\in I} \sum_{j=0}^\infty (1-x_i)^j x_i w_i
  =\frac{1}{n} \sum_{i\in I} w_i
\end{equation}
because the strategy that maximises the penalty will pick $b_i$ whenever it is available.
Hence, in the instance of the knapsack problem, it suffices to pick exactly items from $I$ to satisfy the restrictions.

\startpara{Deterministic multi-strategies and dynamic penalties}
The proof is identical to the proof for randomised multi-strategies and dynamic penalties above: observe that the multi-strategy
constructed there from an instance of Knapsack is in fact deterministic.

\startpara{Deterministic multi-strategies and static penalties} 
The proof is obtained by a small modification of the proof for randomised multi-strategies and dynamic penalties above.
Instead of requiring $\pendef{\penbase}{\dyn}{\mstrat} \le \frac{1}{n}W$, we require $\pendef{\penbase}{\sta}{\mstrat} \le W$
and (\ref{hardness_eqn_pen}) changes to:
\begin{equation*}
  W \ge \pendef{\penbase}{\sta}{\mstrat}
  = \sum_{i=0}^n x_i w_i
  = \sum_{i\in I} w_i \, .
\end{equation*}

\subsection{Proof of \thmref{upper-bound} (Upper Bounds)}\label{appx:upper-bound}
%
%
%
We consider the two cases of deterministic and randomised multi-strategies separately,
showing that they are in NP and PSPACE, respectively.
To simplify readability, the proofs make use of constructions
that appear later in the main paper than \thmref{upper-bound}.
Note that those constructions do not build on the theorem and
so there is no cyclic dependency.

\startpara{Deterministic multi-strategies}
%
%
For deterministic multi-strategies, it suffices to observe that the problem can be solved by
verifying an MILP instance constructed in polynomial time
(and, additionally, in the case of dynamic penalties: a polynomial-time identification of the infinite-penalty case --
see \thmref{thm:main-static} and \thmref{thm:main-dynamic}). Since the problem of solving an MILP instance
is in NP, the result follows.

\startpara{Randomised multi-strategies}
We now show that the permissive controller synthesis problem is in
PSPACE for randomised multi-strategies and static penalties.
The proof for dynamic penalties is similar.

The proof proceeds by constructing a polynomial-size closed formula $\Psi$ of
the existential fragment of $(\Rset, +, \cdot, \leq)$
such that $\Psi$ is true if and only if there is a multi-strategy ensuring the required penalty and reward.
Because determining the validity of a closed formula of the existential fragment of
$(\Rset, +, \cdot, \leq)$ is in PSPACE~\cite{Canny}, we obtain the desired result.

We do not construct the formula $\Psi$ explicitly, but only sketch the main idea.
Recall that in \sectref{sec:approximation} we presented a reduction that allows us to
\emph{approximate} the existence of a multi-strategy using the construction described
on page \pageref{sec:approximation} and in \figref{fig:approximation}. Note that if we
\emph{knew} the probabilities with which the required multi-strategy $\mstrat$ chooses
some sets in a state $s$, we could use these probabilities instead of the numbers $p_1,\ldots,p_m$
in \figref{fig:approximation}. In fact, by \thmref{2_set_rand_chain_condition} we
would only need $n=2$, i.e. two numbers per state. Now knowing these numbers $p_1^s, p_2^s$ for each state,
we can construct, in polynomial time a polynomial-size instance (disregarding the size of representation of the numbers
$p_1^s,p_2^s$) of an MILP problem such that the optimal solution for the problem
is the optimal reward/penalty under the multi-strategy $\mstrat$. Of course, we do \emph{not} know the numbers $p_1^s,p_2^s$
a priori, and so we cannot use MILP for the solution. Instead, we treat those numbers as variables in
the formula $\Psi$ which is build from the constraints in the MILP problem, by adding existential quantification for $p_1^s,p_2^s$ for all $s\in S$
and all other variables from the MILP problem,
requiring $p_1^s,p_2^s$ to be from $(0,1)$ and $p_1^s = 1-p_2^s$, and adding the restriction on the total reward.
Note that in the case of dynamic penalties we need to treat optimal infinite penalty separately, similarly to the
construction on page~\pageref{page:inf-penalty}.

\subsection{Proof of \thmref{square-root-sum} (Square-root-sum Reduction)}
\label{appx:reduction}
%
%
Let $x_1$,\dots,$x_n$ and $y$ be numbers giving the instance of the square-root-sum problem, i.e. we aim to determine
whether $\sum_{i = 1}^n \sqrt{x_i} \le y$. We construct the game from \figref{fig:sqrt}.

\begin{figure}
\begin{tikzpicture}[scale=1.2]
\tikzstyle{every node}=[font=\small]
\tikzstyle{distr}=[inner sep=0mm, minimum size=1mm, draw, circle, fill]
\tikzstyle{statebox}=[inner sep=1mm, minimum size=5mm, draw]
\tikzstyle{statecircle}=[inner sep=0.3mm, minimum size=3mm, draw, circle]
\tikzstyle{statediamond}=[draw, diamond, inner sep=0.01mm, minimum size=6mm]
\tikzstyle{trarr}=[semithick, -latex, rounded corners]
\tikzstyle{lab}=[inner sep=0.7mm]

\draw[very thick, dotted] (1.9,-0.2) -- (1.9,-0.8);
\draw[very thick, dotted] (6,-0.4) -- (6,-1);
\node[statebox] at (0.6,-0.5) (s0) {$s_0$};
\node[distr] at (1.3,-0.5) (s0d) {};

\node[statebox] at (2,1) (ta1) {$t_1$};
\node[distr] at (3,1.5) (ta1d1) {};
\node[distr] at (3,0.2) (ta1d2) {};
\node[statediamond] at (4,0.2) (tb1) {$t'_1$};
\node[distr] at (5,0.4) (tb1d1) {};
\node[distr] at (5,0) (tb1d2) {};
\node[statediamond] at (6,0.6) (tc1) {$\bar t_1$};
\node[distr] at (7,0.8) (tc1d1) {};
\node[distr] at (7,0.6) (tc1d2) {};
\node[statebox] at (8,1) (td1) {};
\node[distr] at (9.5,1) (td1d) {};
\node[statebox] at (10,1) (te1) {$\bot$};

\node[statebox] at (2,-2) (tan) {$t_n$};
\node[distr] at (3,-1.5) (tand1) {};
\node[distr] at (3,-2.8) (tand2) {};
\node[statediamond] at (4,-2.8) (tbn) {$t'_n$};
\node[distr] at (5,-2.6) (tbnd1) {};
\node[distr] at (5,-3) (tbnd2) {};
\node[statediamond] at (6,-2.4) (tcn) {$\bar t_n$};
\node[distr] at (7,-2.2) (tcnd1) {};
\node[distr] at (7,-2.4) (tcnd2) {};
\node[statebox] at (8,-2) (tdn) {};
\node[distr] at (9.5,-2) (tdnd) {};
\node[statebox] at (10,-2) (ten) {$\bot$};

\draw[->] (s0) -- (s0d) -- (ta1) node[left,pos=0.5] {$1/n$};
\draw[->] (ta1) -- (ta1d1) -| (te1) node[above,pos=0.15] {$1-x_1$};
\draw[->] (ta1d1) -- (td1) node[below,pos=0.3] {$x_1$};
\draw[->] (ta1) -- (ta1d2) -- (tb1);
\draw[->] (tb1) -- (tb1d1) node[above,pos=0.5] {$c'_1$}  -- (tc1);
\draw[->] (tb1.south east) -- (tb1d2) node[below,pos=0.5] {$a'_1,\underline{1}$} -| (te1.south east);
\draw[->] (tc1) -- (tc1d1) node[above,pos=0.5] {$\bar c_1$} -- (td1);
\draw[->] (tc1) -- (tc1d2) node[below,pos=0.5] {$\bar a_1,\underline{1}$} -| (te1.south);
\draw[->] (td1) -- (td1d) node[above,pos=0.5] {$b_1,1/x_1$} -- (te1);

\draw[->] (s0d) -- (tan) node[left,pos=0.5] {$1/n$};
\draw[->] (tan) -- (tand1) -| (ten) node[above,pos=0.15] {$1-x_n$};
\draw[->] (tand1) -- (tdn) node[below,pos=0.3] {$x_n$};
\draw[->] (tan) -- (tand2) -- (tbn);
\draw[->] (tbn) -- (tbnd1) node[above,pos=0.5] {$c'_1$}  -- (tcn);
\draw[->] (tbn.south east) -- (tbnd2) node[below,pos=0.5] {$a'_n,\underline{1}$} -| (ten.south east);
\draw[->] (tcn) -- (tcnd1) node[above,pos=0.5] {$\bar c_n$}  -- (tdn);
\draw[->] (tcn) -- (tcnd2) node[below,pos=0.5] {$\bar a_n,\underline{1}$} -| (ten.south);
\draw[->] (tdn) -- (tdnd) node[above,pos=0.5] {$b_n,1/x_n$} -- (ten);

\end{tikzpicture}
\caption{The game for the proof of \thmref{square-root-sum}.}
\label{fig:sqrt}
\end{figure}

The penalties are as given by the underlined numbers, and the rewards $1/x_i$ are awarded under the actions 
$b_i$.

\startpara{Static penalties}
We first give the proof for static penalties.
We claim that there is a multi-strategy $\mstrat$ sound for the property $\property{r}{1}$ such that
$\pendef{\penbase}{\sta}{\mstrat} \le 2\cdot y$ if and only if $\sum_{i = 1}^n \sqrt{x_i} \le y$.

In the direction $\Leftarrow$ let us define a multi-strategy
$\theta$ by
$\theta(t'_i)(\{c'_i\})=\theta(\bar t_i)(\{\bar c_i\})=\sqrt{x_i}$
and
$\theta(t'_i)(\{a'_i,c'_i\})=\theta(\bar t_i)(\{\bar a_i, \bar c_i\})=1-\sqrt{x_i}$, and allowing all actions in all remaining states.
We then have:
$\pendef{\penbase}{\sta}{\mstrat}=\sum_{i=1}^n 2\cdot \sqrt{x_i}$
and the reward achieved is: 
\[
 \frac{1}{n}\sum_{i=1}^n \min\{x_i\cdot \frac{1}{x_i}, \sqrt{x_i}\cdot\sqrt{x_i}\frac{1}{x_i}\}\quad = \quad 1.
\]

In the direction $\Rightarrow$, let $\theta$ be an arbitrary multi-strategy sound for the property
$\property{r}{1}$ satisfying $\pendef{\penbase}{\sta}{\mstrat} \le 2\cdot y$ .
Let $z'_i = \theta(t'_i)(\{c'_i\})$ and
$\bar z_i = \theta(\bar t_i)(\{\bar c_i\})$.
The reward achieved is:
\[
 \frac{1}{n} \sum_{i=1}^n \min\{x_i\cdot \frac{1}{x_i}, z'_i\cdot\bar z_i \frac{1}{x_i}\}
 = \frac{1}{n} \sum_{i=1}^n \min\{1, z'_i\cdot\bar z_i \frac{1}{x_i}\}
\]
which is greater or equal to $1$ if and only if $z'_i\cdot\bar z_i \ge x_i$ for every $i$.
We show that $z'_i + \bar z_i \ge 2\cdot \sqrt{x_i}$, by analysing the possible cases: If both $z'_i$ and $\bar z_i$ are greater than
$\sqrt{x_i}$, we are done. The case $z'_i, \bar z_i < \sqrt{x_i}$ cannot take place. As for the remaining case, w.l.o.g., suppose that
$z'_i = \sqrt{x_i} + p$ and $\bar z_i = \sqrt{x_i} - q$ for some non-negative $p$ and $q$. Then  
$x_i \le (\sqrt{x_i} + p) \cdot (\sqrt{x_i} - q) = x_i + (p-q) \sqrt{x_i} - pq$, and for this to be at least $x_i$ we necessarily have
$p\ge q$, and so $z'_i + \bar z_i = \sqrt{x_i} + p + \sqrt{x_i} - q \ge 2\cdot \sqrt{x_i}$.

Hence, we get $
 \sum_{i=1}^n 2\cdot \sqrt{x_i} \le \sum_{i=1}^n \big( z'_i + \bar z_i \big)  = \pendef{\penbase}{\sta}{\mstrat} \le 2\cdot y$.

\startpara{Dynamic penalties}
We now proceed with dynamic penalties, where the analysis is similar. Let us use the same game as before, but
in addition assume that the penalty assigned to actions $c'_i$ and $\bar c'_i$ is equal to $1$.
We claim that there is a multi-strategy $\mstrat$ sound for the property $\property{r}{1}$ such that
$\pendef{\penbase}{\dyn}{\mstrat} \le 2\cdot y/n$ if and only if $\sum_{i = 1}^n \sqrt{x_i} \le y$.

In the direction $\Leftarrow$ let us define a multi-strategy $\theta$ as before, and obtain
$\pendef{\penbase}{\dyn}{\mstrat}=\frac{1}{n}\sum_{i=1}^n 2\cdot \sqrt{y_i}$.

In the direction $\Rightarrow$, let $\theta$ be an arbitrary multi-strategy sound for the property
$\property{r}{1}$ satisfying $\pendef{\penbase}{\dyn}{\mstrat} \le 2\cdot y/n$ .
Let $z'_i = \theta(t'_i)(\{c'_i\})$,
$\bar z_i = \theta(\bar t_i)(\{\bar c_i\})$,
$u'_i = \theta(t'_i)(\{a'_i\})$, and
$\bar u_i = \theta(\bar t_i)(\{\bar a_i\})$.

Exactly as before we show that $z'_i + \bar z_i \ge 2\cdot \sqrt{x_i}$,
and so:
\begin{multline*}
 \frac{1}{n}\sum_{i=1}^n 2\cdot \sqrt{x_i} \le \frac{1}{n}\sum_{i=1}^n \big(z'_i + \bar z_i \big) \le
 \frac{1}{n}\sum_{i=1}^n \big((z'_i + u'_i) + (1- u'_i)\cdot (\bar z_i + \bar u_i)\big)\\ 
 = \pendef{\penbase}{\dyn}{\mstrat} \le 2\cdot y /n.
\end{multline*}

\end{document}